\newcommand{\blind}{0}
\def\withLineNumbering{0}
\def\withFloats{1}
\newcites{supplement}{References for Supplement}
\newtheorem{theorem}{Theorem}[section]
\newtheorem{lemma}[theorem]{Lemma}
\newtheorem{proposition}[theorem]{Proposition}
\newtheorem{rule-of-thumb}[theorem]{Rule of Thumb}
\newtheorem{principle-behind}[theorem]{Principle Behind Prior-preconditioning}
\newcommand{\argmin}{\text{argmin}}
\newcommand{\transpose}{\text{\raisebox{.5ex}{$\intercal$}}}
\newcommand{\eigen}{\nu}
\newcommand{\ritz}{\widehat{\eigen}}
\newcommand{\polyspace}{\mathcal{P}}
\newcommand{\nunshrunk}{q}
\newcommand{\nlargest}{r}
\newcommand{\nsmallest}{s}
\newcommand{\nRitzIter}{k}
\newcommand{\nSubseqIter}{\ell}
\newcommand{\secmom}{\xi} 
\newcommand{\y}{\bm{y}}
\newcommand{\x}{\bm{x}}
\newcommand{\X}{\bm{X}}
\newcommand{\tilX}{\bm{\tilde{X}}}
\newcommand{\I}{\bm{I}}
\newcommand{\bv}{\bm{v}}
\newcommand{\br}{\bm{r}}
\newcommand{\bw}{\bm{w}}
\newcommand{\bb}{\bm{b}}
\newcommand{\bbTilde}{\tilde{\bb}}
\newcommand{\M}{\bm{M}}
\newcommand{\bA}{\bm{A}}
\newcommand{\bB}{\bm{B}}
\newcommand{\bF}{\bm{F}}
\newcommand{\localPrior}{\pi_{\rm loc}}
\newcommand{\globalPrior}{\pi_{\rm glo}}
\newcommand{\lshrink}{\lambda}
\newcommand{\Lshrink}{\Lambda}
\newcommand{\bbeta}{\bm{\beta}}
\newcommand{\balpha}{\bm{\alpha}}
\newcommand{\bgamma}{\bm{\gamma}}
\newcommand{\btilbeta}{\bm{\tilde{\beta}}}
\newcommand{\bomega}{\bm{\omega}}
\newcommand{\blshrink}{\bm{\lshrink}}
\newcommand{\bOmega}{\bm{\Omega}}
\newcommand{\bLshrink}{\bm{\Lshrink}}
\newcommand{\bPhi}{\bm{\Phi}}
\newcommand{\btilPhi}{\bm{\tilde{\Phi}}}
\newcommand{\normal}{\mathcal{N}}
\newcommand{\given}{\, | \,}
\newcommand{\blas}[1][]{\textsc{#1{b}}\textsc{las}}
\newcommand{\openblas}[1][]{\textsc{#1{o}}{\small pen}\blas}
\newcommand{\spgemm}[1][]{\textsc{#1{s}}{\small p}\textsc{gemm}}
\newcommand{\mkl}[1][]{\textsc{#1{m}}\textsc{kl}}
\newcommand{\cpu}{\textsc{cpu}}
\newcommand{\gb}{\textsc{gb}}
\newcommand{\nSignal}{s}
\newcommand{\counter}{N}
\newcommand{\nNonzero}{\counter_{\textrm{nonzero}}}
\newcommand{\nCg}{\counter_{\textrm{cg}}}
\newcommand{\nMatvec}{\counter_{\textrm{matvec}}}
\newcommand{\nOverlap}{\counter_{\textrm{overlap}}}
\newcommand{\nMatmat}{\counter_{\textrm{matmat}}}
\newcommand{\nCholesky}{\counter_{\textrm{cholesky}}}
\newcommand{\Xsynthetic}{\X_{\textrm{syn}}}
\newcommand{\ind}{\mathds{1}}
\renewcommand{\theequation}{\arabic{section}.\arabic{equation}}
\renewcommand{\thefigure}{\arabic{section}.\arabic{figure}}
	\let\oldequation\equation
	\let\oldendequation\endequation
	\renewenvironment{equation}
	 {\begin{linenomath*}\oldequation}
	 {\oldendequation\end{linenomath*}\ignorespacesafterend}
	\renewenvironment{equation*}{%
	\begin{linenomath*}\
	\mathdisplay@push
	\st@rredtrue \global\@eqnswfalse
	\mathdisplay{equation*}%
	}{%
	\endmathdisplay{equation*}%
	\mathdisplay@pop
	\end{linenomath*}
	\ignorespacesafterend
	}
	\patchcmd{\@startsection}{\@ifstar}{\nolinenumbers\@ifstar}{}{}
	\patchcmd{\@xsect}{\ignorespaces}{\linenumbers\ignorespaces}{}{}
	\newcommand{\nolinenumbers}{}
	\newcommand{\linenumbers}{}
\begin{document}

\def\spacingset#1{\renewcommand{\baselinestretch}%
{#1}\small\normalsize} \spacingset{1}


\newcommand{\titleString}{Prior-preconditioned conjugate gradient method for accelerated Gibbs sampling in `large $n$ \& large $p$' Bayesian sparse regression}

\if0\blind
{
  \title{\bf \titleString}
  \author{
    Akihiko Nishimura \\
    Department of Biostatistics, Johns Hopkins University \\
    and \\
    Marc A.\ Suchard \\
    Department of Biomathematics, Biostatistics, and Human Genetics, \\
    University of California - Los Angeles}
    \date{}
  \maketitle
  \vspace*{-1\baselineskip} 
} \fi

\if1\blind
{
  \bigskip
  \bigskip
  \bigskip
  \begin{center}
  \spacingset{1.6} 
    {\LARGE\bf \titleString}
  \end{center}
  \medskip
} \fi

\begin{abstract}
	In a modern observational study based on healthcare databases, the number of observations and of predictors typically range in the order of $10^5 \sim 10^6$ and of $10^4 \sim 10^5$.
	Despite the large sample size, data rarely provide sufficient information to reliably estimate such a large number of parameters.
	Sparse regression techniques provide potential solutions, one notable approach being the Bayesian method based on shrinkage priors.
	In the ``large $n$ \& large $p$'' setting, however, the required posterior computation encounters a bottleneck at repeated sampling from a high-dimensional Gaussian distribution, whose precision matrix $\bPhi$ is expensive to compute and factorize.
	In this article, we present a novel algorithm to speed up this bottleneck based on the following observation: we can cheaply generate a random vector $\bm{b}$ such that the solution to the linear system $\bPhi \bbeta = \bm{b}$ has the desired Gaussian distribution.
	We can then solve the linear system by the conjugate gradient (CG) algorithm through matrix-vector multiplications by $\bPhi$;
	this involves no explicit factorization or calculation of $\bPhi$ itself.
	Rapid convergence of CG in this context is guaranteed by the theory of \textit{prior-preconditioning} we develop.
	We apply our algorithm to a clinically relevant large-scale observational study with $n = 72{,}489$ patients and $p = 22{,}175$ clinical covariates, designed to assess the relative risk of adverse events from two alternative blood anti-coagulants. 
	Our algorithm demonstrates an order of magnitude speed-up in posterior inference, in our case cutting the computation time from two weeks to less than a day.
\end{abstract}

\noindent%
{\it Keywords:} Big Data, Conjugate gradient, Markov chain Monte Carlo, numerical linear algebra, sparse matrix, variable selection
\vfill

\linenumbers 

\newpage
\spacingset{1.5} 

\section{Introduction}
\label{sec:introduction}
	Given an outcome of interest $y_i$ and a large number of features $x_{i1}, \ldots, x_{ip}$ for $i = 1, \ldots, n$, the goal of sparse regression  is to find a small subset of these features that captures the principal relationship between the outcome and features.  Such a sparsity assumption is mathematical necessity when $p$ exceeds the sample size $n$. Even when $n > p$, however, the assumption often remains critical in improving the interpretability and stable estimation of regression coefficients $\bbeta$.
	This is especially true under the following conditions, either of which reduces the amount of information the data provides on the regression coefficients: 1) the design matrix $\X$ is sparse i.e.\ only a small fraction of the design matrix contains non-zero entries due to infrequent binary features, and/or 2) the binary outcome $\y$ is rare i.e.\ $y_i = 0$ for most of $i$'s.
	Sparse design matrices are extremely common in modern observational studies based on healthcare databases; while a large number of potential pre-existing conditions and available treatments exist, only a small subset of these applies to each patient \citep{schuemie2018empirical-calibration}.
	Rare binary outcomes are also common as many diseases of interest have low incidence rates among the population.

	A particular application considered in this manuscript is a comparative study of two blood anti-coagulants \textit{dabigatran} and \textit{warfarin}, using observational data from Truven Health MarketScan Medicare Supplemental and Coordination of Benefits Database. The anti-coagulants help prevent blood clot formation among patients with atrial fibrilation but come with risks of serious side effects. The goal of the study is to quantify which of the two drugs has a lower risk of gastrointestinal bleeding. The data set consists of $n = 72{,}489$ patients and $p = 22{,}175$ clinical covariates of potential relevance.

	To induce sparsity in the estimate of regression coefficient $\bbeta$, an increasingly common approach is the Bayesian method based on \textit{shrinkage priors}.
	This class of prior is often represented as a scale-mixture of Gaussians:
	\begin{equation*}
	\beta_j \given \lshrink_j, \tau
	\sim \normal(0, \tau^2 \lshrink_j^2), \
	\lshrink_j \sim \localPrior(\cdot), \
	\tau \sim \globalPrior(\cdot),
	\end{equation*}
	where $\tau$ and $\lshrink_j$ are unknown \textit{global} and \textit{local scale} parameters with priors $\localPrior(\cdot)$ and $\globalPrior(\cdot)$  \citep{carvalho2010horseshoe,polson2014bayes_bridge,bhattacharya2015dirichlet_laplace,bhadra2017lasso-meets-horseshoe}.
	Compared to more traditional ``spike-and-slab'' discrete-mixture priors, continuous shrinkage priors are typically more computationally efficient while maintaining highly desirable statistical properties \citep{bhattacharya2015dirichlet_laplace, pal2014ergodicity_of_shrinkage, datta2013horseshoe_property}. Despite the relative computational advantage, however, posterior inference under these priors still faces a serious scalability issue.
	In the blood anti-coagulant safety study, for instance, it takes over 200 hours on a modern high-end commodity desktop to run 10,000 iterations of the current state-of-the-art Gibbs sampler, even with optimized implementation (Section~\ref{sec:application}).

	We focus on sparse logistic regression in this article, but our Gibbs sampler acceleration technique applies whenever the likelihood function can be expressed as a Gaussian mixture.
	The data augmentation scheme of \cite{polson2013polya_gamma} makes a posterior under the logistic model amenable to Gibbs sampling as follows.
	Conditioning on a Polya-Gamma auxiliary parameter $\bomega$, the likelihood of a binary outcome $\y$ becomes
	\begin{equation}
	\label{eq:pg_augmented_likelihood}
	\tilde{y}_i \given \X, \bbeta, \bomega
		\sim \normal(\x_i^\transpose \bbeta, \omega_i^{-1})
		\ \text{ for } \
		\tilde{y}_i := \omega_i^{-1} \left(y_i - 1/2 \right).
	\end{equation}
	Correspondingly, the full conditional distribution of $\bbeta$ is given by
	\begin{equation}
	\label{eq:beta_conditional}
	\bbeta \given \bomega, \blshrink, \tau, \y, \X
		\sim \normal( \bPhi^{-1}\X^\transpose \bOmega \tilde{\y}, \bPhi^{-1})
		\ \text{ for } \
		\bPhi = \X^\transpose \bOmega \X + \tau^{-2} \bLshrink^{-2},
	\end{equation}
	where $\bOmega = \text{diag}(\bomega)$, a diagonal matrix with entries $\Omega_{ii} = \omega_i$, and $\bLshrink = \text{diag}(\blshrink)$.
	(See Supplement~\ref{sec:gibbs_sampler_details} for a complete description of the conditional updates within the Gibbs sampler.)

	The main computational bottleneck of the Gibbs sampler is the need to repeatedly sample from high-dimensional Gaussians of the form \eqref{eq:beta_conditional}. The standard algorithm requires $O(n p^2 + p^3)$ operations: $O(n p^2)$ for computing the term $\X^\transpose \bOmega \X$ and $O(p^3)$ for Cholesky factorization of $\bPhi$. These operations remain significant burden even with sparsity in $\X$ because computing times of sparse linear algebra operations are dominated not by the number of arithmetic operations but by latency in irregular data access \citep{dongarra2016hpc_connjugate_gradient, duff2017direct-sparse}.

	The ``large $n$ \& large $p$'' logistic regression problem considered in this article remains unsolved despite the recent computational advances.
	For $n \ll p$ cases, \cite{bhattacharya15fast_sampling} propose an algorithm to sample from \eqref{eq:beta_conditional} with only $O(n^2 p + n^3)$ operations.
	\cite{johndrow2018scalable_mcmc} reduce the $O(n^2 p)$ cost by replacing the matrix $\X \bLshrink^2 \X^\transpose$ with an approximation that can be computed with $O(n^2 k)$ operations for $k < p$. These techniques offer no reduction in computational cost for $n > p$ cases, however.
	\cite{hahn2018elliptical_slice_for_linear_model} propose a sampling approach for linear regression based on an extensive pre-processing of the matrix $\X^\transpose \X$ --- a trick limited in scope strictly to the Gaussian likelihood model.

	Proposed in this article is a novel algorithm to rapidly sample from a high-dimensional Gaussian distribution of the form \eqref{eq:beta_conditional} through the conjugate gradient (CG) method, using only a small number of matrix-vector multiplications $\bv \to \bPhi \bv$.
	Our algorithm requires no explicit formation of the matrix $\bPhi$ because we can compute $\bPhi \bv$ via operations $\bv \to \X \bv$ and $\bw \to \X^\transpose \bw$, along with element-wise vector multiplications.
	This is an important feature not only for computational efficiency but also for memory efficiency when dealing with a large and sparse design matrix $\X$.
	The matrix $\X^\transpose \bOmega \X$ and hence $\bPhi$ typically contain a much larger proportion of non-zero entries than $\X$, making it far more memory intensive to handle $\bPhi$ directly.
	For example, when $p = 10^5$, it would require 74.5 \gb{} of memory to store a $p \times p$ dense matrix $\bPhi$ in double-precision numbers.
	On the other hand, our algorithm can exploit a sparsity structure in $\X$ for both computational and memory efficiency.


	Practical utility of CG depends critically on effective \textit{preconditioning}, whose purpose is to speed up the algorithm by relating the given linear system to a modified one.
	Finding an effective preconditioner is a highly problem-specific task and is often viewed as ``a combination of art and science'' \citep{saad2003iterative-methods}.
	Exploiting fundamental features of sparse regression posteriors, we develop the \textit{prior-preconditioning} strategy tailored towards the linear systems in our specific context.
	We study its theoretical properties and demonstrate its superiority over general-purpose preconditioners in Bayesian sparse regression applications.

	 The rest of the paper is organized as follows. Section~\ref{sec:cg_sampler} begins by describing how to recast the problem of sampling from the distribution \eqref{eq:beta_conditional} as that of solving a linear system $\bPhi \bbeta = \bm{b}$.
	 The remainder of the section explains how to apply CG to rapidly solve the linear system, developing necessary theories along the way.
	 In Section~\ref{sec:cg_sampler_demonstration}, we use simulated data to study the effectiveness of our CG sampler in the sparse regression context. Also studied is how the behavior of CG depends on different preconditioning strategies.
	 In Section~\ref{sec:application}, we apply our algorithm to the blood anti-coagulant safety study, demonstrating an order of magnitude speed-up in the posterior computation.
	 Among the 22,175 predictors, the sparse regression posterior identifies age groups as significant source of treatment effect heterogeneity.

	 Our CG-accelerated Gibbs sampler is implemented as the \textit{bayesbridge} package available from Python Package Index (\url{pypi.org}). The source code is available at a GitHub repository \url{https://github.com/ohdsi/bayes-bridge}.

\section{Conjugate gradient sampler}
\label{sec:cg_sampler}

\subsection{Generating Gaussian vector as solution of linear system}
\label{sec:main_idea}
	The standard algorithm for sampling a multivariate-Gaussian requires the Cholesky factorization $\bPhi = \bm{L} \bm{L}^\transpose$ of its precision (or covariance) matrix \citep{rue2005gmrf}. When the precision matrix $\bm{\Phi}$ has a specific structure as in \eqref{eq:beta_conditional}, however, it turns out we can recast the problem of sampling from the distribution \eqref{eq:beta_conditional} to that of solving a linear system. This in particular obviates the need to compute and factorize $\bPhi$.
	\begin{proposition}
	\label{prop:gaussian_as_linear_system_solution}
		The following procedure generates a sample $\bbeta$ from the distribution \eqref{eq:beta_conditional}:
		\begin{enumerate}
			\item Generate $\bm{b}  \sim \normal\big( \X^\transpose \bOmega \tilde{\y}, \bm{\Phi} \big)$ by sampling independent Gaussian vectors $\bm{\eta} \sim \normal(\bm{0}, \bm{I}_n)$ and $\bm{\delta} \sim \normal(\bm{0}, \bm{I}_p)$ and then setting
			\begin{equation}
			\label{eq:cg_sampler_target_vector}
			\bm{b} = \X^\transpose \bOmega \tilde{\y} + \X^\transpose \bm{\Omega}^{1/2} \bm{\eta} + \tau^{-1} \bLshrink^{-1} \bm{\delta}.
			\end{equation}
			\item Solve the following linear system for $\bbeta$:
			\begin{equation}
			\label{eq:linear_system_for_cg_sampler}
			\bPhi \bbeta = \bm{b}
			\quad \, \text{where } \,
			\bPhi = \X^\transpose \bOmega \X + \tau^{-2} \bLshrink^{-2}.
			\end{equation}
		\end{enumerate}
	\end{proposition}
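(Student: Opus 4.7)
The plan is a straightforward two-step verification, exploiting the fact that an affine transformation of independent Gaussians is Gaussian with easily computable mean and covariance.

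First, I would verify that the vector $\bm{b}$ constructed in step~1 indeed has distribution $\normal(\X^\transpose \bOmega \y', \bPhi)$. Since $\bm{\eta}$ and $\bm{\delta}$ are independent Gaussians, $\bm{b}$ is Gaussian with mean $\X^\transpose \bOmega \y'$ (the two noise terms are mean-zero). Its covariance decomposes as the sum of the covariances of the two independent noise contributions:
\begin{equation*}
\operatorname{Cov}(\bm{b})
 = \X^\transpose \bOmega^{1/2} \I_n \bOmega^{1/2} \X
  + \tau^{-2} \bLshrink^{-1} \I_p \bLshrink^{-1}
 = \X^\transpose \bOmega \X + \tau^{-2} \bLshrink^{-2}
 = \bPhi,
\end{equation*}
which uses the symmetry of $\bOmega^{1/2}$ and $\bLshrink^{-1}$. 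The key conceptual point is recognizing that the prescribed decomposition of $\bPhi$ as a sum of two positive semi-definite terms, $\X^\transpose \bOmega \X$ and $\tau^{-2} \bLshrink^{-2}$, suggests exactly the two noise-injection terms used in \eqref{eq:cg_sampler_target_vector}.

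Second, I would set $\bbeta = \bPhi^{-1} \bm{b}$, the unique solution to \eqref{eq:linear_system_for_cg_sampler} (uniqueness holds because $\bPhi$ is strictly positive definite, being the sum of a PSD matrix and the strictly positive diagonal matrix $\tau^{-2} \bLshrink^{-2}$). Since $\bbeta$ is a deterministic linear transformation of the Gaussian vector $\bm{b}$, it is Gaussian with
\begin{equation*}
\mathbb{E}[\bbeta] = \bPhi^{-1} \X^\transpose \bOmega \y',
\qquad
\operatorname{Cov}(\bbeta) = \bPhi^{-1} \bPhi \bPhi^{-\transpose} = \bPhi^{-1},
\end{equation*}
using the symmetry of $\bPhi$. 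This matches \eqref{eq:beta_conditional} exactly.

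There is no real obstacle here; the proposition is essentially a covariance bookkeeping exercise. The only subtlety worth flagging explicitly is that the construction requires $\bOmega^{1/2}$ to be well-defined, which holds because the Polya-Gamma augmentation variables $\omega_i$ are strictly positive almost surely, and likewise $\lshrink_j > 0$ under the shrinkage priors under consideration, so $\bLshrink^{-1}$ exists. Everything else is immediate from standard properties of the multivariate Gaussian.
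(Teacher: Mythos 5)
Your proof is correct and matches the paper's argument, which simply notes that $\mathrm{Var}(\bm{b}) = \bPhi$ and $\mathrm{Var}(\bPhi^{-1}\bm{b}) = \bPhi^{-1}\mathrm{Var}(\bm{b})(\bPhi^{-1})^{\transpose} = \bPhi^{-1}$; you have filled in the same covariance bookkeeping in slightly more detail. Your added remarks on positive definiteness of $\bPhi$ and the well-definedness of $\bOmega^{1/2}$ and $\bLshrink^{-1}$ are sound but not points the paper felt the need to make.
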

	\noindent The result follows immediately from basic properties of multivariate Gaussians.
	The Gaussian vector $\bm{b}$ has $\text{Var}(\bm{b}) = \bm{\Phi}$ and is generated with a computational cost negligible compared to computing and factorizing $\bPhi$.
	The solution to \eqref{eq:linear_system_for_cg_sampler} has the required covariance structure because $\textrm{Var}(\bPhi^{-1} \bm{b}) = \bPhi^{-1} \textrm{Var}(\bm{b}) (\bPhi^{-1})^{\transpose}$.

	\cite{bhattacharya15fast_sampling} propose a related algorithm which reduces the task of sampling a multivariate Gaussian to solving a $n \times n$ linear system.
	On the other hand, our algorithm reduces the task to solving a $p \times p$ system, which is smaller in size when $p < n$ and, more importantly, amenable to a fast solution via CG as we will show.

\subsection{Iterative method for solving linear system}
\label{sec:iterative_method_for_solving_linear_system}
	Proposition~\ref{prop:gaussian_as_linear_system_solution} is useful because solving the linear system \eqref{eq:linear_system_for_cg_sampler} can be significantly faster than the standard algorithm for sampling a Gaussian vector. We achieve this speed-up by applying the CG method \citep{hestenes1952cg,lanczos1952iterative}. CG belongs to a family of \textit{iterative methods} for solving a linear system. Compared to traditional direct methods, iterative methods are more memory efficient and, if the matrix $\bPhi$ has certain structures (Section~\ref{sec:cg_convergence_behavior}), can be significantly faster.

	Iterative methods have found applications in Gaussian process models, where optimizing the hyper-parameters of covariance functions requires solving linear systems involving large covariance matrices \citep{gibbs1997effcient}. Significant research has gone into how best to apply iterative methods in this specific context; see  \cite{stein2012difference-preconditioning}, \cite{sun2016estimating-eq-for-spatial}, and \cite{stroud2017gp-hyper-param-maximization} for example. Outside the Gaussian process literature, \cite{zhou2017iterative-method-for-variable-selection} use an iterative method to address the bottleneck of having to solve large linear systems when computing Bayes factors in a model selection problem.

	A novel feature of our work is the use of CG as a computational tool for Monte Carlo simulation.
	A related work is \cite{zhang2019practical-scalable-computing}, brought to our attention while we were preparing the first draft of our manuscript. They use the same idea as in Proposition~\ref{prop:gaussian_as_linear_system_solution} to generate a posterior sample from a Gaussian process model. However, they fail to investigate when and how CG delivers practical computational gains.
	Our work is distinguished by the development --- supported by both theoretical analysis and systematic empirical evaluations --- of a novel preconditioning technique tailored toward Bayesian sparse regression problems (Section~\ref{sec:prior_preconditioning} and \ref{sec:theory_of_prior_preconditioning}).
	In the process, we also compile a summary of the most practically useful of theoretical results regarding CG (Appendix~\ref{sec:theories_behind_cg}), which has previously been scattered across the literature, to facilitate potential applications of CG to a broader range of statistical problems.


	The CG method solves a linear system $\bPhi \bbeta = \bm{b}$ involving a positive definite matrix $\bPhi$ as follows. Given an initial guess $\bbeta_0$, which may be taken as $\bbeta_0 = \bm{0}$ for example, CG generates a sequence $\{\bbeta_k\}_{k = 1, 2, \ldots}$ of increasingly accurate approximations to the solution. The convergence of the CG iterates $\bbeta_k$'s is intimately tied to the \textit{Krylov subspace}
	\begin{equation*}
	\mathcal{K}(\bPhi, \br_0, k) = \textrm{span}
	\left\{ \bm{r}_0, \bPhi \bm{r}_0, \ldots, \bPhi^{k - 1} \br_0 \right\},
	\end{equation*}
	generated from the initial residual $\bm{r}_0 = \bPhi \bbeta_0 - \bm{b}$. With $\bbeta_0 + \mathcal{K}(\bPhi, \br_0, k)$ denoting an affine space $\{ \bbeta_0 + \bv : \bv \in \mathcal{K}(\bPhi, \br_0, k) \}$, the approximate solution $\bbeta_k$ satisfies the following optimality property in terms of a weighted $l^2$ norm $\| \cdot \|_{\bPhi}$, often referred to as the \textit{$\bPhi$-norm}:
	\begin{equation}
	\label{eq:cg_optimality_property}
	\bbeta_k = \argmin \left\{
		\left\| \bbeta' - \bbeta \right\|_{\bPhi} : \bbeta' \in \bbeta_0 + \mathcal{K}(\bPhi, \br_0, k)
	\right\}
	\, \text{ where }
	\left\| \br \right\|^2_{\bPhi} := \br^\transpose \bPhi \br.
	\end{equation}

	The optimality property \eqref{eq:cg_optimality_property} in particular implies that CG yields the exact solution after $p$ iterations.
	As evident from the pseudo-code in Supplement Section~\ref{sec:cg_pseudo_code}, the main computational cost of each update $\bbeta_k \to \bbeta_{k + 1}$ is a matrix-vector operation $\bv \to \bPhi \bv$.
	Consequently, the required number of arithmetic operations to run $p$ iterations of the CG update is comparable to that of a direct linear algebra method.
	For a typical precision matrix $\bPhi$ in the conditional distribution \eqref{eq:beta_conditional}, however, we can induce rapid convergence of CG through the preconditioning strategy described in the next section.
	In our numerical results, we indeed find that the distribution of $\bbeta_k$ even for $k \ll p$ is indistinguishable from \eqref{eq:beta_conditional} for all practical purposes.

\subsection{Convergence of CG and its relation to eigenvalue distribution}
\label{sec:cg_convergence_behavior}
	The iterative solution $\{\bbeta_k\}_{k = 0, 1, 2, \ldots}$ often displays slow convergence when CG is directly applied to a given linear system. Section~\ref{sec:prior_preconditioning} covers the topic of how to induce more rapid CG convergence for the system \eqref{eq:linear_system_for_cg_sampler}. In preparation, here we describe how the convergence behavior of CG is related to the structure of the positive definite matrix $\bPhi$.

	CG convergence behavior is partially explained by the following well-known error bound in terms of the \textit{condition number} $\kappa(\bPhi)$, the ratio of the largest to smallest eigenvalue of $\bPhi$.
	\begin{theorem}
	\label{thm:cg_bound_by_sqrt_cond_num}
		Given a positive definite system $\bPhi \bbeta = \bm{b}$ and a starting vector $\bbeta_0$, the $k$-th CG iterate $\bbeta_k$ satisfies the following bound in its $\bPhi$-norm distance to the solution $\bbeta$:
		\begin{equation}
		\label{eq:cg_bound_by_sqrt_cond_num}
		\frac{
			\left\| \bbeta_k - \bbeta \right\|_{\bPhi}
		}{
			\left\| \bbeta_0 - \bbeta \right\|_{\bPhi}
		}
		\leq 2
		\left( \frac{
			\sqrt{\kappa(\bPhi)} - 1
		}{
			\sqrt{\kappa(\bPhi)} + 1
		} \right)^k.
		\end{equation}
	\end{theorem}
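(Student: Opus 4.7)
The plan is to invoke the optimality property \eqref{eq:cg_optimality_property} to reduce the bound to a polynomial minimax problem on the spectrum of $\bPhi$, which can then be solved explicitly via a scaled and shifted Chebyshev polynomial. First I would observe that any $\bv \in \mathcal{K}(\bPhi, \br_0, k)$ takes the form $q(\bPhi) \br_0$ for some polynomial $q$ of degree less than $k$. Since $\br_0 = \bPhi(\bbeta_0 - \bbeta)$, every candidate $\bbeta' = \bbeta_0 + \bv$ in the affine space then satisfies
\[
\bbeta' - \bbeta \;=\; \bigl(\I + \bPhi \, q(\bPhi)\bigr)(\bbeta_0 - \bbeta) \;=\; p(\bPhi)(\bbeta_0 - \bbeta),
\]
where $p(x) = 1 + x \, q(x)$ ranges over all polynomials of degree at most $k$ normalized by $p(0) = 1$. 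Writing $\polyspace_k^\star$ for this class, the optimality property rewrites as
\[
\|\bbeta_k - \bbeta\|_{\bPhi} \;=\; \min_{p \in \polyspace_k^\star} \|p(\bPhi)(\bbeta_0 - \bbeta)\|_{\bPhi}.
\]

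Next I would diagonalize $\bPhi = \bm{U}\bm{\Lambda}\bm{U}^\transpose$ with eigenvalues $\lambda_1 \leq \cdots \leq \lambda_p$ and expand $\bbeta_0 - \bbeta = \sum_i c_i \bm{u}_i$ in the eigenbasis. Since $p(\bPhi)$ commutes with $\bPhi^{1/2}$, a direct calculation yields
\[
\|p(\bPhi)(\bbeta_0 - \bbeta)\|_{\bPhi}^2 \;=\; \sum_i \lambda_i \, |p(\lambda_i)|^2 \, c_i^2 \;\leq\; \Bigl(\max_i |p(\lambda_i)|\Bigr)^{\!2} \, \|\bbeta_0 - \bbeta\|_{\bPhi}^2.
\]
As every $\lambda_i$ lies in $[\lambda_{\min}, \lambda_{\max}]$, dropping the discrete maximum to the continuous one over the enclosing interval gives
\[
\frac{\|\bbeta_k - \bbeta\|_{\bPhi}}{\|\bbeta_0 - \bbeta\|_{\bPhi}} \;\leq\; \min_{p \in \polyspace_k^\star} \ \max_{\lambda \in [\lambda_{\min}, \lambda_{\max}]} |p(\lambda)|.
\]

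The main obstacle is the scalar minimax problem over polynomials normalized by $p(0) = 1$ on a positive interval; this is the one nontrivial external input the proof requires. I would appeal to the classical Chebyshev extremal property: the minimizer is a rescaled and shifted Chebyshev polynomial of the first kind $T_k$, with minimum value $1/T_k\!\bigl((\kappa+1)/(\kappa-1)\bigr)$, where $\kappa = \kappa(\bPhi) = \lambda_{\max}/\lambda_{\min}$. Applying $T_k(\cosh\theta) = \cosh(k\theta)$ with $\cosh\theta = (\kappa+1)/(\kappa-1)$, or equivalently $e^\theta = (\sqrt{\kappa}+1)/(\sqrt{\kappa}-1)$, yields
\[
T_k\!\left(\frac{\kappa+1}{\kappa-1}\right) \;=\; \frac{1}{2}\!\left[\left(\frac{\sqrt{\kappa}+1}{\sqrt{\kappa}-1}\right)^{\!k} + \left(\frac{\sqrt{\kappa}-1}{\sqrt{\kappa}+1}\right)^{\!k}\right] \;\geq\; \frac{1}{2}\!\left(\frac{\sqrt{\kappa}+1}{\sqrt{\kappa}-1}\right)^{\!k},
\]
and inverting this lower bound delivers exactly \eqref{eq:cg_bound_by_sqrt_cond_num}. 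Every other step is a routine manipulation; the only substantive ingredient is the Chebyshev extremality, which I would cite rather than reprove.
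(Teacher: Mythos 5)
Your proof is correct and follows essentially the same route the paper takes: the paper defers the main-text proof to \cite{trefethen1997numerical_linalg}, but its Appendix~\ref{sec:theories_behind_cg} develops exactly your argument --- the reduction to a polynomial minimax problem over the spectrum (Proposition~\ref{prop:cg_error_as_polynomial_sum} and Theorem~\ref{thm:cg_bound_over_discrete_polynomial_values}), relaxation to the interval $[\eigen_{\mathrm{min}}, \eigen_{\mathrm{max}}]$, and the shifted Chebyshev bound (Theorem~\ref{thm:polynomial_bound_over_interval}). Your explicit $\cosh$ computation fills in the Chebyshev step that the paper outsources to \cite{saad2011methods-for-eigenvalue}, and all of it checks out.
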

	\noindent See \cite{trefethen1997numerical_linalg} for a proof. Theorem~\ref{thm:cg_bound_by_sqrt_cond_num} guarantees fast convergence of the CG iterates when the condition number is small. On the other hand, a large condition number does not always prevent rapid convergence. This is because CG converges quickly also when the eigenvalues of $\bPhi$ are ``clustered.'' The following theorem quantifies this phenomenon, albeit in an idealized situation in which $\bPhi$ has exactly $k < p$ distinct eigenvalues.
	\begin{theorem}
	\label{thm:cg_convergence_for_low_rank_perturbation}
		If the positive definite matrix $\bPhi$ has only $k + 1$ distinct eigenvalues, then CG yields an exact solution within $k + 1$ iterations. In particular, the result holds if $\bPhi$ is a rank-$k$ perturbation of an identity i.e.\ $\bPhi = \bF \bF^\transpose + \I$ for $\bF \in \mathbb{R}^{p \times k}$.
	\end{theorem}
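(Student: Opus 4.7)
The plan is to exploit the polynomial characterization of the CG error that follows directly from the optimality property \eqref{eq:cg_optimality_property}. Writing $\bbeta_k = \bbeta_0 + \bv_k$ with $\bv_k \in \mathcal{K}(\bPhi, \br_0, k)$, any element of the Krylov subspace has the form $\bv_k = p_{k-1}(\bPhi) \br_0$ for some polynomial $p_{k-1}$ of degree at most $k-1$, so the error $\bbeta_k - \bbeta$ can be written as $q_k(\bPhi)(\bbeta_0 - \bbeta)$ where $q_k$ ranges over all polynomials of degree at most $k$ satisfying $q_k(0) = 1$. The optimality property \eqref{eq:cg_optimality_property} then says CG picks precisely the $q_k$ that minimizes $\|q_k(\bPhi)(\bbeta_0 - \bbeta)\|_{\bPhi}$.

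First, I would establish the first assertion. Suppose $\bPhi$ has distinct eigenvalues $\mu_1, \ldots, \mu_{k+1}$. I will construct the explicit polynomial
\begin{equation*}
q_{k+1}(\mu) = \prod_{i=1}^{k+1} \left(1 - \frac{\mu}{\mu_i} \right),
\end{equation*}
which has degree $k+1$, satisfies $q_{k+1}(0) = 1$, and vanishes at each $\mu_i$. Expanding $\bbeta_0 - \bbeta$ in an eigenbasis of $\bPhi$ and applying $q_{k+1}(\bPhi)$ annihilates every component, so $q_{k+1}(\bPhi)(\bbeta_0 - \bbeta) = \bm{0}$. Since CG's choice at step $k+1$ minimizes the $\bPhi$-norm over all admissible polynomials, it does at least as well as this $q_{k+1}$, forcing $\bbeta_{k+1} = \bbeta$ exactly.

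For the second assertion, I would argue that $\bF\bF^\transpose$ has rank at most $k$ (as an outer-product of a $p \times k$ matrix with itself), hence at most $k$ nonzero eigenvalues. Its remaining $p - k$ eigenvalues equal $0$. Adding $\I$ shifts every eigenvalue by $1$, producing at most $k$ eigenvalues of the form $1 + \sigma_i^2$ (where $\sigma_i$ are the nonzero singular values of $\bF$) together with the eigenvalue $1$ of multiplicity at least $p - k$. This gives at most $k + 1$ distinct eigenvalues, and the first part applies.

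I do not anticipate any real obstacle; the main thing to be careful about is the bookkeeping that links the Krylov subspace to polynomials in $\bPhi$, and making clear that "at most $k+1$ distinct eigenvalues" suffices even though the statement writes "only $k+1$ distinct eigenvalues" — if there happen to be fewer distinct eigenvalues, an annihilating polynomial of even smaller degree exists and CG terminates even earlier.
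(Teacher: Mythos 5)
Your proof is correct and follows exactly the route the paper itself indicates: it defers the proof to \cite{golub2012matrix} but notes in Appendix~\ref{sec:theories_behind_cg} that Theorem~\ref{thm:cg_convergence_for_low_rank_perturbation} ``follows from the polynomial approximation perspective,'' which is precisely your argument via Proposition~\ref{prop:cg_error_as_polynomial_sum} and the annihilating polynomial $\prod_{i=1}^{k+1}(1 - \mu/\mu_i)$ (well defined since positive definiteness gives $\mu_i > 0$). Your handling of the rank-$k$ perturbation case and of the ``at most $k+1$ distinct eigenvalues'' subtlety is also correct.
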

	\noindent See \cite{golub2012matrix} for a proof.

	Theorem~\ref{thm:cg_bound_by_sqrt_cond_num} and \ref{thm:cg_convergence_for_low_rank_perturbation} are arguably the most famous results on the convergence property of CG, perhaps because their conclusions are clear-cut and easy to understand. These results, however, fall short of capturing the most important aspects of CG convergence behavior in practice.
	To address this problem, we bring together the most useful of the known results scattered around the numerical linear algebra literature and summarize them as the following rule of thumb. All the statements below are made mathematically precise in Appendix~\ref{sec:theories_behind_cg}.
	\begin{rule-of-thumb}
	\label{quasi-thm:cg_convergence}
	Suppose that the eigenvalues $\eigen_p(\bPhi) \leq \ldots \leq \eigen_1(\bPhi)$ of $\bPhi$ are clustered in the interval $[\eigen_{p - \nsmallest}, \eigen_\nlargest]$ except for a small fraction of them. Then CG effectively ``removes'' the outlying eigenvalues exponentially quickly. Its convergence rate subsequently accelerates as if the condition number in Eq~\ref{eq:cg_bound_by_sqrt_cond_num} is replaced by the effective value $\eigen_\nlargest / \eigen_{p - \nsmallest}$. The $r$ largest eigenvalues are removed within $r$ iterations, while the same number of smallest eigenvalues tends to delay convergence longer.
	\end{rule-of-thumb}

\subsection{Preconditioning linear system to accelerate CG convergence}
\label{sec:prior_preconditioning}
	A \textit{preconditioner} is a positive definite matrix $\M$ chosen so that the \textit{preconditioned system}
	\begin{equation}
	\label{eq:preconditioned_system}
	\btilPhi \btilbeta = \bbTilde
		\quad \text{ for } \,
		\bm{\tilde{\Phi}} = \M^{-1/2} \bPhi \M^{-1/2}
		\text{ and }
		\bbTilde = \M^{-1/2} \bb
	\end{equation}
	leads to faster convergence of the CG iterates.
	In practice, the algorithm can be implemented so that only the operation $\bv \to \M^{-1} \bv$, and not $\M^{-1/2}$, is required to solve the preconditioned system \eqref{eq:preconditioned_system} via CG \citep{golub2012matrix}.
	This \textit{preconditioned CG} algorithm still returns a solution $\bbeta_k = \M^{-1/2} \btilbeta_k$ in terms of the original system.

	In light of Rule of Thumb~\ref{quasi-thm:cg_convergence}, an effective preconditioner should modify the eigenvalue structure of $\bPhi$ so that the preconditioned matrix $\bm{\tilde{\Phi}}$ has more tightly clustered eigenvalues except for a small number of outlying ones. Larger outlying eigenvalues are preferable over smaller ones, as smaller ones cause a more significant delay in CG convergence. Additionally, a choice of a preconditioner must take into consideration 1) the one-time cost of computing the preconditioner $\M$ and 2) the cost of operation $\bv \to \M^{-1} \bv$ during each CG iteration.

	In the contexts of Bayesian sparse regression, the linear system \eqref{eq:linear_system_for_cg_sampler} admits a deceptively simple yet highly effective preconditioner.
	As it turns out, the choice
	\begin{equation*}
	\M = \tau^{-2} \bLshrink^{-2}
	\end{equation*}
	yields a modified system \eqref{eq:preconditioned_system} with an eigenvalue structure ideally suited to CG.
	With a slight abuse of terminology, we call it the \textit{prior preconditioner} since it corresponds to the precision of $\bbeta \given \tau, \blshrink, \bomega \, ( \, \overset{d}{=} \bbeta \given \tau, \blshrink)$ before observing $\y$ and $\X$.
	Most existing preconditioners require explicit access to the elements of $\bPhi$ for their constructions \citep{golub2012matrix} and are thus useless when computing $\bPhi$ itself is a bottleneck.
	Arguably the only reasonable alternative here is the Jacobi preconditioner $\M = \textrm{diag}(\Phi_{11}, \ldots, \Phi_{pp})$, known as one of the most effective for $\bPhi$ with large diagonals.
	Our numerical results clearly show superior performances of the prior preconditioner, however (Section~\ref{sec:cg_convergence_on_simulated_data} and \ref{sec:speed_up_from_cg_acceleration_on_ohdsi_data}).

	Noting that the prior-preconditioned matrix is given by
	\begin{equation}
	\label{eq:prior_preconditioned_matrix}
	\btilPhi = \tau^{2} \bLshrink \X^\transpose \bOmega \X \bLshrink + \I_p,
	\end{equation}
	we can heuristically motivate the preconditioner as follows.
	When employing the shrinkage prior, we expect posterior draws of $\tau \blshrink$ to satisfy $\tau \lshrink_j \approx 0$ except for a relatively small subset $\{j_1, \ldots, j_k\}$ of $j = 1, \ldots, p$.
	The $(i, j)$-th entry of the matrix $\tau^{2} \bLshrink \X^\transpose \bOmega \X \bLshrink$ is given by
	\begin{equation*}
	\left( \tau^{2} \bLshrink \X^\transpose \bOmega \X \bLshrink \right)_{i, j}
		= (\tau \lshrink_i) (\tau \lshrink_j) \left( \X^\transpose \bOmega \X \right)_{ij},
	\end{equation*}
	which is small when either $\tau \lshrink_i \approx 0$ or $\tau \lshrink_j \approx 0$. Hence the entries of $\tau^{2} \bLshrink \X^\transpose \bOmega \X \bLshrink$ are small away from the $k \times k$ block corresponding to the indices $\{j_1, \ldots, j_k\}$. In general, smaller entries of a matrix have less contributions to the eigenvalue structures of the entire matrix \citep{golub2012matrix}. This means that the prior-preconditioned matrix \eqref{eq:prior_preconditioned_matrix} can be thought of as a perturbation of the identity with a matrix of approximate low-rank structure.\footnote{%
		It is too naive, however, to deduce that we obtain a good approximation to $\btilPhi$ by zeroing out $\tau \lshrink_j$'s below some threshold.
		We show in Supplement Section~\ref{supp:block_preconditioner} that such approximation is typically of a poor quality.
	}
	As such, $\btilPhi$ can be expected to have eigenvalues clustered around 1, except for a small number of larger ones.

	Alternatively, we can also motivate the prior-preconditioner as follows.
	Bayesian sparse regression achieves posterior sparsity because the shrinkage prior dominates the likelihood for all but a small number of coefficients.
	In other words, the posterior looks a lot like the prior except in a small number of directions.
	As explained in Supplement Section~\ref{sec:principle_behind_prior_preconditioning}, this phenomenon translates into the eigenvalues of the prior-preconditioned matrix $\btilPhi$ clustering around 1.
	Since this heuristics is based on expected behavior of a posterior under a strongly informative prior in general, it suggests that prior-preconditioning may be applicable beyond the sparse regression context, e.g.\ to a Gaussian process model like that of \cite{zhang2019practical-scalable-computing}.

%

\subsection{Theory of prior-preconditioning and role of posterior sparsity}
\label{sec:theory_of_prior_preconditioning}
	We now formally quantify the eigenvalue structure of the matrix \eqref{eq:prior_preconditioned_matrix}.
	\begin{theorem}
	\label{thm:clustering_of_eigenvalues}
		Let $\lshrink_{(k)} = \lshrink_{j_k}$ denote the $k$-th largest element of $\{\lshrink_1, \ldots, \lshrink_p\}$. The eigenvalues of the prior-preconditioned matrix \eqref{eq:prior_preconditioned_matrix} satisfies
		\begin{equation*}
		1
			\leq \eigen_{k}(\btilPhi)
			\leq 1 + \tau^2 \lshrink_{(k)}^2 \, \eigen_1 \! \left( \X^\transpose \bOmega \X  \right)
		\end{equation*}
		for $k = 1, \ldots, p$. In fact, the following more general bounds hold. Let $\bA_{(-k)}$ denote the $(p - k) \times (p - k)$ submatrix of a given matrix $\bA$ corresponding to the row and column indices $j_{k + 1}, \ldots, j_p$. With this notation, we have
		\begin{equation}
		\label{eq:prior_preconditioned_eigenvalue_bounds}
		1
			\leq \eigen_{k + \ell}(\btilPhi)
			\leq 1 + \tau^2 \lshrink_{(k)}^2 \, \eigen_{\ell + 1} \! \left( (\X^\transpose \bOmega \X )_{(-k)} \right)
			\leq 1 + \tau^2 \lshrink_{(k)}^2 \, \eigen_{\ell + 1} \! \left( \X^\transpose \bOmega \X  \right)
		\end{equation}
		for any $k \geq 1$ and $\ell \geq 0$ such that $1 \leq k + \ell \leq p$.
	\end{theorem}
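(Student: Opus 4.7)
My approach is to establish the lower bound directly from a positive-semidefiniteness observation, and the upper bounds through two successive applications of Cauchy's interlacing theorem combined with a sandwich calculation that extracts the factor $\lshrink_{(k)}^2$.

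For the lower bound $1 \le \eigen_{k+\ell}(\btilPhi)$, I would write
\[
\btilPhi - \I_p \;=\; \tau^2 \bLshrink \X^\transpose \bOmega \X \bLshrink \;=\; \tau^2 \bigl(\bOmega^{1/2} \X \bLshrink\bigr)^\transpose \bigl(\bOmega^{1/2} \X \bLshrink\bigr) \;\succeq\; 0 ,
\]
so that every eigenvalue of $\btilPhi$ is at least $1$.

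For the middle inequality, the key step is Cauchy's interlacing theorem applied to $\btilPhi$ and its $(p-k) \times (p-k)$ principal submatrix $\btilPhi_{(-k)}$ on the indices $\{j_{k+1}, \ldots, j_p\}$, which yields $\eigen_{i+k}(\btilPhi) \le \eigen_i(\btilPhi_{(-k)})$ for each admissible $i$. A direct block computation then gives
\[
\btilPhi_{(-k)} \;=\; \I + \tau^2 \, \bLshrink^{(-k)} (\X^\transpose \bOmega \X)_{(-k)} \bLshrink^{(-k)} ,
\]
where $\bLshrink^{(-k)} = \text{diag}(\lshrink_{j_{k+1}}, \ldots, \lshrink_{j_p})$ satisfies $(\bLshrink^{(-k)})^2 \preceq \lshrink_{(k)}^2 \I$ by the definition of $\lshrink_{(k)}$. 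To bound the eigenvalues of the sandwiched submatrix, I would take a positive semidefinite square-root factorization $(\X^\transpose \bOmega \X)_{(-k)} = \bF^\transpose \bF$ and observe that the non-zero eigenvalues of $\bigl(\bF \bLshrink^{(-k)}\bigr)^\transpose \bigl(\bF \bLshrink^{(-k)}\bigr)$ coincide with those of $\bF (\bLshrink^{(-k)})^2 \bF^\transpose$, which by Loewner monotonicity is bounded above by $\lshrink_{(k)}^2 \bF \bF^\transpose$. This produces the eigenvalue-wise inequality
\[
\eigen_i\bigl(\bLshrink^{(-k)} (\X^\transpose \bOmega \X)_{(-k)} \bLshrink^{(-k)}\bigr) \;\le\; \lshrink_{(k)}^2 \, \eigen_i\bigl((\X^\transpose \bOmega \X)_{(-k)}\bigr) .
\]
Chaining these bounds with the correct index alignment gives the middle inequality. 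The rightmost inequality follows from another application of Cauchy interlacing, now to $\X^\transpose \bOmega \X$ and its principal submatrix $(\X^\transpose \bOmega \X)_{(-k)}$.

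The step I expect to require the most care is the sandwich estimate on $\bLshrink^{(-k)} (\X^\transpose \bOmega \X)_{(-k)} \bLshrink^{(-k)}$: the seemingly natural Loewner-order statement $\bLshrink^{(-k)} \bA \bLshrink^{(-k)} \preceq \lshrink_{(k)}^2 \bA$ for generic positive semidefinite $\bA$ is in fact \emph{false}, so the argument must route through the eigenvalue-swap identity between $\bB^\transpose \bB$ and $\bB \bB^\transpose$ with $\bB = \bF \bLshrink^{(-k)}$, after which Loewner monotonicity is applied only to the outer factor $\bF \bF^\transpose$. Careful bookkeeping of the index shifts between the Cauchy interlacing bound and the exact form of the stated inequality is the other place where I anticipate needing to be careful.
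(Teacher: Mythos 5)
Your proof is correct and follows essentially the same route as the paper's: positive semidefiniteness of the Gram-type perturbation for the lower bound, the Poincar\'e separation (iterated Cauchy interlacing) theorem to pass to the principal submatrix $\btilPhi_{(-k)}$ and again for the rightmost inequality, and a multiplicative eigenvalue perturbation bound for the sandwiched term. The only real difference is that where the paper cites Ostrowski's theorem for $\eigen_i(\bB^{1/2}\bA\bB^{1/2}) \leq \eigen_1(\bB)\,\eigen_i(\bA)$, you prove the needed special case directly via the coincidence of the nonzero eigenvalues of $\bB^\transpose\bB$ and $\bB\bB^\transpose$ plus Loewner monotonicity applied to the outer factor --- a correct, self-contained substitute --- and your caution that the naive Loewner statement $\bLshrink\bA\bLshrink \preceq \lshrink_{(k)}^2\bA$ is false is well placed (as is your flag on the index bookkeeping, where the paper's own proof in fact lands on $\lshrink_{(k+1)}^2\,\eigen_{\ell}$ rather than the stated $\lshrink_{(k)}^2\,\eigen_{\ell+1}$).
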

	\noindent Theorem~\ref{thm:clustering_of_eigenvalues} guarantees tight clustering of the eigenvalues of the prior\-/preconditioned matrix --- and hence rapid convergence of CG --- when most of $\tau \lshrink_j$'s are close to zero.
	We can also relate the prior-preconditioned CG approximation error directly to the decay rate in $\tau \lshrink_{(k)}$'s:
	\begin{theorem}
	\label{thm:cg_error_bound_under_prior_preconditioner}
		The prior-preconditioned CG applied to \eqref{eq:linear_system_for_cg_sampler} yields iterates satisfying the following bound for any $m, m' \geq 0$:
		\begin{equation}
		\label{eq:cg_error_bound_under_prior_preconditioner}
		\begin{aligned}
		\frac{
			\left\| \bbeta_{m + m'} - \bbeta \right\|_{\bPhi}
		}{
			\left\| \bbeta_0 - \bbeta \right\|_{\bPhi}
		}
			\leq 2 \left( \frac{
				\widetilde{\kappa}_{m}^{1/2} - 1
			}{
				\widetilde{\kappa}_{m}^{1/2} + 1
			} \right)^{m'}
			\text{ where }
			\widetilde{\kappa}_{m} =
			1 + \displaystyle{\min_{k + \ell = m}} \tau^2 \lshrink_{(k+1)}^2
			\eigen_{\ell + 1} \! \left( (\X^\transpose \bOmega \X )_{(-k)} \right).
		\end{aligned}
	\end{equation}
	\end{theorem}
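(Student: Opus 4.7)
The plan is to apply the standard Chebyshev-polynomial analysis of CG to the symmetrically preconditioned system $\btilPhi \btilbeta = \bbTilde$ from \eqref{eq:preconditioned_system}, and then to bound the resulting effective condition number via Theorem~\ref{thm:clustering_of_eigenvalues}. Preconditioned CG is mathematically equivalent to standard CG applied to this symmetric system, and because $\bbeta - \bbeta_k = \M^{-1/2}(\btilbeta - \btilbeta_k)$ one has the isometry $\|\bbeta - \bbeta_k\|_{\bPhi} = \|\btilbeta - \btilbeta_k\|_{\btilPhi}$. It therefore suffices to bound $\|\btilbeta_{m+m'} - \btilbeta\|_{\btilPhi} / \|\btilbeta_0 - \btilbeta\|_{\btilPhi}$ in terms of the spectrum of $\btilPhi$.

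Next I would invoke the polynomial minimax form of the optimality property \eqref{eq:cg_optimality_property}: the error ratio at iteration $m + m'$ is dominated by $\min_{P \in \polyspace_{m+m'},\, P(0) = 1} \max_i |P(\eigen_i(\btilPhi))|$. To obtain \eqref{eq:cg_error_bound_under_prior_preconditioner} I would take $P = P_1 P_2$, where $P_1(x) = \prod_{j=1}^{m}\bigl(1 - x/\eigen_j(\btilPhi)\bigr)$ is the degree-$m$ polynomial with $P_1(0) = 1$ that annihilates the top $m$ eigenvalues of $\btilPhi$, and $P_2$ is the degree-$m'$ shifted-and-scaled Chebyshev polynomial on $[\eigen_p(\btilPhi), \eigen_{m+1}(\btilPhi)]$ normalized so $P_2(0) = 1$. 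For $i \le m$, $P$ annihilates $\eigen_i(\btilPhi)$; for $i > m$, each factor $1 - \eigen_i(\btilPhi)/\eigen_j(\btilPhi)$ of $P_1$ lies in $[0, 1)$ so that $|P_1(\eigen_i(\btilPhi))| < 1$, while the standard Chebyshev estimate used in the textbook proof of Theorem~\ref{thm:cg_bound_by_sqrt_cond_num} gives $\max_{x \in [\eigen_p(\btilPhi), \eigen_{m+1}(\btilPhi)]} |P_2(x)| \le 2\bigl[(\sqrt{\kappa^\star} - 1)/(\sqrt{\kappa^\star} + 1)\bigr]^{m'}$ with $\kappa^\star = \eigen_{m+1}(\btilPhi)/\eigen_p(\btilPhi)$.

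Finally I would bound $\kappa^\star \le \widetilde{\kappa}_m$ by invoking Theorem~\ref{thm:clustering_of_eigenvalues}. Its lower bound yields $\eigen_p(\btilPhi) \ge 1$. For the numerator, apply Theorem~\ref{thm:clustering_of_eigenvalues} with indices $K + L = m + 1$ and reparametrize via $K = k + 1$, $L = \ell$, so that $k + \ell = m$, to obtain $\eigen_{m+1}(\btilPhi) \le 1 + \tau^2 \lshrink_{(k+1)}^2 \eigen_{\ell+1}\bigl((\X^\transpose \bOmega \X)_{(-(k+1))}\bigr)$. Cauchy's interlacing inequality then replaces the $(-(k+1))$ submatrix by the one-dimension-larger $(-k)$ submatrix, and minimizing over $k + \ell = m$ delivers $\eigen_{m+1}(\btilPhi) \le \widetilde{\kappa}_m$. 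Monotonicity of $\kappa \mapsto (\sqrt{\kappa} - 1)/(\sqrt{\kappa} + 1)$ for $\kappa \ge 1$ completes the derivation.

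I expect the main obstacle to be the off-by-one index bookkeeping: $\widetilde{\kappa}_m$ is indexed by $k + \ell = m$ with $\lshrink_{(k+1)}$ and $(\X^\transpose \bOmega \X)_{(-k)}$, whereas Theorem~\ref{thm:clustering_of_eigenvalues} naturally yields $k + \ell = m + 1$ with $\lshrink_{(k)}$ and $(\X^\transpose \bOmega \X)_{(-k)}$. Aligning them requires both the shift $K = k + 1$ and one appeal to Cauchy interlacing, with a careful check that interlacing runs in the direction that weakens rather than sharpens the bound, so that the quoted $\widetilde{\kappa}_m$ really is an upper bound on $\eigen_{m+1}(\btilPhi)$.
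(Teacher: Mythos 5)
Your argument is correct and follows essentially the same route as the paper: the paper's proof simply cites Theorem~\ref{thm:cg_convergence_delay_by_large_eigenvalues} (whose proof is exactly your $P_1 P_2$ annihilate-then-Chebyshev construction, transferred to the preconditioned system) to obtain the bound with effective condition number $\eigen_{m+1}(\btilPhi)/\eigen_{p}(\btilPhi)$, and then applies Theorem~\ref{thm:clustering_of_eigenvalues} together with the monotonicity of $\kappa \mapsto (\sqrt{\kappa}-1)/(\sqrt{\kappa}+1)$. Your off-by-one repair via the shift $K = k+1$ plus one Cauchy interlacing step runs in the correct (bound-weakening) direction and is valid; it is in fact avoidable, since the paper's proof of Theorem~\ref{thm:clustering_of_eigenvalues} actually derives the inequality $\eigen_{k+\ell}(\btilPhi) \leq 1 + \tau^2 \lshrink_{(k+1)}^{2}\, \eigen_{\ell}\big( (\X^\transpose \bOmega \X)_{(-k)} \big)$, which upon replacing $\ell$ by $\ell+1$ yields the exact form appearing in $\widetilde{\kappa}_{m}$ without any interlacing.
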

	\noindent See Appendix~\ref{sec:proofs} for proofs of Theorem~\ref{thm:clustering_of_eigenvalues} and \ref{thm:cg_error_bound_under_prior_preconditioner}.

	To illustrate the implication of Theorem~\ref{thm:cg_error_bound_under_prior_preconditioner} in concrete terms, suppose that a posterior draw $\tau, \blshrink, \bomega$ satisfies $\tau^2 \lshrink_{(m+1)}^2 \eigen_{1} \! \left( \X^\transpose \bOmega \X \right) \leq 100$ for some $m$. In this case, we have $\log_{10} \big(\widetilde{\kappa}_{m}^{1/2} - 1 \big) \big/ \big(\widetilde{\kappa}_{m}^{1/2} + 1 \big) \leq - 0.086$. So the bound of Theorem~\ref{thm:cg_error_bound_under_prior_preconditioner} implies
	\begin{equation*}
	\frac{
		\left\| \bbeta_{m + m'} - \bbeta \right\|_{\bPhi}
	}{
		\left\| \bbeta_0 - \bbeta \right\|_{\bPhi}
	}
		\leq 2 \cdot 10^{- 0.086 m'}.
	\end{equation*}
	After $m + 100$ iterations, therefore, the CG approximation error in the $\bPhi$-norm is guaranteed to be reduced by a factor of $2 \cdot 10^{-8.6} \approx 10^{-8.3}$ relative to the initial error.

	We have so far stated our theoretical results in purely linear algebraic languages.
	We now summarize our discussions in a more statistical language, providing a practical guideline on the CG sampler performance in the sparse regression context.
	\begin{rule-of-thumb}
	\label{quasi-thm:sparsity_and_cg_convergence}
	The prior-preconditioned CG applied to the linear system \eqref{eq:linear_system_for_cg_sampler} converges rapidly when the posterior of $\bbeta$ concentrates on sparse vectors. 
	As the sparsity of $\bbeta$ increases, the convergence rate of the CG sampler also increases.
	\end{rule-of-thumb}
	The statements above are born out by illustrative examples of Section~\ref{sec:cg_sampler_demonstration} using synthetic sparse regression posteriors. As we have seen, the statements can be made more precise in terms of the decay rate in the ordered statistics $\tau \lshrink_{(k)}$ of a posterior sample $\tau \blshrink$ (Rule of Thumb~\ref{quasi-thm:cg_convergence}, Theorem~\ref{thm:clustering_of_eigenvalues}, and Theorem~\ref{thm:cg_error_bound_under_prior_preconditioner}).
	We also note that, while our theoretical results hold for any values of $\bomega$, $\tau$, $\blshrink$, and $\bb$, these quantities are random within a sparse regression Gibbs sampler.
	Even with substantial variation in these random quantities, however, we consistently observe fast decay in all $\tau \lshrink_{(k)}$ and rapid CG convergence at every iteration.
	In fact, we rarely observe a deviation of more than $5 \sim 10\%$ from the average number of CG iterations at stationarity -- see Supplement Section~\ref{supp:cg_acceleration_mechanism_details}.


\subsection{Computational complexity of prior-preconditioned CG}
\label{sec:complexity_analysis}
	Based on the discussion of Section~\ref{sec:theory_of_prior_preconditioning}, we may crudely quantify the number of prior-preconditioned CG iterations required for updating $\bbeta$ within a sparse regression Gibbs sampler as $O(s)$, where $s$ is the number of $\tau \lshrink_j$'s --- and hence of $\beta_j$'s --- significantly away from $0$.
	As the cost of each CG iteration is dominated by the operations $\bv \to \X \bv$ and $\bw \to \X^\transpose \bw$, both of which require $O(np)$ floating point operations, the $O(s)$ CG iterations translate to the overall computational complexity of $O(nps)$.
	The cost of prior-preconditioned CG thus can be far smaller than the $O(np^2 + p^3)$ cost of the standard method as $s \ll p$ in many applications.\footnote{%
		While this is a useful qualitative comparison, we also note that the number of floating point operations is an imperfect proxy for the actual computing time on modern hardware.
		See Supplement Section~\ref{sec:flop_vs_actual_time}.
	}

\subsection{Practical details on deploying CG for sparse regression}
\label{sec:cg_sampler_practical_details}
	While prior preconditioning is undoubtedly the most essential ingredient, there remain a few more important details in applying the CG sampler to sparse regression posterior computation. These are 1) a choice of the initial CG vector $\bbeta_0$, 2) a termination criterion for CG, and 3) handling of regression coefficients with uninformative priors. We discuss them briefly here and defer more thorough discussions to Supplement Section~\ref{supp:cg_sampler_practical_details}.

	A choice of the initial vector has little effect on the eventual exponential convergence rate of CG and, while not to be neglected, is nowhere as consequential as that of the preconditioner \citep{meurant2006lancoz-and-cg}. In fact, we find that any reasonable choice such as $\bbeta_0 = \bm{0}$ works fine in our numerical results, with more elaborate choices resulting in $\lesssim 10\%$ improvement in performance (Section~\ref{supp:cg_init_vector}).

	In its typical applications, CG is terminated when the $\ell^2$-norm of the residual $\br_k = \bPhi \bbeta_k - \bb$ falls below some prespecified threshold. Utility of $\| \br_k \|$ as an error metric is dubious for the purpose of the CG sampler, however.
	We instead propose the prior-preconditioned residual $\bm{\tilde{r}}_k = \btilPhi \bm{\tilde{\beta}}_k - \bm{\tilde{b}}$ as a more tailored alternative, its squared norm being an approximate upper bound to $\sum_j \secmom_j^{-2} \left( \bbeta_k - \bbeta \right)_j^2$ with $\xi_j^2 = \mathbb{E} \big[ \, \beta_j^2 \given \bomega, \blshrink, \tau, \y, \X \big]$ (Section~\ref{supp:termination_criteria_for_cg_sampler}).
	Specifically, we use and validate the termination criterion $p^{-1/2} \| \bm{\tilde{r}}_k  \|_{2} \leq 10^{-6}$ in our numerical studies.

	When preconditioning CG, regression coefficients with uninformative priors, such as the intercept, must be handled differently from those under shrinkage.
	We can accommodate such coefficients by augmenting the prior-preconditioner with another diagonal matrix.
	We analyze the eigenvalues of the resulting preconditioned matrix and show that, by virtue of CG's ability to quickly remove the outlying eigenvalues (Rule of Thumb~\ref{quasi-thm:cg_convergence}), the convergence rate remains fast and is robust to the precise choice of the diagonal matrix (Section~\ref{supp:preconditioning_unshrunken_coef}).

\section{Simulation study of CG sampler performance}
\label{sec:cg_sampler_demonstration}
	  We study the CG sampler performance when applied to actual posterior conditional distributions of the form \eqref{eq:beta_conditional}.
	  We specifically focus on the prior-preconditioned CG's performance in solving the linear system \eqref{eq:linear_system_for_cg_sampler} since this directly translates into the performance of the CG-accelerated Gibbs sampler.\footnote{
	  	We confirm in Supplement Section~\ref{sec:cg_sampler_accuracy_and_speed_for_synthetic_data} that samples generated by the CG sampler is statistically indistinguishable from those generated by the direct linear algebra method.
	  	Also in Section~\ref{sec:cg_sampler_accuracy_and_speed_for_synthetic_data}, we show how the CG sampler's performance demonstrated here translates into actual gains in terms of computing time.
	  }
	  We simulate data with varying numbers of non-zero coefficients and confirm how sparsity in regression coefficients translates into faster CG convergence as predicted by Theorem~\ref{thm:clustering_of_eigenvalues} and Rule of Thumb~\ref{quasi-thm:sparsity_and_cg_convergence}. We also illustrate how the convergence rates are affected by different preconditioning strategies and by corresponding eigenvalue distributions of the preconditioned matrices.

\subsection{Choice of shrinkage prior: Bayesian bridge}
\label{sec:bayesian_bridge}
	Among existing global-local shrinkage priors, we adopt the Bayesian bridge prior of \cite{polson2014bayes_bridge} as the corresponding Gibbs sampler allows for collapsed updates of $\tau$ to improve mixing \citep{polson2014bayes_bridge}.
	The Bayesian bridge Gibbs sampler is in fact uniformly ergodic when the prior tails are properly modified \citep{nishimura2019regularized_shrinkage}.

	Under the Bayesian bridge, the local scale $\lambda_j$'s are given a prior $\pi(\lambda_j) \propto \lambda_j^{-2} \pi_{\rm st}(\lambda_j^{-2} / 2)$ where $\pi_{\rm st}(\cdot)$ is an alpha-stable distribution with index of stability $\alpha / 2$.
	The corresponding prior on $\beta_j \given \tau$, when $\lambda_j$ is marginalized out, is
	\begin{equation*}
	\pi(\beta_j \given \tau)
		\propto \tau^{-1} \exp \left( - | \beta_j / \tau |^\alpha \right).
	\end{equation*}
	The distribution of $\beta_j \given \tau$ becomes ``spikier'' as $\alpha \to 0$, placing greater mass around $0$ while inducing heavier tails.
	In typical applications, the data favors the values $\alpha < 1$ but only weakly identifies $\alpha$  \citep{polson2014bayes_bridge}, so in this article we simply fix $\alpha = 1/2$ except when a smaller value seems warranted; see Section~\ref{sec:prior_and_computation}.

\subsection{Experimental set-up}
\label{sec:simulation_setup}
	We generate synthetic data of sample size $n = 25{,}000$ with the number of predictors $p = 10{,}000$. In constructing a design matrix $\X$, we emulate a model from factor analysis \citep{jolliffe2002principal}. We first sample a set of $m = 99$ orthonormal vectors $\bm{u}_1, \ldots, \bm{u}_{m} \in \mathbb{R}^p$ uniformly from a Stiefel manifold. We then set the predictor $\x_i$ for the $i$-th observation as
	\begin{equation}
	\label{eq:factor_model_design}
	\x_i = \sum_{\ell = 1}^{99} f_{i, \ell} \bm{u}_\ell + \bm{\epsilon}_i
		\ \text{ for }
		f_{i, \ell} \sim \normal \left(0, (100 - \ell + 1)^2 - 1 \right)
		\text{ and }
		\bm{\epsilon}_i \sim \normal \left( \bm{0}, \I_p \right).
	\end{equation}
	This is equivalent to sampling $\x_i \sim \normal \! \left(\bm{0}, \bm{U} \bm{D} \bm{U}^\transpose \right)$ for a diagonal matrix $\bm{D}$ with $\sqrt{D_{\ell \ell}} = \max \{100 - \ell + 1, 1\}$ and and orthonormal matrix $\bm{U}$ sampled uniformly from the space of orthonormal matrices. We then center and standardize the predictors as is commonly done before applying sparse regression \citep{hastie2009statistical-learning}.

	The above process yields a design matrix $\X$ with moderate correlations among the $p$ predictors --- the distribution of pairwise correlations is approximately Gaussian centered around 0 with the standard deviation of 0.13. Based on this design matrix $\X$, we simulate three different binary outcome vectors by varying the number of non-zero regression coefficients. More specifically, we consider a sparse regression coefficient $\bbeta_{\rm true}$ with $\beta_{\textrm{true}, \, j} = \mathds{1}\{ j \leq \nSignal\}$ with varying numbers of signals $\nSignal = 10, 20$, and $50$. In all three scenarios, the binary outcome $\y$ is generated from the logistic model as $y_i \given \bbeta_{\rm true}, \x_i \sim \textrm{Bernoulli}(p_i)$ for $\textrm{logit}(p_i) = \x_i^\transpose \bbeta_{\rm true}$.

	For each synthetic data set, we obtain a posterior sample of $\bomega, \tau, \blshrink \given \y, \X$ by running the Polya-Gamma augmented Gibbs sampler with the brute-force direct linear algebra to sample $\bbeta$ from its conditional distribution \eqref{eq:beta_conditional}. We confirm the convergence of the Markov chain by examining the traceplot of the posterior log-density of $\bbeta, \tau \given \y, \X$. 
	Having obtained a posterior sample ($\bomega, \tau, \blshrink$), we sample the vector $\bm{b}$ as in \eqref{eq:cg_sampler_target_vector} and apply CG to the linear system \eqref{eq:linear_system_for_cg_sampler}. We compare the CG iterates $\{\bbeta_k \}_{k \geq 0}$ to the exact solution $\bbeta_{\textrm{direct}}$ obtained by solving the same system with the Cholesky-based direct method. We repeat this process for eight random replications of the right-hand vector $\bm{b}$.

\subsection{Results}
\label{sec:cg_convergence_on_simulated_data}

\subsubsection*{Convergence rates and eigenvalue distributions}
\label{sec:convergence_rate_and_eig_dist_on_simulated_data}
	Figure~\ref{fig:cg_convergence_plot_for_simulated_data} shows the CG approximation error as a function of the number of CG iterations, whose cost is dominated by matrix-vector multiplications $\bv \to \bPhi \bv$.
	 We characterize the approximation error as the relative error $| (\bbeta_k - \bbeta_{\textrm{direct}})_j / (\bbeta_{\textrm{direct}})_j |$ averaged across all the coefficients.
	 Each line on the plot shows the geometric average of this error metric over the eight random replications of \nolinebreak $\bm{b}$.
	 The CG convergence behavior observed here remains qualitatively similar regardless of the error metric choice and varies little across the different right-hand vectors;
	 see Supplement Section~\ref{supp:alt_cg_convergence_plot_for_simulated_data}.
	 We also observe there that, while the error $| (\bbeta_k - \bbeta_{\textrm{direct}})_j / (\bbeta_{\textrm{direct}})_j |$ varies substantially across the index $j$, the coefficient-specific errors all decay at roughly uniform rates as a function of the number of CG iterations.
	\begin{figure}
		\centering
		\includegraphics[width=.7\linewidth]{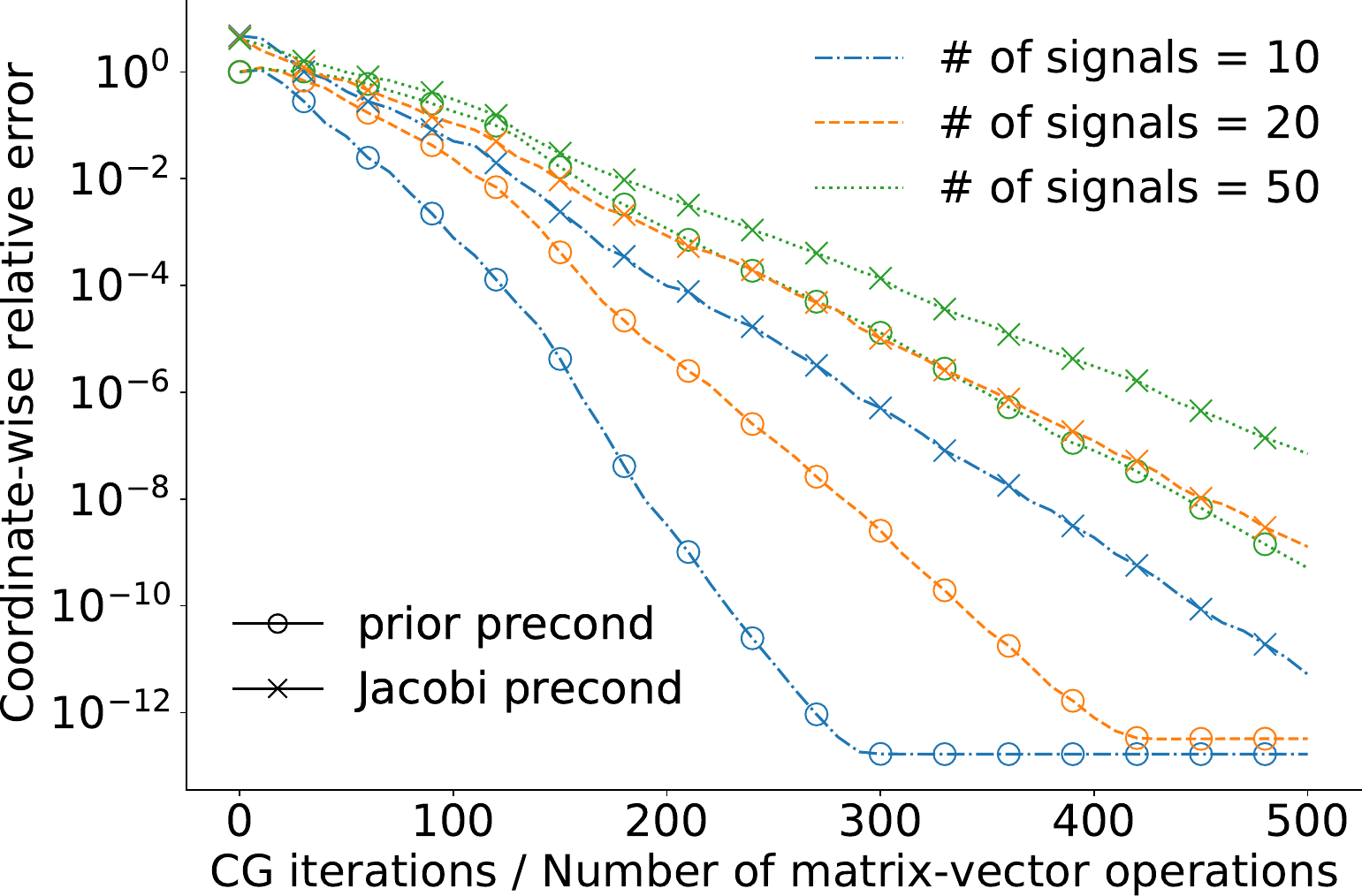}
		\caption{Plot of the CG approximation error vs.\ the number of CG iterations. The CG sampler is applied to the posterior conditionals based on synthetic data. The different line styles correspond to the different numbers of true signals in underlying data. The circle and cross markers denote the uses of the prior and Jacobi preconditioners. }
		\label{fig:cg_convergence_plot_for_simulated_data}
	\end{figure}

	We first focus on the approximation errors under the prior preconditioner, indicated by the lines with circles. After $k \ll p = 10{,}000$ matrix-vector operations, the distance between $\bbeta_k$ and $\bbeta_{\textrm{direct}}$ is already orders of magnitudes smaller than typical Monte Carlo errors.
	With additional CG iterations, the distance reaches the machine precision level; notice the eventual ``plateaus'' achieved under the prior preconditioner in the $\nSignal = 10 \text{ and } \nSignal = 20$ cases.

	Figure~\ref{fig:cg_convergence_plot_for_simulated_data} also shows the approximation errors under the Jacobi preconditioner $\M = \textrm{diag}(\Phi_{11}, \ldots, \Phi_{pp})$ which, as discussed in Section~\ref{sec:prior_preconditioning}, is the only reasonable alternative when using the CG sampler for the applications considered in this article.
	The prior preconditioner is clearly superior, with the difference in convergence speed more pronounced when true regression coefficients are sparser. Studying the eigenvalue distributions of the respective preconditioned matrices provides further insight into the observed convergence behaviors. Figure~\ref{fig:eigval_distributions_for_data_simulated_with_10_correlated_signal} (a) \& (b) show the eigenvalue distributions of the preconditioned matrices based on a posterior sample from the synthetic data with $\nSignal = 10$. The trimmed version of the histograms highlight the tails of the distributions. The prior preconditioner induces the distribution with a tight cluster around 1 (or 0 in the $\log_{10}$ scale) with a relatively small number of large ones, confirming the theory developed in Section~\ref{sec:theory_of_prior_preconditioning}. On the other hand, the Jacobi preconditioner induces a more spread-out distribution, problematically introducing quite a few small eigenvalues that delay the CG convergence (Rule of Thumb~\ref{quasi-thm:cg_convergence}).

	\begin{figure}
		\begin{minipage}{.6\linewidth}
		\subfigure[
			Based on synthetic data with 10 non-zero coefficients.
		]{
			\includegraphics[width=.95\linewidth]{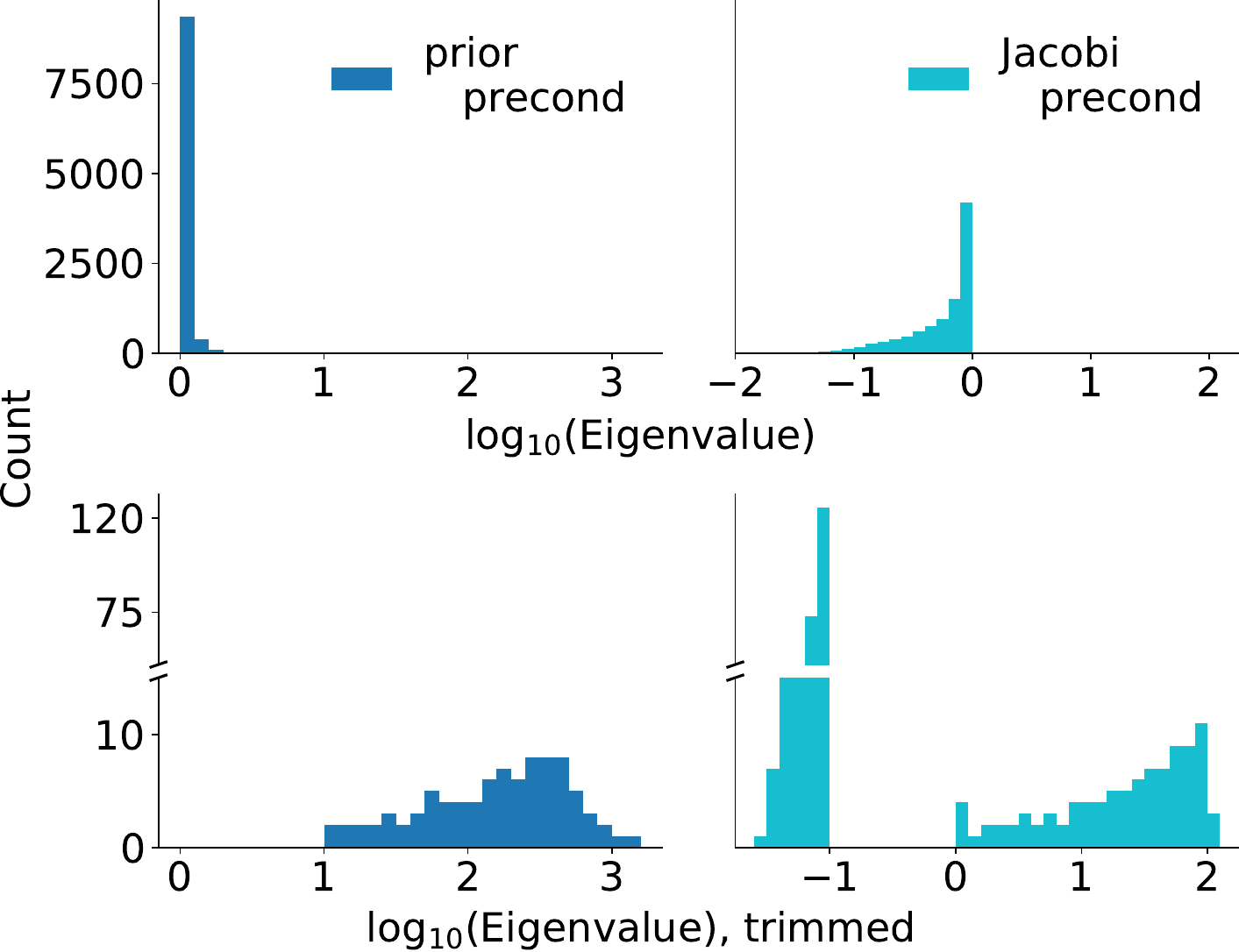}
		}

		\subfigure[Based on synthetic data with 50 non-zero coefficients.]{
			\includegraphics[width=.95\linewidth]{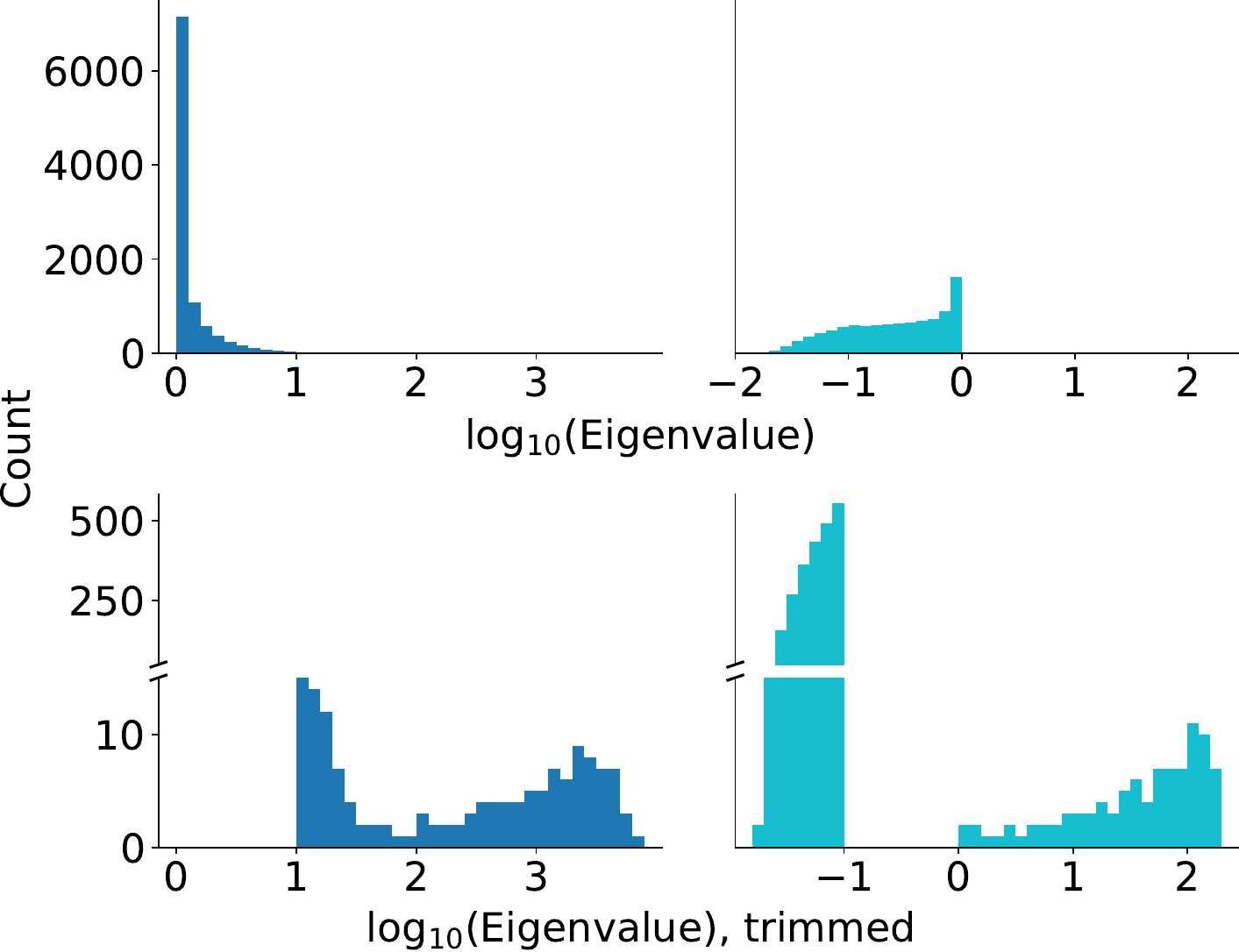}
		}
		\end{minipage}
		\begin{minipage}{.38\linewidth}
		\caption{Histograms of the eigenvalues of the preconditioned matrices.
		The eigenvalues under the prior preconditioner are shown on the left and those under the \mbox{Jacobi} on the right. Shown on the lower rows are the trimmed versions of the histograms, in which we remove the eigenvalues in the range $[0, 1]$ in the $\log_{10}$ scale for the prior preconditioner and those in the range $[-1, 0]$ for the \mbox{Jacobi}. The $y$-axes for the trimmed histograms have intermediate values removed to make small counts more visible. The width of the bins are kept constant throughout so that the $y$-axis values of the bars are proportional to probability densities and thus can be compared meaningfully across the plots.}
		\end{minipage}
		\label{fig:eigval_distributions_for_data_simulated_with_10_correlated_signal}
	\end{figure}

\subsubsection*{Relationship between convergence rate and posterior sparsity}
	Finally, we turn our attention to the relationship, as seen in Figure~\ref{fig:cg_convergence_plot_for_simulated_data}, between CG convergence rate and sparsity in the underlying true regression coefficients. The convergence is clearly quicker when the true regression coefficients are sparser. To understand this relationship, it is informative to look at the values of $\tau \lshrink_j = \textrm{var}(\beta_j \given \tau, \blshrink)^{1/2}$ drawn from the respective posterior distributions. Figure~\ref{fig:prior_sd_plot_with_varying_number_of_signals} plots the values of $\tau \lshrink_j$ for $j = 1, \ldots, 250$ corresponding to the first 250 coefficients. We use two different $y$-scales for $\nSignal = 10$  and $\nSignal = 50$, shown on the left and right respectively, to facilitate qualitative comparison between the two cases. As expected, the posterior sample from the synthetic data with a larger number of signals has a larger number of $\tau \lshrink_j$'s away from zero. These relatively large $\tau \lshrink_j$'s contribute to the delayed convergence of CG (Theorem~\ref{thm:clustering_of_eigenvalues} and Rule of Thumb~\ref{quasi-thm:cg_convergence}).

	\begin{figure}
		\hspace*{-.01\linewidth}
		\begin{minipage}{.55\linewidth}
			\includegraphics[width=\linewidth]{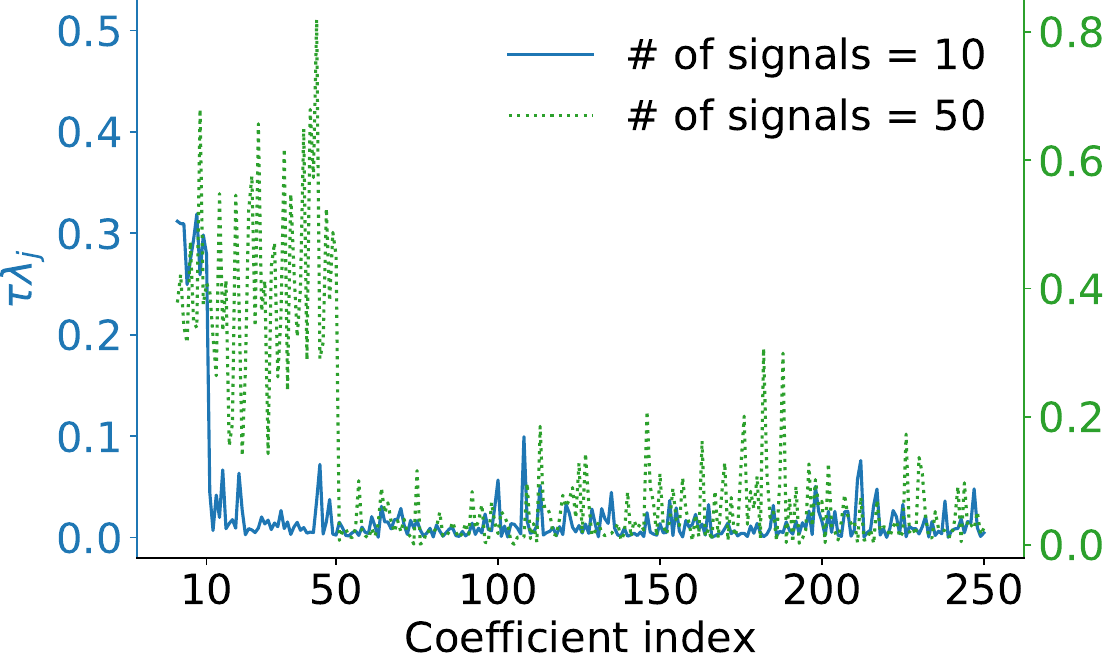}
		\end{minipage}
		~~
		\begin{minipage}{.4\linewidth}
			\vspace*{-.5\baselineskip}
			\caption{Plot of the posterior samples of $\tau \lshrink_j$'s for $j = 1, \ldots, 250$. The solid blue line and dashed green line correspond to the data sets simulated with $\beta_{\mathrm{true}, j} = 1$ for $j \leq 10$ and for $j \leq 50$ respectively.}
			\label{fig:prior_sd_plot_with_varying_number_of_signals}
		\end{minipage}
	\end{figure}

	A more significant cause of the delay, however, is the fact that the shrinkage prior yields weaker shrinkage on the zero coefficients when there are a larger number of signals. With a close look at Figure~\ref{fig:prior_sd_plot_with_varying_number_of_signals}, one can see that $\tau \lshrink_1, \ldots, \tau \lshrink_\nSignal$ corresponding to the true signals are not as well separated from the rest of $\tau \lshrink_j$'s when $\nSignal = 50$. In fact, the histograms on the left of Figure~\ref{fig:prior_sd_hist_with_varying_number_of_signals} shows that the distribution of $\tau \lshrink_j$'s for $\nSignal = 50$ are shifted toward larger values compared to that for $\nSignal = 10$. This is mostly due to the posterior distribution of $\tau$ concentrating around a larger value --- the value of the posterior sample is $\tau \approx 2.0 \times 10^{-3}$ for the $\nSignal = 10$ case while $\tau \approx 6.7 \times 10^{-3}$  for the $\nSignal = 50$ case.

	It is also worth taking a closer look at the tail of the distribution of $\tau \lshrink_j$'s. The histograms on the right of Figure~\ref{fig:prior_sd_hist_with_varying_number_of_signals} show the distribution of the 250 largest $\tau \lshrink_j$'s. The figure makes it clear that $\tau \lshrink_j$'s corresponding to the true signals are much more well separated from the rest when $\nSignal = 10$. Overall, the slower decay in the largest values of $\tau \lshrink_{j}$'s results in the eigenvalues of the preconditioned matrices having a less tight cluster around 1; compare the eigenvalue distributions of Figure~\ref{fig:eigval_distributions_for_data_simulated_with_10_correlated_signal}.(a) \& (b) to those of Figure~\ref{fig:eigval_distributions_for_data_simulated_with_10_correlated_signal}.(c) \& (d).

	\begin{figure}
		\hspace*{-.01\linewidth}
		\subfigure{
			\includegraphics[height=.21\textheight]{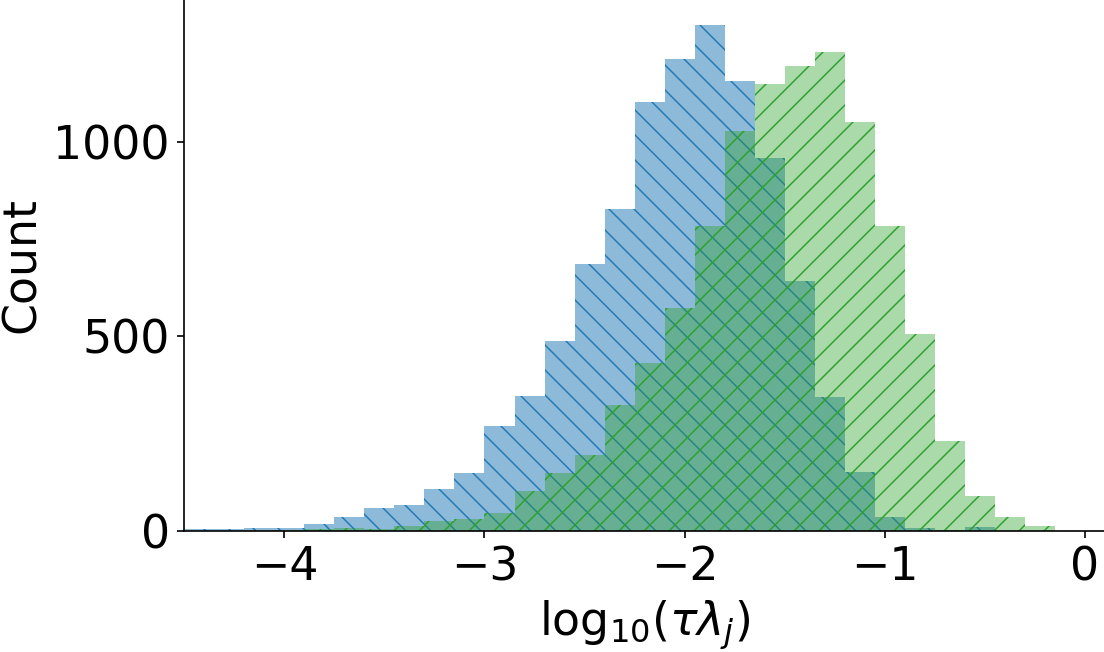}
		}
		\subfigure{
			\includegraphics[height=.21\textheight]{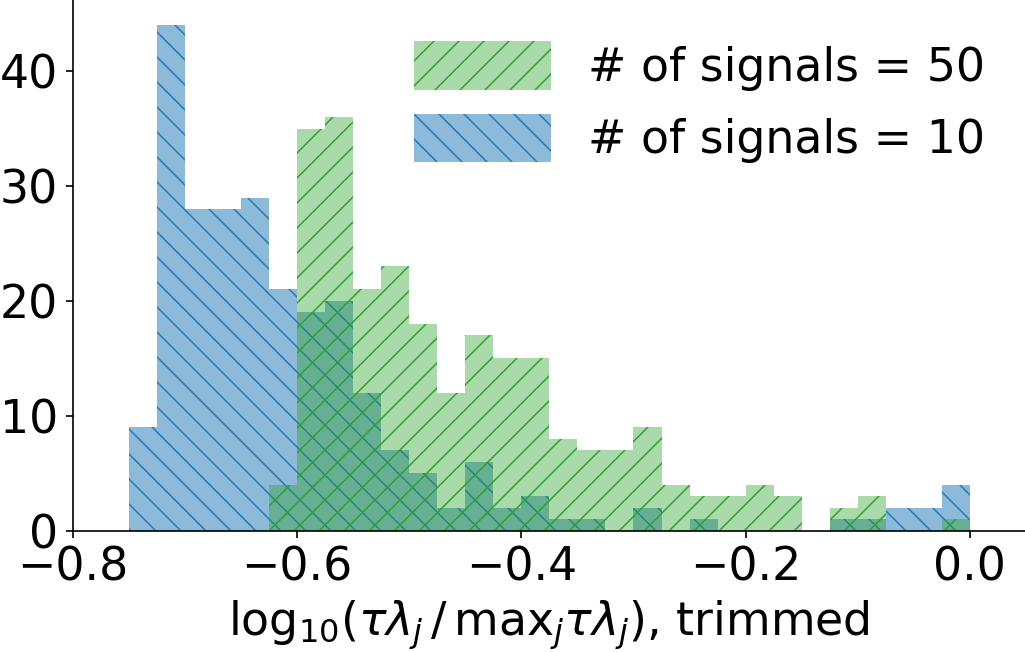}
		}
		\vspace*{-.5\baselineskip}
		\caption{%
			Histograms of the posterior samples of $\tau \lshrink_j$'s.
			The two different colors indicate the distinct posteriors with 10 and 50 non-zero regression coefficients.
			To better expose the relative tail behaviors in the two distributions of $\tau \lshrink_j$'s, the histograms on the right take the 250 largest values and plot their magnitudes relative to $\max_j \tau \lshrink_j$.
		}
		\label{fig:prior_sd_hist_with_varying_number_of_signals}
	\end{figure}

\subsubsection*{Comments on generalizability of conclusions from simulation study}
	We conclude by noting that the convergence rate of the CG sampler is also a function of signal strengths and correlation among the predictors, both of which affect the posterior sparsity in regression coefficients.
	For example in the propensity score model application of Section~\ref{sec:application}, despite 82 regression coefficients having posterior means of substantial magnitudes, the prior-preconditioned CG converges after $107 \sim 120$ iterations in 95\% of the cases.
	We also confirm that, when using a synthetic design matrix with independent columns, the CG sampler demonstrates much faster rates of convergence for the same numbers of signals (Supplement Section~\ref{sec:effects_of_correlation}).
	On the other hand, a synthetic design matrix with correlation structure more extreme than \eqref{eq:factor_model_design} leads to slower convergence rates for the same numbers of signals (Supplement Section~\ref{sec:effects_of_num_factors}).
	Finally, the posterior sparsity structure, and hence the CG sampler's performance, is also affected by a choice of shrinkage prior. 
	How this choice affects the posterior sparsity is difficult to quantify.
	The additional simulation studies using different priors (Supplement Section~\ref{supp:shrinkage_prior_choice_and_cg_performance} and \ref{supp:shrinkage_prior_choice_and_cg_performance_on_ohdsi_data}) indicate, however, that the main takeaway holds regardless: the sparser the posterior, the faster the CG sampler's convergence.



\newcommand{\ohdsi}{\textsc{ohdsi}}

\section{Application: comparison of alternative treatments}
\label{sec:application}
	In this section, we demonstrate the magnitude of speed-up delivered by CG-acceleration in modern large-scale applications.
	We apply Bayesian sparse logistic regression to conduct a comparative study of two blood anti-coagulants \textit{dabigatran} and \textit{warfarin}.
	The goal of the study is to quantify which of the two drugs have a lower risk of a potential side effect, gastrointestinal bleeding. This question has previously been investigated by \cite{graham2014dabigatran_vs_warfarin} and our analysis yields clinical findings consistent with theirs (Section~\ref{sec:dabigatran_vs_warfarin_results}).

	We are particularly interested in Bayesian sparse regression as a tool for the Observational Health Data Sciences and Informatics (\ohdsi{}) collaborative \citep{hripcsak2015ohdsi}.
	We therefore follow the \ohdsi{} protocol in pre-processing of the data as well as in estimating the treatment effect.
	In particular, sparse regression plays a critical role in eliminating hand-picking of confounding factors and of subgroups for testing treatment effect heterogeneity; this enables the application of a reproducible and consistent statistical estimation procedure to tens of thousands of observational studies \citep{tian18, schuemie2020legend}.

\subsection{Data set}
\label{sec:data_description}
	We extract patient-level data from Truven Health MarketScan Medicare Supplemental and Coordination of Benefits Database. In the database, we find $n = 72{,}489$ patients who became first-time users of either dabigatran or warfarin after diagnosis of atrial fibrillation. Among them, 19,768 are treated with dabigatran and the rest with warfarin. There are $p = 98{,}118$ predictors, consisting of clinical measurements, pre-existing conditions, as well as prior treatments and administered drugs --- all measured before exposure to dabigatran or warfarin. Following the \ohdsi{} protocol, we screen out the predictors observed in less than $0.1$\% of the cohort.  This reduces the number of predictors to $p = 22{,}175$. The precise definition of the cohort can be found
	at \url{http://www.ohdsi.org/web/atlas/#/cohortdefinition/{2978, 2979, 2981}}.

	Each patient is affected by only a small fraction of the potential pre-existing conditions and available treatments.
	The design matrix $\X$ therefore is sparse, with only 4\% of the entries being non-zero.
	Another noteworthy feature of the data is the low incidence rates of gastrointestinal bleeding; the outcome indicator $\y$ has non-zero entries $y_i = 1$  for only 713 out of 72,489 patients.

\subsection{Statistical approach: propensity score stratified regression}
\label{sec:treatment_effect_via_sparse_regression}
	To control for covariate imbalances between the dabigatran and warfarin users, we rely on propensity score method in estimating the treatment effect.
	The procedure involves two logistic models with large numbers of predictors, to deal with which we employ Bayesian sparse regression.
	We describe the procedure and essential ideas below but refer the readers to \cite{stuart2010causal-inference}, and the references therein for further details.

	Estimation of the treatment effect proceeds in two stages. First, the \textit{propensity score} $\mathbb{P}(\, T_i = 1 \given \x_i)$ of the treatment assignment to dabigatran is estimated by the logistic model
	\begin{equation}
	\label{eq:propensity_score_model}
	\textrm{logit} \! \left\{\mathbb{P}(\, T_i = 1 \given \x_i) \right\}
		= \beta_0 + \x_i^\transpose \bbeta.
	\end{equation}
	While not of direct interest within the propensity score method framework, identifying significant predictors of the score is highly relevant in the \ohdsi{} applications.
	Many of the databases are too small to fit the models with such large numbers of predictors, but the significant heterogeneity among them makes the joint estimation insensible \citep{hripcsak2016treatment_pathway}.
	Sparse regression provides a tool to screen out the predictors using the larger databases and use only the selected subset to estimate the scores within the smaller databases.

	After fitting the model \eqref{eq:propensity_score_model}, the quantiles of the estimated propensity scores are then used to stratify the population into subpopulations of equal sizes. Following a typical recommendation, we choose the number of strata as $M = 5$. Under suitable assumptions, conditioning on the strata indicator removes most of imbalances in the distributions of the predictors between the treatment ($T_i = 1$) and control ($T_i = -1$) groups.
	After the stratification, we can proceed to estimate the treatment effect via the logistic model without the main effect from the clinical covariate $\x_i$ \citep{tian2014simple_hete}:
	\begin{equation}
	\label{eq:treatment_effect_model}
	\textrm{logit} \! \left\{\mathbb{P}(\, y_i = 1 \given \balpha, \bgamma, s_i, \x_i) \right\}
		= \sum_{m = 1}^M \alpha_m \mathds{1}\!\left\{ s_i = m \right\}
				+ (\alpha_0 + \x_i^\transpose \bgamma) \, T_i,
	\end{equation}
	where a categorical variable $s_i$ denotes the strata membership of the $i$-th individual.
	The quantity $\alpha_0 + \x^\transpose \bgamma$ represents the treatment effect for a patient with covariate $\x$, with the feature $x_{ij}$ contributing to the treatment effect heterogeneity when $\gamma_j \neq 0$.
	The goal of sparse regression here is to identify such nonzero $\gamma_j$'s.


\subsection{Prior choice and posterior computation}
\label{sec:prior_and_computation}
	We fit the models \eqref{eq:propensity_score_model} and \eqref{eq:treatment_effect_model} using the Bayesian bridge shrinkage prior (Section~\ref{sec:bayesian_bridge}).
	For the main treatment and propensity score strata effects, we place weakly informative $\normal(0, 1)$ priors.
	For the global scale parameter, we use an objective prior $\pi(\tau) \propto 1 / \tau$ in the model \eqref{eq:propensity_score_model} \citep{berger2009reference-priors}.
	For the treatment effect model \eqref{eq:treatment_effect_model}, due to the low incidence rate in the outcome, we find the above prior choice to provide insufficient separation of significant predictors from the rest.
	We therefore use the bridge prior with $\alpha = 1 / 4$ and weakly informative conjugate prior $\phi = \tau^{-\alpha} \sim \textrm{Gamma}(\textrm{shape} = 339.8, \textrm{rate} = 26.58)$ so that $\log_{10}(\tau)$ has the prior mean of $-1.5$ and standard deviation of $0.5$.

	For posterior computation, we compare two Gibbs samplers that differ only in their methods for drawing $\bbeta$ from the conditional distribution \eqref{eq:beta_conditional}. One sampler uses the proposed CG sampler while the other uses a traditional direct method via Cholesky factorization. Sparse Cholesky methods offer no computational benefit here as the precision matrix, despite the sparsity in the design matrix $\X$, is almost completely dense (Supplement Section~\ref{sec:optimizing_linear_algebra}). We refer to the respective samplers as the \textit{CG-accelerated} and \textit{direct} Gibbs sampler. The other conditional updates follow the approaches described in \cite{polson2014bayes_bridge}; see Supplement Section~\ref{sec:gibbs_sampler_details} for the details.

	We implement the Gibbs samplers in Python and run on a 2015 iMac with an Intel Core i7 ``Skylake'' processor having four cores at 4 \textsc{gh}z and 32 \gb{} of memory.
	Linear algebra being the computational bottleneck, both samplers benefit from parallelization and we engage all the four cores.
	For the linear algebra operations, we interface our Python code with the Intel Math Kernel Library (\mkl{}) implementations of Basic Linear Algebra Subprograms (\blas) and sparse \blas{}, which proved computationally superior to alternatives in our preliminary benchmarking. 
	We use the sparse \blas{} for matrix-vector multiplications $\bv \to \X \bv$ and $\bw \to \X^\transpose \bw$ within the CG-accelerated Gibbs and for matrix-matrix multiplication $\X^\transpose \bOmega \X$ within the direct Gibbs.
	Exploiting the sparsity in $\X$ cuts down both computing time and memory usage by an order of magnitude.
	Details on how we optimized both Gibbs sampler are described in Supplement Section~\ref{sec:optimizing_linear_algebra}.

	We run the Gibbs samplers for 5,500 and 11,000 iterations for the propensity score and treatment effect model, discarding the first 500 and 1,000 as burn-ins.
	We confirm their convergences by examining the traceplots of the posterior log-density.
	We estimate the effective sample sizes (ESS) for all the regression coefficients using the R package \textsc{coda}.
	The smallest ESSs are found among the coefficients with bimodal posteriors, but their traceplots nonetheless indicate reasonable mixing.
	We find the minimum and median ESS to be 106.2 and 2484 for the propensity score model, and 86.04 and 2496 for the treatment effect model.

\subsection{CG acceleration magnitudes and posterior characteristics}
\label{sec:speed_up_from_cg_acceleration_on_ohdsi_data}
	The direct Gibbs sampler requires 106 and 212 hours for the propensity score and treatment effect model. On the other hand, the CG-accelerated sampler finishes in 11.4 and 11.3 hours, yielding \textbf{9.3-fold} and \textbf{18.8-fold} speed-ups.
	For both Gibbs samplers, the total computation times are dominated by the conditional updates of $\bbeta$.
	The magnitudes of CG-acceleration thus are determined by the CG convergence rate at each Gibbs iteration.

	In agreement with the theory and empirical results of Section~\ref{sec:theory_of_prior_preconditioning} and \ref{sec:cg_convergence_on_simulated_data}, the variability in the magnitudes of CG-acceleration can be explained by the posterior sparsity structures of the regression coefficients.
	For the propensity score model, 82 out of the 22,175 regression coefficients have their posterior mean magnitudes above 0.1, while 18,187 (82.0\%) of the coefficients below 0.01.
	For the treatment effect model, only 2 of the coefficients have the posterior mean magnitudes above 0.1, while 22{,}096 (99.6\%) below 0.01.
	We note that the individual posterior samples are much less sparse than the posterior mean.
	Under the treatment effect model, for example, the number of coefficients with magnitudes above 0.1 typically ranges from 265 to 529 while those below 0.01 from 16,172 to 17,632.

	For more in-depth analysis of the CG-acceleration mechanism, in Supplement Section~\ref{supp:cg_acceleration_mechanism_details} we examine the CG sampler behavior at each Gibbs iteration. In particular, we verify that the error metric discussed in Section~\ref{sec:cg_sampler_practical_details} works well in deciding when to terminate the CG iteration. We also confirm that the prior preconditioner continues to outperform the Jacobi in this real data example.

\subsection{Clinical conclusions from dabigatran vs.\ warfarin study}
\label{sec:dabigatran_vs_warfarin_results}
	The propensity score model finds substantial differences between the patients treated by dabigatran and warfarin. In particular, patients' covariate characteristics are predictive of the treatment assignments as seen in  Figure~\ref{fig:propensity_score_distribution}.
	\begin{figure}
		\begin{minipage}[t]{.48\linewidth}
			\vspace{0pt}
			\hspace{-.055\linewidth}
			\includegraphics[width=\linewidth]{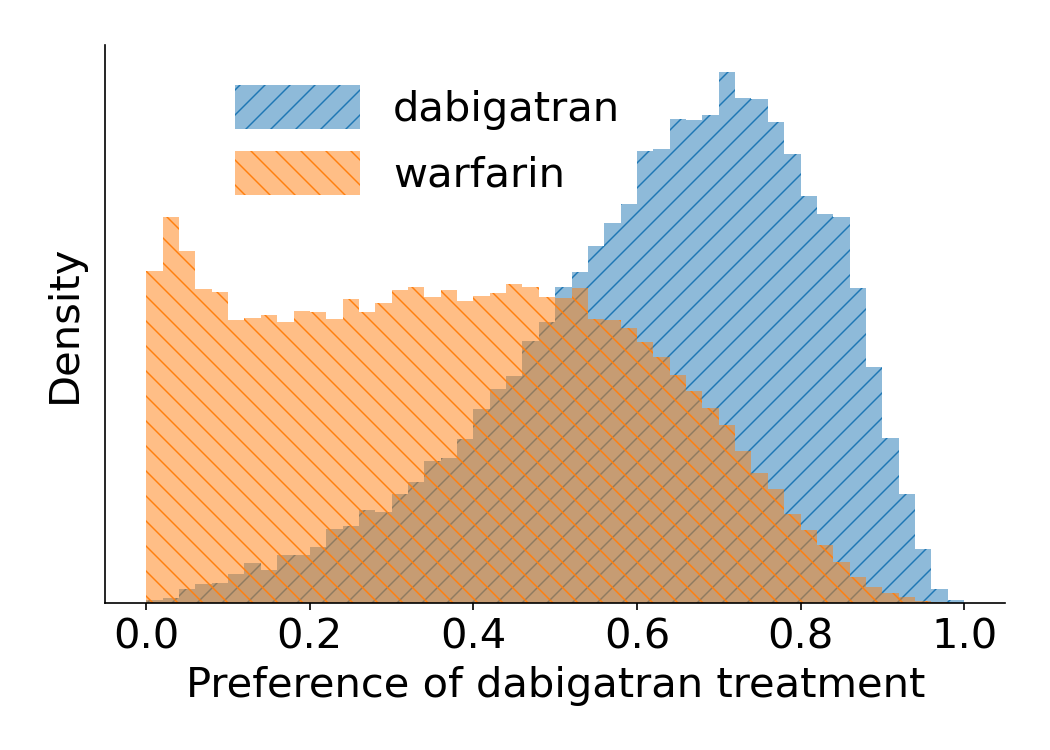}
			\vspace{-.25\baselineskip}
			\label{fig:propensity_score_distribution}
		\end{minipage}
		~
		\begin{minipage}[t]{.48\linewidth}
			\vspace*{\baselineskip}
			\caption{Normalized histogram of the estimated preference scores for each group. Preference score transforms raw propensity score to make it a more interpretable measure \citep{walker2013preference_score}.}
		\end{minipage}
	\end{figure}
	The two most significant predictors are the treatment year and age group. Both predictors have been encoded as binary indicators in the design matrix for simplicity, but the coefficients of categorical and ordinal predictors could have been estimated with shrinkage priors analogous to Bayesian grouped or fused lasso \citep{kyung2010bayes-penalized-regression, xu2015baye-group-lasso}. The posterior mean and 95\% credible intervals of the regression coefficients are shown in Figure~\ref{fig:posterior_mean_and_ci_for_propensity_model}. The figure plots the effect sizes relative to the year 2010 and the age group 65--69; when actually fitting the model, however, we use the most common category as a baseline for categorical variables.

	\begin{figure}
		\centering
		\subfigure{
			\includegraphics[height=.24\textheight]{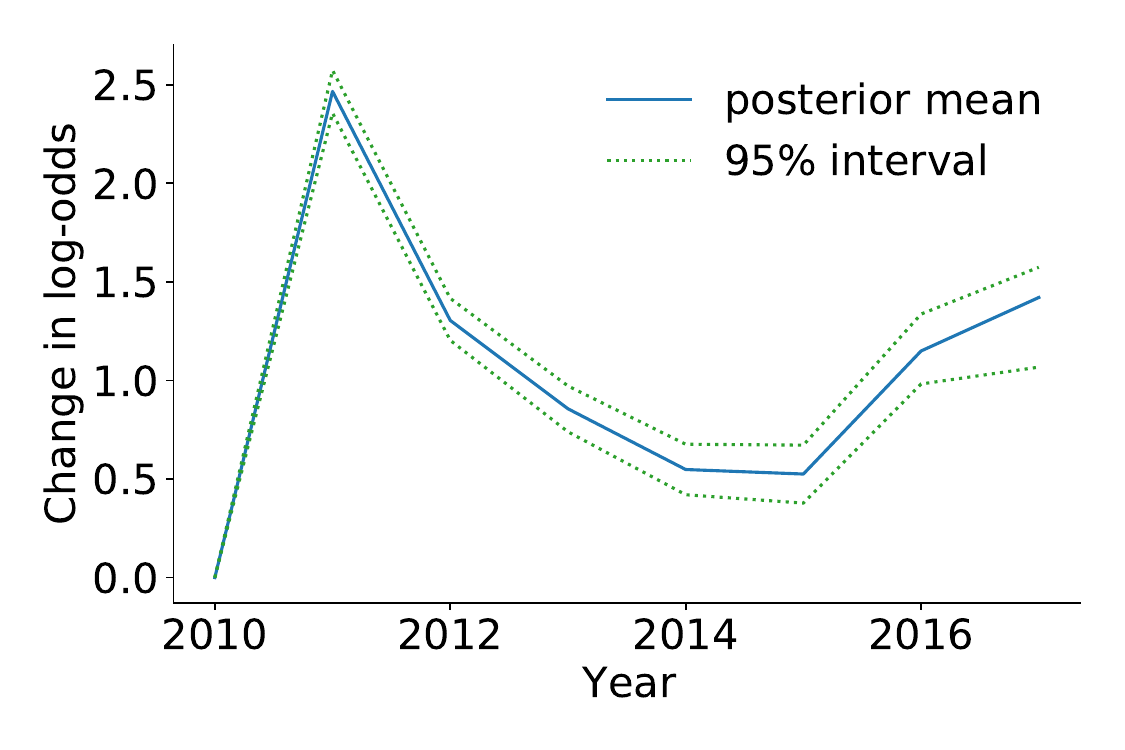}
		}
		\hspace{-.05\linewidth}
		\subfigure{
			\includegraphics[height=.24\textheight]{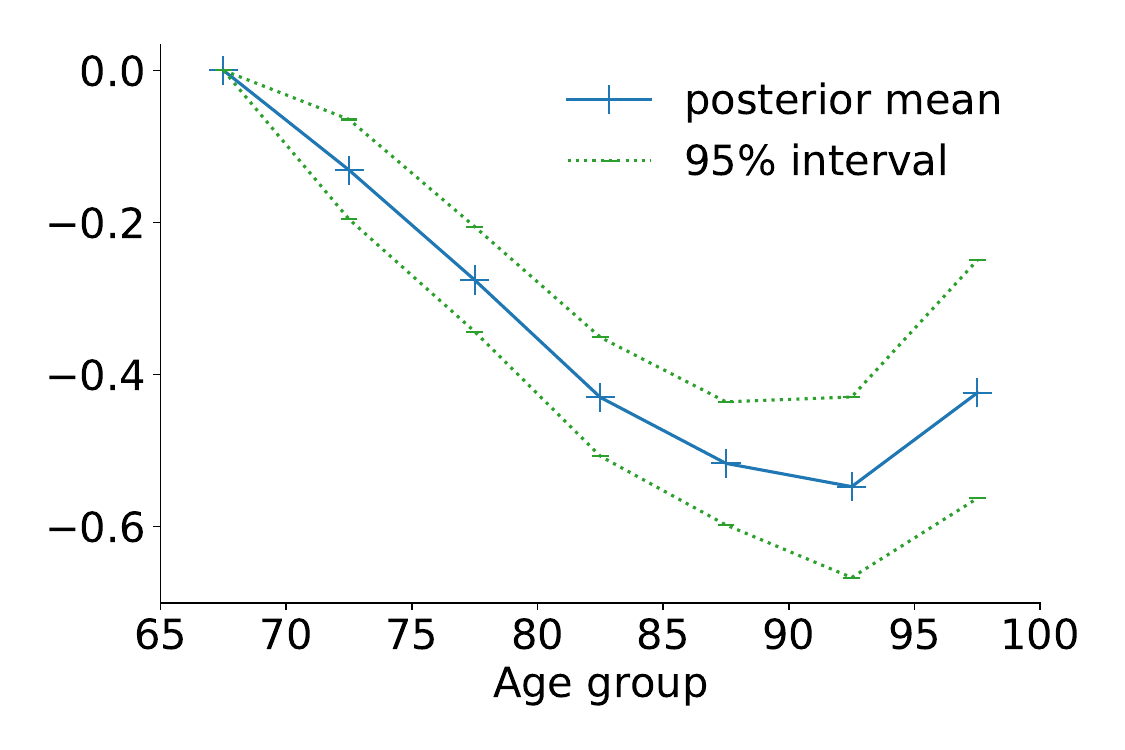}
		}
		\vspace{-1.75\baselineskip}
		\caption{Posterior means and 95\% credible intervals for the regression coefficients of the treatment year and age group indicators. The age groups are divided into 5-year windows.}
		\label{fig:posterior_mean_and_ci_for_propensity_model}
	\end{figure}

	For the treatment effect model, Figure~\ref{fig:treatment_effect_model_posterior}(a) shows the posterior distribution of the average treatment effect of dabigatran over warfarin.
	The posterior indicates an evidence for the lower incidence rate of gastrointestinal bleeding under dabigatran treatment, which is consistent with findings of \cite{graham2014dabigatran_vs_warfarin}.
	Remarkably, our sparse regression model identifies substantial interaction between the treatment and age group 65--69, with effect size potentially large enough to offset the average treatment effect.
	No other age groups exhibit significant interaction with the treatment.
	The 65--69 age group being the youngest in our Medicare cohort, our finding suggests a possibility that the relative risk of gastrointestinal bleeding only increases in the older patients.
	In fact, \cite{graham2014dabigatran_vs_warfarin} reports the risks from dabigatran and warfarin to be comparable for women  under 75 and men under 85 years old.
	A potential concern with their results is the lack of explanation on their choices of age thresholds.
	On the other hand, our subgroup detection approach based on sparse regression requires no arbitrary selection of subgroups and thus provides a more data-driven alternative to study treatment effect heterogeneity.

	\begin{figure}
		\centering
		\subfigure[
			Average effect of treatment by dabigatran over warfarin on gastrointestinal bleeding. 
		]{
			\includegraphics[height=.21\textheight]{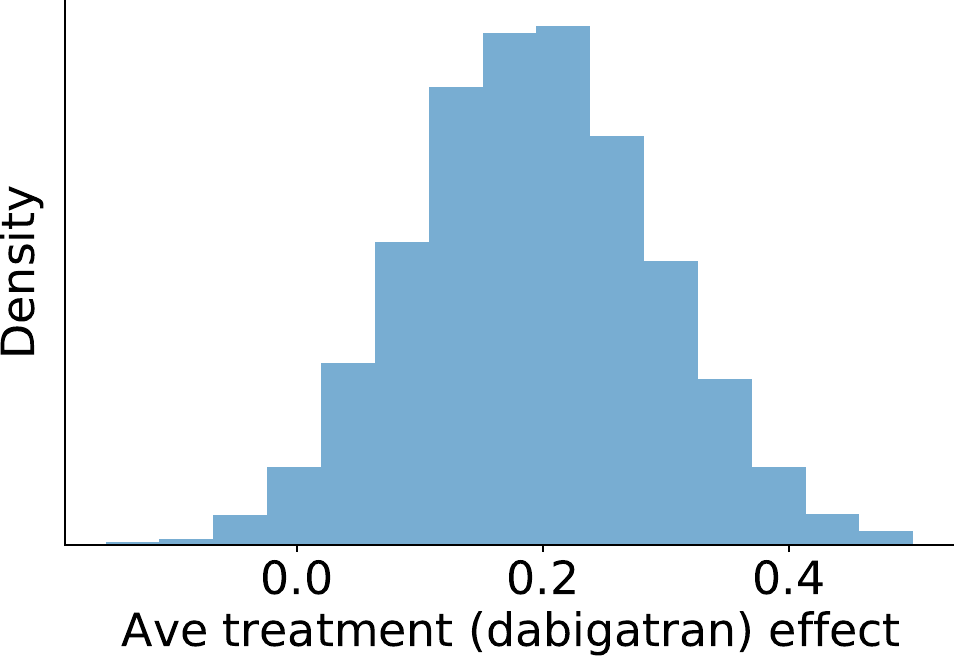}
		}
		~
		\subfigure[
			Coefficient of the interaction between treatment and age group 65--69.
		]{
			\includegraphics[height=.21\textheight]{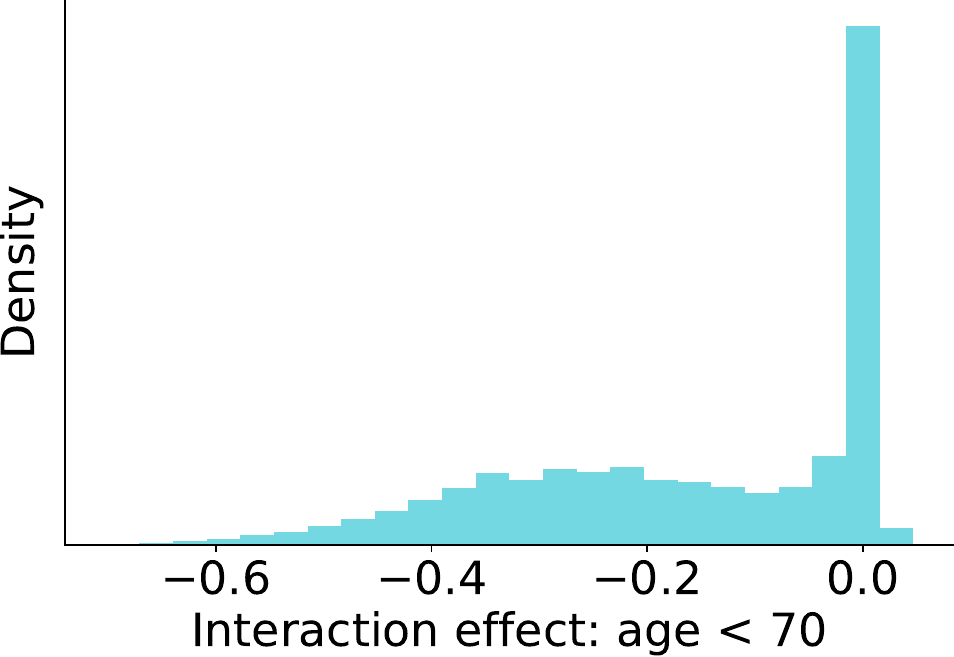}
		}
		\caption{Posterior distributions from the treatment effect model.}
		\label{fig:treatment_effect_model_posterior}
	\end{figure}

\section{Discussion}

In this article, we have developed theory and computational techniques to scale Bayesian sparse regression to a typical size of data in modern applications.
To our knowledge, our computational approach constitutes the first principled use of CG for the purpose of full Bayesian inference via MCMC.
The heuristic described in Section~\ref{sec:prior_preconditioning} suggests that prior-preconditioning may work well in other high-dimensional applications that call for structured, strongly informative prior.
For example, the application of CG to a Gaussian process model as explored by \cite{zhang2019practical-scalable-computing}
may benefit from prior-preconditioning.

As early as 1997, \citeauthor{gibbs1997effcient} emphasized the importance of avoiding expensive linear algebra operations, such as multiplying two matrices or factorizing a matrix, for Bayesian inference to be scalable.
Prior to our work, this desiderata had yet to be met for full Bayesian inference of sparse regression models.
Moreover, large-yet-sparse design matrices are increasingly common in modern applications; it is thus critical to design computational methods to exploit such sparse structure in the data \citep{friedman2010regularization_path}.
Our CG-accelerated Gibbs sampler is an important example to fill these notable gaps in the literature.

\if\blind0
\section{Acknowledgment}
We thank Yuxi Tian and Martijn Schuemie for their help in wrangling the data set used in Section~\ref{sec:application}.
We also thank Jianfeng Lu and Ilse Ipsen for useful discussions on linear algebra topics.
This work is partially supported by National Science Foundation grant DMS 1264153, National Institutes of Health grants U19 AI135995 and R01 AI153044, and Food and Drug Administration grant HHS 75F40120D00039.
\fi

\FloatBarrier
\begin{appendices}

\renewcommand{\thetable}{\thesection.\arabic{table}}
\renewcommand{\thefigure}{\thesection.\arabic{figure}}
\renewcommand{\theequation}{\thesection.\arabic{equation}}

\section{Proofs}
\label{sec:proofs}
	Before we proceed to proving Theorem~\ref{thm:clustering_of_eigenvalues}, we first derive Theorem~\ref{thm:cg_error_bound_under_prior_preconditioner} as its consequence.
	\begin{proof}[Theorem~\ref{thm:cg_error_bound_under_prior_preconditioner}]
		By Theorem~\ref{thm:cg_convergence_delay_by_large_eigenvalues}, the $(m + m')$-th CG iterate $\bbeta_{m + m'}$ satisfies the bound
		\begin{equation}
		\label{eq_for_proof:restatement_of_cg_convergence_delay_by_large_eigenvalues}
		\frac{
			\| \bbeta_{m + m'} - \bbeta \|_{\bPhi}
		}{
			\| \bbeta_0 - \bbeta \|_{\bPhi}
		}
		\leq 2 \left( \frac{
			\sqrt{\eigen_{m + 1} / \eigen_{p}} - 1
		}{
			\sqrt{\eigen_{m + 1} / \eigen_{p}}+ 1
		}
		\right)^{m'},
		\end{equation}
		where $\eigen_j$ denotes the $j$-th largest eigenvalue of $\bPhi$. By Theorem~\ref{thm:clustering_of_eigenvalues}, we know that
		\begin{equation*}
		1 \leq \eigen_p
			\leq \eigen_{m + 1}
			\leq 1 + \displaystyle{\min_{k + \ell = m}} \tau^2 \lshrink_{(k+1)}^2
			\eigen_{\ell + 1} \! \left( (\X^\transpose \bOmega \X )_{(-k)} \right)
			= \widetilde{\kappa}_{m}
		\end{equation*}
		and hence that $\eigen_{m + 1} / \eigen_{p} \leq \widetilde{\kappa}_{m}$. Since the function $\kappa \to (\sqrt{\kappa} - 1) / (\sqrt{\kappa} + 1)$ is increasing in $\kappa$, we can upper bound the right-hand side of \eqref{eq_for_proof:restatement_of_cg_convergence_delay_by_large_eigenvalues} in terms of $\widetilde{\kappa}_{m}$, yielding the desired inequality \eqref{eq:cg_error_bound_under_prior_preconditioner}.
	\end{proof}

	\begin{proof}[Theorem~\ref{thm:clustering_of_eigenvalues}]
	We prove the more general inequality \eqref{eq:prior_preconditioned_eigenvalue_bounds}. The lower bound $1 \leq \eigen_{k + \ell}(\btilPhi)$ is an immediate consequence of Proposition~\ref{prop:eigenvalue_of_perturbation_by_identity}. For the upper bound, first note that
	$\eigen_{k + \ell} (\btilPhi)
		\leq \eigen_{\ell} \! \left( \btilPhi_{(-k)} \right)$
	by the Poincar\'{e} separation theorem (Theorem~\ref{thm:poincare_separation}). From the expression \eqref{eq:prior_preconditioned_matrix} for $\btilPhi$, we have
	\begin{equation*}
	\begin{aligned}
	\eigen_{\ell} \! \left( \btilPhi_{(-k)} \right)
		&= \eigen_{\ell} \Big(
			\I_k + \tau^2 \bLshrink_{(-k)} (\X^\transpose \bOmega \X )_{(-k)} \bLshrink_{(-k)}
		\Big) \\
		&= 1 + \tau^2 \, \eigen_{\ell} \Big(
			 \bLshrink_{(-k)} (\X^\transpose \bOmega \X)_{(-k)} \bLshrink_{(-k)}
		\Big),
	\end{aligned}
	\end{equation*}
	where the second equality follows from Proposition~\ref{prop:eigenvalue_of_perturbation_by_identity}.	Applying Lemma~\ref{lem:multiplicative_perturbation_of_eigenvalues} with $\bA = (\X^\transpose \bOmega \X)_{(-k)}$ and $\bB = \lshrink_{(k+1)}^{-2} \bLshrink_{(-k)}^2$, we obtain
	\begin{equation*}
	\eigen_{\ell} \! \left( \btilPhi_{(-k)} \right)
		\leq 1 + \tau^2 \lshrink_{(k+1)}^{2}  \, \eigen_{\ell} \big(
				(\X^\transpose \bOmega \X)_{(-k)}
			\big).
	\end{equation*}
	Thus we have shown
	\begin{equation*}
	\eigen_{k + \ell} (\btilPhi)
		\leq 1 + \tau^2 \lshrink_{(k+1)}^{2} \, \eigen_{\ell} \big(
				(\X^\transpose \bOmega \X)_{(-k)}
			\big)
		\leq 1 + \tau^2 \lshrink_{(k+1)}^{2} \, \eigen_\ell \! \left(
				\X^\transpose \bOmega \X
			\right),
	\end{equation*}
	where the inequality $ \eigen_{\ell} \big(
	(\X^\transpose \bOmega \X)_{(-k)} \big) \leq \eigen_\ell \! \left(
	\X^\transpose \bOmega \X
	\right)$ follows again from the Poincar\'{e} separation theorem.
	\end{proof}

	\begin{proposition}
	\label{prop:eigenvalue_of_perturbation_by_identity}
	Given a $p \times p$ symmetric matrix $\bA$, the eigenvalues of the matrix $\I_p + \bA$ are given by $1 + \eigen_k(\bA)$ for $k = 1, \ldots, p$.
	\end{proposition}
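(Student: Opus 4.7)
The plan is to use the spectral theorem for symmetric matrices and observe that $\I_p$ and $\bA$ are simultaneously diagonalizable in any orthonormal eigenbasis of $\bA$. First I would invoke the spectral decomposition $\bA = \bm{Q} \bm{\Lambda} \bm{Q}^\transpose$, where $\bm{Q}$ is a $p \times p$ orthogonal matrix whose columns are orthonormal eigenvectors of $\bA$ and $\bm{\Lambda}$ is the diagonal matrix with entries $\eigen_k(\bA)$. Since $\bm{Q}$ is orthogonal, we have $\I_p = \bm{Q} \bm{Q}^\transpose = \bm{Q} \I_p \bm{Q}^\transpose$, so
\begin{equation*}
\I_p + \bA = \bm{Q} \I_p \bm{Q}^\transpose + \bm{Q} \bm{\Lambda} \bm{Q}^\transpose = \bm{Q} (\I_p + \bm{\Lambda}) \bm{Q}^\transpose.
\end{equation*}
This exhibits a spectral decomposition of $\I_p + \bA$ whose diagonal factor has entries $1 + \eigen_k(\bA)$, so these are precisely the eigenvalues of $\I_p + \bA$, completing the argument.

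As a sanity check one can give an equivalent elementary derivation bypassing the matrix identity: for each eigenpair $(\eigen_k(\bA), \bv_k)$ of $\bA$, the vector $\bv_k$ satisfies $(\I_p + \bA)\bv_k = \bv_k + \eigen_k(\bA) \bv_k = (1 + \eigen_k(\bA)) \bv_k$, so $\bv_k$ is an eigenvector of $\I_p + \bA$ with eigenvalue $1 + \eigen_k(\bA)$. Since $\bA$ is symmetric, the $\bv_k$'s form an orthonormal basis of $\mathbb{R}^p$, so this accounts for all $p$ eigenvalues counted with multiplicity.

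There is no real obstacle here; the statement is a textbook consequence of symmetry, and the only thing worth stating explicitly is that the spectral theorem (rather than just a similarity transformation) is what guarantees $\bm{Q}$ is orthogonal so that the identity is preserved under conjugation. The proof is essentially a one-line computation, and I expect the write-up in the paper to match the decomposition argument above.
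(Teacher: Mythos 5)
Your proof is correct and takes essentially the same route as the paper, which simply notes that the result ``follows immediately from the spectral theorem for normal matrices''; your write-up just makes the conjugation $\I_p + \bA = \bm{Q}(\I_p + \bm{\Lambda})\bm{Q}^\transpose$ explicit. No issues.
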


	\begin{proof}
	The result follows immediately from the spectral theorem for normal matrices \citep{horn2012matrix-analysis}.
	\end{proof}

	\begin{theorem}[Poincar\'e separation theorem]
	\label{thm:poincare_separation}
	For a given $p \times p$ symmetric matrix $\bA$, let $\bA_{(-k)}$ denote the sub-matrix with the first $k$ rows and columns removed from $\bA$. Then the eigenvalues of $\bA$ and $\bA_{(-k)}$ satisfies
		\begin{equation*}
		\eigen_{k + \ell}(\bA)
			\leq \eigen_{\ell}(\bA_{(-k)})
			\leq \eigen_{\ell}(\bA)
			\quad \text{ for } \, \ell = 1, \ldots, p - k.
		\end{equation*}
	Since permuting the rows and columns of $\bA$  does not change its eigenvalues, the above inequality in fact holds for any sub-matrix of $\bA$ obtained by removing $k$ rows and columns of $\bA$ corresponding to a common set of indices $j_1, \ldots, j_k$.
	\end{theorem}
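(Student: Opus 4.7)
The plan is to derive both inequalities from the Courant--Fischer min-max characterization of eigenvalues of a symmetric matrix. Writing $\eigen_1(\bA) \geq \eigen_2(\bA) \geq \ldots \geq \eigen_p(\bA)$ for the ordered eigenvalues, recall the two dual formulas
$$\eigen_{j}(\bA) = \max_{\dim S = j} \; \min_{\bv \in S \setminus \{0\}} \frac{\bv^\transpose \bA \bv}{\bv^\transpose \bv} = \min_{\dim S = p - j + 1} \; \max_{\bv \in S \setminus \{0\}} \frac{\bv^\transpose \bA \bv}{\bv^\transpose \bv},$$
where the extrema range over all subspaces $S$ of $\mathbb{R}^p$ of the indicated dimension. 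First I would identify $\bA_{(-k)}$ with the restriction of $\bA$ to the coordinate subspace $V \subset \mathbb{R}^p$ consisting of vectors whose first $k$ entries vanish. Under the obvious isometry $V \cong \mathbb{R}^{p-k}$, the Rayleigh quotient of $\bA_{(-k)}$ on $\mathbb{R}^{p-k}$ equals the Rayleigh quotient of $\bA$ on $V$, so the same min-max formulas characterize $\eigen_\ell(\bA_{(-k)})$ but with subspaces restricted to lie within $V$.

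For the upper bound $\eigen_\ell(\bA_{(-k)}) \leq \eigen_\ell(\bA)$, I would apply the max-version of Courant--Fischer. The left-hand side is a maximum of Rayleigh quotients over $\ell$-dimensional subspaces of $V$, while the right-hand side is the same maximum taken over \emph{all} $\ell$-dimensional subspaces of $\mathbb{R}^p$. Restricting the feasible set cannot increase the maximum, giving the inequality immediately.

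For the lower bound $\eigen_{k + \ell}(\bA) \leq \eigen_\ell(\bA_{(-k)})$ I would switch to the min-version. The dimension arithmetic $p - (k + \ell) + 1 = (p-k) - \ell + 1$ is crucial here: in both min-max formulas the subspaces have the \emph{same} ambient dimension, namely $p - k - \ell + 1$. Any such subspace $S$ of $V$ is also a $(p - k - \ell + 1)$-dimensional subspace of $\mathbb{R}^p$, so the min defining $\eigen_{k+\ell}(\bA)$ runs over a strictly larger family than the min defining $\eigen_\ell(\bA_{(-k)})$ and is therefore no larger. This is the one step that genuinely uses the interlacing phenomenon, and it is also where I anticipate the main obstacle: choosing the correct dual formula and verifying the dimension identity so that the two collections of subspaces are indeed nested. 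Once that matching is pinned down, the inequality drops out.

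Finally, for the last sentence of the theorem, I would observe that deleting the rows and columns indexed by an arbitrary set $\{j_1, \ldots, j_k\}$ yields a principal submatrix which can be obtained from the preceding case by a simultaneous row-column permutation. Such a permutation is an orthogonal similarity $\bA \mapsto \bm{P}^\transpose \bA \bm{P}$ that preserves both symmetry and the full spectrum, so the bounds just established transfer verbatim to arbitrary principal submatrices of $\bA$.
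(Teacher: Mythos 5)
Your proof is correct: the two Courant--Fischer applications are matched to the right dual formulas, the dimension count $p-(k+\ell)+1=(p-k)-\ell+1$ that makes the subspace families nested is verified, and the permutation/orthogonal-similarity reduction handles the general index set. The paper itself offers no proof, only a citation to Chapter 4.3 of Horn and Johnson, and the min-max argument you give is precisely the standard one found there, so this is essentially the same approach as the paper's source.
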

	\begin{proof}
	See Chapter 4.3 of \cite{horn2012matrix-analysis}.
	\end{proof}

	\begin{lemma}
	\label{lem:multiplicative_perturbation_of_eigenvalues}
	Let $\bA$ and $\bB$ be $p \times p$ symmetric positive definite matrices and suppose that the largest eigenvalue of $\bB$ satisfies $\eigen_1(\bB) \leq 1$. Then we have
		\begin{equation*}
		\eigen_k(\bB^{1/2} \bA \bB^{1/2}) \leq \eigen_k(\bA)
		\ \text{ for } \ k = 1, \ldots, p
		\end{equation*}
	where $\eigen_k(\cdot)$ denotes the $k$-th largest eigenvalue of a given matrix.
	\end{lemma}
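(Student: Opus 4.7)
The plan is to invoke the Courant--Fischer min--max characterization of eigenvalues, namely that for any symmetric matrix $\bm{M}$, $\eigen_k(\bm{M}) = \max_{\dim V = k} \min_{v \in V \setminus \{0\}} (v^\transpose \bm{M} v)/(v^\transpose v)$, where the maximum ranges over all $k$-dimensional subspaces $V \subseteq \mathbb{R}^p$. The strategy is to compare the Rayleigh quotient of $\bB^{1/2}\bA\bB^{1/2}$ on an arbitrary subspace $V$ with the Rayleigh quotient of $\bA$ on the subspace $\bB^{1/2} V$, exploiting the hypothesis $\eigen_1(\bB) \leq 1$ to get an inequality in the right direction.

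Concretely, I would first fix a $k$-dimensional subspace $V$ and rewrite the Rayleigh quotient via the substitution $w = \bB^{1/2} v$. Since $\bB$ is positive definite, $\bB^{1/2}$ is invertible, so the denominator becomes $v^\transpose v = w^\transpose \bB^{-1} w$, giving
\begin{equation*}
\frac{v^\transpose \bB^{1/2} \bA \bB^{1/2} v}{v^\transpose v}
 = \frac{w^\transpose \bA w}{w^\transpose \bB^{-1} w}.
\end{equation*}
The condition $\eigen_1(\bB) \leq 1$ implies that every eigenvalue of $\bB^{-1}$ is at least $1$, so $\bB^{-1} \succeq \I_p$ and hence $w^\transpose \bB^{-1} w \geq w^\transpose w$ for all $w$. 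Consequently the displayed ratio is bounded above by $w^\transpose \bA w / w^\transpose w$.

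Next, since $\bB^{1/2}$ is a linear bijection, $W := \bB^{1/2} V$ is again a $k$-dimensional subspace, and $v$ ranges over $V \setminus \{0\}$ exactly when $w$ ranges over $W \setminus \{0\}$. Taking the minimum over $v \in V \setminus \{0\}$ on the left and using the pointwise bound on the right therefore yields
\begin{equation*}
\min_{v \in V \setminus \{0\}} \frac{v^\transpose \bB^{1/2} \bA \bB^{1/2} v}{v^\transpose v}
 \leq \min_{w \in W \setminus \{0\}} \frac{w^\transpose \bA w}{w^\transpose w}.
\end{equation*}
Finally, taking the maximum over all $k$-dimensional $V$ on the left gives $\eigen_k(\bB^{1/2}\bA\bB^{1/2})$ by Courant--Fischer; on the right, since $V \mapsto \bB^{1/2} V$ is a bijection on the Grassmannian of $k$-dimensional subspaces, the same maximization yields $\eigen_k(\bA)$. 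This completes the argument.

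I do not anticipate a genuine obstacle: the proof is a clean two-line application of Courant--Fischer together with the elementary fact that $\eigen_1(\bB) \leq 1 \Rightarrow \bB^{-1} \succeq \I_p$. The only point requiring a line of care is verifying that the change of variables $w = \bB^{1/2} v$ is a bijection between $V$ and $\bB^{1/2} V$ of the same dimension, which follows from $\bB \succ 0$.
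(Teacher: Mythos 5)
Your proof is correct, but it takes a different route from the paper: the paper disposes of this lemma in one line by citing Ostrowski's theorem (Theorem 4.5.9 of Horn and Johnson), which states that $\eigen_k(\bm{S} \bA \bm{S}^\transpose)$ lies between $\eigen_p(\bm{S}\bm{S}^\transpose)\,\eigen_k(\bA)$ and $\eigen_1(\bm{S}\bm{S}^\transpose)\,\eigen_k(\bA)$; with $\bm{S} = \bB^{1/2}$ and $\eigen_1(\bB) \leq 1$ the upper bound is immediate. What you have done is essentially reprove the relevant half of Ostrowski's theorem from first principles via Courant--Fischer, and every step checks out: the change of variables $w = \bB^{1/2} v$ correctly converts the Rayleigh quotient to $w^\transpose \bA w / w^\transpose \bB^{-1} w$; the implication $\eigen_1(\bB) \leq 1 \Rightarrow \bB^{-1} \succeq \I_p$ is right; the pointwise bound on the quotient uses $w^\transpose \bA w \geq 0$, which holds since $\bA \succ 0$; and the passage from a pointwise inequality to an inequality of minima, then of maxima over the Grassmannian (using that $V \mapsto \bB^{1/2} V$ is a dimension-preserving bijection), is sound. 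The paper's citation is shorter and yields the two-sided bound for free, while your argument is self-contained and makes transparent exactly where the hypothesis $\eigen_1(\bB) \leq 1$ enters; either is acceptable.
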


	\begin{proof}
		The result follows immediately from Ostrowski's theorem (Theorem 4.5.9 in \cite{horn2012matrix-analysis}).
	\end{proof}

\section{Theories of CG convergence behavior}
\label{sec:theories_behind_cg}
In this section, we provide mathematical foundations behind the claims made in Rule of Thumb~\ref{quasi-thm:cg_convergence}. In essence, Rule of Thumb~\ref{quasi-thm:cg_convergence} is our attempt at describing a phenomenon known as the \textit{super-linear} convergence of CG in a quantitative yet accessible manner.
While this is a well-known phenomenon among the researchers in scientific computing, it is rarely explained in canonical textbooks and reference books in numerical linear algebra.\footnote{For example, discussions beyond Theorem~\ref{thm:cg_bound_by_sqrt_cond_num} and \ref{thm:cg_convergence_for_low_rank_perturbation} cannot be found in, to name a few, \cite{trefethen1997numerical_linalg}, \cite{demmel1997numerical_linalg},  \cite{saad2003iterative-methods}, and \cite{golub2012matrix}.} Here we bring together some of the most practically useful results found in the literature and present them in a concise and self-contained manner. Our presentation in Section~\ref{supp:cg_approximation_as_polynomial_approximation} and \ref{supp:cg_bound_via_polynomial_approximation} is roughly based on Section 5.3 of \cite{vorst2003iterative} with details modified, added, and condensed as needed. More comprehensive treatment of the known results related to CG is found in \cite{meurant2006lancoz-and-cg}. \cite{kuijlaars2006krylov_method_and_potential_theory} sheds additional light on CG convergence behaviors by studying them from the potential theory perspective.

Section~\ref{supp:cg_approximation_as_polynomial_approximation} explains the critical first step in understanding the convergence of CG applied to a positive definite system $\bPhi \bbeta = \bb$ --- relating the CG approximation error to polynomial interpolation error over the set $\{\eigen_1, \ldots, \eigen_p \}$ comprising the eigenvalues  of $\bPhi$.
From this perspective, one can understand Theorem~\ref{thm:cg_bound_by_sqrt_cond_num} as a generic and crude bound, ignoring the distributions of  $\eigen_j$'s in-between the largest and smallest eigenvalues (Theorem~\ref{thm:polynomial_bound_over_interval}).
Theorem~\ref{thm:cg_convergence_for_low_rank_perturbation} similarly follows from the polynomial approximation perspective.

The effects of the largest eigenvalues on CG convergence, as stated in Rule of Thumb~\ref{quasi-thm:cg_convergence}, is made mathematically precise in Theorem~\ref{thm:cg_convergence_delay_by_large_eigenvalues}. Analyzing how the smallest eigenvalues delay CG convergence is more involved and requires a discussion of how the eigenvalues are approximated in the Krylov subspace. The amount of initial delay in CG convergence is closely related to how quickly these eigenvalue approximations converge. A precise statement is given in Theorem~\ref{thm:cg_convergence_extreme_eigenvalue_removal}.

The proofs of all the results stated in this section are provided in Supplement Section~\ref{supp:cg_theory_proofs}.

\subsection{CG approximation error as polynomial interpolation error}
\label{supp:cg_approximation_as_polynomial_approximation}
The space of polynomials $\polyspace_k$ as defined below plays a prominent role in the behavior of a worst-case CG approximation error:
\begin{equation*}
	\polyspace_k = \{Q_k(\eigen) : \text{$Q_k$ is a polynomial of degree $k$ with $Q_k(0) = 1$} \}.
\end{equation*}
Proposition~\ref{prop:cg_error_as_polynomial_sum} below establishes the connection between CG and the space $\polyspace_k$.

\begin{proposition}
	\label{prop:cg_error_as_polynomial_sum}
	The difference between the $k$-th CG iterate $\bbeta_k$ and the exact solution $\bbeta$ can be expressed as
	\begin{equation}
		\label{eq:cg_error_as_polynomial_sum}
		\bbeta_k - \bbeta = R_k(\bPhi) \left( \bbeta_0 - \bbeta \right)
		\ \text{ for } \, R_k = \underset{Q_k \in \polyspace_k}{\rm argmin} \, \| Q_k(\bPhi) (\bbeta_0 - \bbeta)  \|_{\bPhi}.
	\end{equation}
	In particular, the following inequality holds for any $Q_k \in \polyspace_k$:
	\begin{equation}
		\label{eq:cg_error_bound_by_polynomial_sum}
		\| \bbeta_k - \bbeta \|_{\bPhi}
		\leq \| Q_k(\bPhi) (\bbeta_0 - \bbeta)  \|_{\bPhi}.
	\end{equation}
\end{proposition}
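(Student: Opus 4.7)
The plan is to translate the affine-space optimality property \eqref{eq:cg_optimality_property} of the CG iterate into a polynomial-minimization problem. The key observation is that every vector in the Krylov subspace $\mathcal{K}(\bPhi, \br_0, k)$ is of the form $P_{k-1}(\bPhi) \br_0$ for some polynomial $P_{k-1}$ of degree at most $k - 1$, and conversely every such vector lies in $\mathcal{K}(\bPhi, \br_0, k)$. I would first write $\bbeta_k = \bbeta_0 + P_{k-1}(\bPhi) \br_0$ for some $P_{k-1}$ determined by the CG procedure, then use $\br_0 = \bPhi \bbeta_0 - \bb = \bPhi(\bbeta_0 - \bbeta)$ to rewrite
\begin{equation*}
\bbeta_k - \bbeta = (\bbeta_0 - \bbeta) + P_{k-1}(\bPhi)\bPhi (\bbeta_0 - \bbeta) = Q_k(\bPhi)(\bbeta_0 - \bbeta),
\end{equation*}
where $Q_k(\eigen) := 1 + \eigen\, P_{k-1}(\eigen)$ is automatically a polynomial of degree $k$ satisfying $Q_k(0) = 1$, hence $Q_k \in \polyspace_k$.

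Next I would establish a converse: the map $P_{k-1} \mapsto 1 + \eigen P_{k-1}(\eigen)$ is a bijection between the space of polynomials of degree at most $k-1$ and $\polyspace_k$, because any $Q_k \in \polyspace_k$ can be factored as $Q_k(\eigen) - Q_k(0) = \eigen \tilde{P}_{k-1}(\eigen)$, giving $Q_k(\eigen) = 1 + \eigen \tilde{P}_{k-1}(\eigen)$. This shows that as $\bbeta_k'$ ranges over the entire affine set $\bbeta_0 + \mathcal{K}(\bPhi, \br_0, k)$, the corresponding error $\bbeta_k' - \bbeta$ ranges exactly over $\{Q_k(\bPhi)(\bbeta_0 - \bbeta) : Q_k \in \polyspace_k\}$.

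Combining these two observations with the optimality property \eqref{eq:cg_optimality_property}, which says $\bbeta_k$ is the minimizer of the $\bPhi$-norm distance to $\bbeta$ over this affine set, yields immediately that the polynomial $R_k$ realizing $\bbeta_k - \bbeta = R_k(\bPhi)(\bbeta_0 - \bbeta)$ is the argmin of $Q_k \mapsto \|Q_k(\bPhi)(\bbeta_0 - \bbeta)\|_{\bPhi}$ over $\polyspace_k$, giving \eqref{eq:cg_error_as_polynomial_sum}. The inequality \eqref{eq:cg_error_bound_by_polynomial_sum} then follows trivially, since $\|Q_k(\bPhi)(\bbeta_0-\bbeta)\|_{\bPhi}$ is bounded below by its minimum $\|R_k(\bPhi)(\bbeta_0-\bbeta)\|_{\bPhi} = \|\bbeta_k - \bbeta\|_{\bPhi}$ for every $Q_k \in \polyspace_k$.

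I do not anticipate a genuine obstacle here; the proof is essentially a bookkeeping exercise, and the only step that requires care is the correspondence between polynomials of degree $\leq k-1$ (parametrizing the Krylov subspace) and the constrained set $\polyspace_k$ (parametrizing admissible error polynomials). This constraint $Q_k(0) = 1$ is precisely what captures the affine (rather than linear) nature of the search space $\bbeta_0 + \mathcal{K}(\bPhi, \br_0, k)$, and it is worth articulating explicitly rather than burying in the algebra.
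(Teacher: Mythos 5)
Your proposal is correct and follows essentially the same route as the paper's proof: parametrize the affine search space $\bbeta_0 + \mathcal{K}(\bPhi, \br_0, k)$ by polynomials, use $\br_0 = \bPhi(\bbeta_0 - \bbeta)$ to show the error of any candidate has the form $Q_k(\bPhi)(\bbeta_0 - \bbeta)$ with $Q_k(0) = 1$, and then invoke the CG optimality property \eqref{eq:cg_optimality_property}. Your explicit remark that the map $P_{k-1} \mapsto 1 + \eigen P_{k-1}(\eigen)$ is a bijection onto $\polyspace_k$ is a small but welcome refinement --- the paper only states the forward direction, and the converse is what licenses identifying $R_k$ as the argmin over all of $\polyspace_k$.
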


Theorem~\ref{thm:cg_bound_over_discrete_polynomial_values} below uses Proposition~\ref{prop:cg_error_as_polynomial_sum} to establish the relation between the CG approximation error and a polynomial interpolation error. We can interpret the result as saying the following: a worst-case CG approximation error can be quantified via how well the set of points $\{(\eigen_j, 0)\}_{j = 1, \ldots, p}$ can be interpolated by the graph $\eigen \to (\eigen, Q_k(\eigen))$ of a $k$-th degree polynomial $Q_k$ with the constraint $Q_k(0) = 1$.
\begin{theorem}
	\label{thm:cg_bound_over_discrete_polynomial_values}
	\begin{equation} \abovedisplayshortskip=-.5\baselineskip
		\label{eq:cg_bound_over_discrete_polynomial_values}
		\frac{
			\| \bbeta_k - \bbeta \|_{\bPhi}
		}{
			\| \bbeta_0 - \bbeta \|_{\bPhi}
		}
		\leq \min_{Q_k \, \in \, \polyspace_k} \max_{j = 1, \ldots, p} | Q_k(\eigen_j) |,
	\end{equation}
	where $\eigen_j$ denotes the $j$-th largest eigenvalue of $\bPhi$. The bound is sharp in a sense that, for each $k$, there exists an initial vector $\bbeta_0$ for which the equality holds.
\end{theorem}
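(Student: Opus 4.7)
The plan is to apply Proposition~\ref{prop:cg_error_as_polynomial_sum} and then convert the operator bound into a scalar bound over the discrete spectrum of $\bPhi$ using its eigendecomposition. First I would diagonalize $\bPhi = \bm{U} \bm{\Lambda} \bm{U}^\transpose$, where the columns of $\bm{U}$ are orthonormal eigenvectors and $\bm{\Lambda} = \mathrm{diag}(\eigen_1, \ldots, \eigen_p)$. Expanding the initial error in this basis as $\bbeta_0 - \bbeta = \sum_j c_j \bm{u}_j$ and using $Q_k(\bPhi) = \bm{U}\, Q_k(\bm{\Lambda})\, \bm{U}^\transpose$, a direct calculation yields
\[
\| Q_k(\bPhi)(\bbeta_0 - \bbeta) \|_{\bPhi}^2 \;=\; \sum_{j=1}^p \eigen_j\, c_j^2\, Q_k(\eigen_j)^2
\quad\text{and}\quad
\| \bbeta_0 - \bbeta \|_{\bPhi}^2 \;=\; \sum_{j=1}^p \eigen_j\, c_j^2.
\]
Taking the ratio realizes $\|Q_k(\bPhi)(\bbeta_0-\bbeta)\|_{\bPhi}^2 / \|\bbeta_0 - \bbeta\|_{\bPhi}^2$ as a convex combination of the values $Q_k(\eigen_j)^2$, which is bounded above by $\max_j Q_k(\eigen_j)^2$. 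Combining with inequality \eqref{eq:cg_error_bound_by_polynomial_sum} of Proposition~\ref{prop:cg_error_as_polynomial_sum} and minimizing over $Q_k \in \polyspace_k$ yields the bound \eqref{eq:cg_bound_over_discrete_polynomial_values}.

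For the sharpness claim, I would construct an initial vector $\bbeta_0$ that forces two conditions simultaneously: (i) the CG-optimal polynomial $R_k$ from \eqref{eq:cg_error_as_polynomial_sum} coincides with the discrete minimax polynomial $Q_k^* = \argmin_{Q_k \in \polyspace_k} \max_j |Q_k(\eigen_j)|$; and (ii) the step that upper-bounds the weighted average of $R_k(\eigen_j)^2$ by $\max_j R_k(\eigen_j)^2$ is an equality. Let $J \subseteq \{1, \ldots, p\}$ denote the active index set where $|Q_k^*(\eigen_j)|$ attains its maximum. Condition (ii) is automatic once we choose $c_j = 0$ for $j \notin J$, since the values $Q_k^*(\eigen_j)^2$ are then constant over the support of $c_j^2$. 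For (i), I would choose the weights $\{c_j^2 : j \in J\}$ so that the first-order optimality conditions of the weighted least-squares problem defining $R_k$ in \eqref{eq:cg_error_as_polynomial_sum} are satisfied by $Q_k^*$; the discrete Chebyshev equioscillation property of $Q_k^*$ guarantees that $J$ has enough elements to provide the required degrees of freedom. A construction along these lines is standard in the Krylov subspace literature and may be referenced to Section 5.3 of \cite{vorst2003iterative}.

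The main obstacle will be the sharpness construction, specifically verifying that the weights $c_j^2$ on the active set $J$ can be tuned so that the CG minimization actually recovers $Q_k^*$ rather than a different polynomial in $\polyspace_k$. The inequality direction itself is a mechanical consequence of the spectral decomposition together with Proposition~\ref{prop:cg_error_as_polynomial_sum}, and should not present difficulty beyond the algebra above.
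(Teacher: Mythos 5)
Your proof of the inequality is exactly the paper's argument: expand $\bbeta_0 - \bbeta$ in the eigenbasis of $\bPhi$, express the $\bPhi$-norm of $Q_k(\bPhi)(\bbeta_0-\bbeta)$ as the weighted sum $\sum_j \eigen_j c_j^2 Q_k(\eigen_j)^2$, bound it by the maximum of $Q_k(\eigen_j)^2$ times $\|\bbeta_0-\bbeta\|_{\bPhi}^2$, and combine with Proposition~\ref{prop:cg_error_as_polynomial_sum}. For sharpness the paper simply cites the explicit construction of \cite{greenbaum1979cg_splittings}, which is the same construction you sketch, so your proposal is correct and follows the paper's route.
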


\subsection{Bounding CG error via its polynomial characterization}
\label{supp:cg_bound_via_polynomial_approximation}
We now derive bounds on the CG approximation error through its characterization as a polynomial interpolation error (Theorem~\ref{thm:cg_bound_over_discrete_polynomial_values}).
Minimizing the interpolation error over the entire interval between the largest and smallest eigenvalues yields the following bound.
\begin{theorem}
	\label{thm:polynomial_bound_over_interval}
	\begin{equation} \abovedisplayskip=-.5\baselineskip
		\label{eq:polynomial_bound_over_interval}
		\min_{Q_k \, \in \, \polyspace_k} \, \max_{\eigen \in [ \eigen_{\textrm{min}},  \eigen_{\textrm{max}} ]} |Q_k(\eigen)|
		\leq 2 \left( \frac{
			\sqrt{\eigen_{\textrm{max}} / \eigen_{\textrm{min}}} - 1
		}{
			\sqrt{\eigen_{\textrm{max}} / \eigen_{\textrm{min}}} + 1
		}
		\right)^k.
	\end{equation}
\end{theorem}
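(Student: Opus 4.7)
The plan is to exploit the well-known extremal property of Chebyshev polynomials of the first kind, $T_k(x) = \cos(k \arccos x)$ for $|x| \leq 1$ and $T_k(x) = \cosh(k \, \mathrm{arccosh} \, x)$ for $|x| > 1$, which are bounded by $1$ on $[-1,1]$ but grow exponentially outside this interval. Since the optimization on the left-hand side of \eqref{eq:polynomial_bound_over_interval} is only an upper bound on the minimum, I just need to exhibit one explicit $Q_k \in \polyspace_k$ that achieves the claimed rate; the Chebyshev construction will do this.

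First I would map $[\eigen_{\min}, \eigen_{\max}]$ affinely to $[-1, 1]$ via the transformation $t(\eigen) = (2\eigen - \eigen_{\max} - \eigen_{\min}) / (\eigen_{\max} - \eigen_{\min})$, so that $t(0) = -(\eigen_{\max} + \eigen_{\min})/(\eigen_{\max} - \eigen_{\min})$, which lies strictly to the left of $-1$ whenever $\eigen_{\min} > 0$. I would then define the candidate polynomial
\begin{equation*}
Q_k(\eigen) = \frac{T_k(t(\eigen))}{T_k(t(0))}.
\end{equation*}
By construction $Q_k \in \polyspace_k$ (it is a polynomial of degree $k$ normalized so $Q_k(0) = 1$). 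For $\eigen \in [\eigen_{\min}, \eigen_{\max}]$, the boundedness property of Chebyshev polynomials on $[-1,1]$ gives $|Q_k(\eigen)| \leq 1/|T_k(t(0))|$, so the task reduces to lower-bounding $|T_k(t(0))|$.

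The remaining step is a routine computation that I would carry out as follows. Setting $\kappa = \eigen_{\max}/\eigen_{\min}$, one has $|t(0)| = (\kappa + 1)/(\kappa - 1) > 1$, so the hyperbolic representation applies. Using $\cosh(k\theta) \geq \tfrac{1}{2} e^{k\theta} = \tfrac{1}{2}(e^{\theta})^k$ with $\theta = \mathrm{arccosh}(|t(0)|)$, and computing $e^{\theta} = |t(0)| + \sqrt{t(0)^2 - 1}$, a short algebraic simplification (using $t(0)^2 - 1 = 4\kappa/(\kappa - 1)^2$) yields
\begin{equation*}
|t(0)| + \sqrt{t(0)^2 - 1} = \frac{\kappa + 1 + 2\sqrt{\kappa}}{\kappa - 1} = \frac{\sqrt{\kappa} + 1}{\sqrt{\kappa} - 1}.
\end{equation*}
Combined with the parity identity $T_k(-x) = (-1)^k T_k(x)$, which ensures $|T_k(t(0))| = T_k(|t(0)|)$, this produces the lower bound $|T_k(t(0))| \geq \tfrac{1}{2}\bigl((\sqrt{\kappa}+1)/(\sqrt{\kappa}-1)\bigr)^k$, and inverting gives exactly the claimed bound.

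The only nonobvious step is recognizing that the Chebyshev normalization is the right candidate; the rest is essentially algebraic. The main obstacle, if any, lies in the algebraic simplification of $|t(0)| + \sqrt{t(0)^2 - 1}$ into a form involving $\sqrt{\kappa}$ rather than $\kappa$, since the appearance of the square root of the condition number in the final bound --- rather than the condition number itself --- hinges on this identity. I would not attempt to prove optimality of the Chebyshev choice, since the theorem only asserts an inequality.
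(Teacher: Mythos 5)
Your proof is correct and follows exactly the route the paper takes: the paper's proof is a one-line sketch constructing a shifted and scaled Chebyshev polynomial in $\polyspace_k$ and deferring the details to Saad's book, and your argument simply fills in those details (the affine map of $[\eigen_{\textrm{min}}, \eigen_{\textrm{max}}]$ to $[-1,1]$, normalization at $0$, and the growth estimate $T_k(|t(0)|) \geq \tfrac{1}{2}\bigl((\sqrt{\kappa}+1)/(\sqrt{\kappa}-1)\bigr)^k$). The algebra reducing $|t(0)| + \sqrt{t(0)^2 - 1}$ to $(\sqrt{\kappa}+1)/(\sqrt{\kappa}-1)$ checks out, so nothing is missing.
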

\noindent Theorem~\ref{thm:cg_bound_over_discrete_polynomial_values} and \ref{thm:polynomial_bound_over_interval} together yield the well-known CG approximation error bound of Theorem~\ref{thm:cg_bound_by_sqrt_cond_num}. As the bound of Theorem~\ref{thm:cg_bound_over_discrete_polynomial_values} depends only on the maximum over a discrete set of the eigenvalues $\{ \eigen_p, \ldots, \eigen_1 \}$, rather than the entire interval $[\eigen_p, \eigen_1]$, the actual CG convergence rate can be faster.

Theorem~\ref{thm:cg_convergence_delay_by_large_eigenvalues} below is a basis of the following claim made in Rule of Thumb~\ref{quasi-thm:cg_convergence}: ``the $\nlargest$ largest eigenvalues are effectively removed within $\nlargest$ iterations.''
\begin{theorem}
	\label{thm:cg_convergence_delay_by_large_eigenvalues}
	The following bound holds for all $\nlargest, k \geq 0$ with $\nlargest < p$:
	\begin{equation}
		\label{eq:cg_convergence_removal_of_largest_eigenvalues}
		\frac{
			\| \bbeta_{\nlargest + k} - \bbeta \|_{\bPhi}
		}{
			\| \bbeta_0 - \bbeta \|_{\bPhi}
		}
		= \min_{Q_{\nlargest + k} \, \in \, \polyspace_{\nlargest + k}} \,  \max_{j = 1, \ldots, p} |Q_{\nlargest + k}(\eigen_j)|
		\leq 2 \left( \frac{
			\sqrt{\eigen_{\nlargest + 1} / \eigen_{p}} - 1
		}{
			\sqrt{\eigen_{\nlargest + 1} / \eigen_{p}}+ 1
		}
		\right)^k,
	\end{equation}
	where the first equality is given by Theorem~\ref{thm:cg_bound_over_discrete_polynomial_values}.
\end{theorem}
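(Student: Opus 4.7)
The plan is to use Theorem~\ref{thm:cg_bound_over_discrete_polynomial_values} to reduce the problem to bounding a discrete polynomial interpolation error over the spectrum $\{\eigen_1, \ldots, \eigen_p\}$, and then to exhibit an explicit polynomial in $\polyspace_{\nlargest + k}$ whose maximum over that set realizes the claimed bound. (The first equality in the statement should probably be read as an inequality in general, becoming an equality in the worst case of $\bbeta_0$ per the sharpness clause of Theorem~\ref{thm:cg_bound_over_discrete_polynomial_values}; in either reading the content to verify is the final inequality.)

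The construction factors the candidate polynomial into two pieces tuned to different portions of the spectrum: one that annihilates the $\nlargest$ outlying eigenvalues, and another that is uniformly small on the interval containing the remaining eigenvalues. Specifically, I would set $Q_{\nlargest + k}(\eigen) = P_\nlargest(\eigen) \, C_k(\eigen)$, where
\begin{equation*}
P_\nlargest(\eigen) = \prod_{j = 1}^{\nlargest} \left( 1 - \frac{\eigen}{\eigen_j} \right)
\end{equation*}
vanishes at $\eigen_1, \ldots, \eigen_\nlargest$ and satisfies $P_\nlargest(0) = 1$, and $C_k \in \polyspace_k$ is any polynomial realizing the bound in Theorem~\ref{thm:polynomial_bound_over_interval} over the interval $[\eigen_p, \eigen_{\nlargest + 1}]$ (a scaled and shifted Chebyshev polynomial works). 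Since both factors are normalized to $1$ at the origin, $Q_{\nlargest + k} \in \polyspace_{\nlargest + k}$ as required.

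To finish, evaluate $Q_{\nlargest + k}$ on each $\eigen_j$. For $j \leq \nlargest$ the value is $0$ by construction. For $j > \nlargest$, the ordering $\eigen_j \leq \eigen_i$ together with strict positivity of every eigenvalue (which follows from $\bPhi$ being positive definite) yields $\eigen_j / \eigen_i \in [0, 1]$, so each factor of $P_\nlargest(\eigen_j)$ lies in $[0, 1]$ and hence $|P_\nlargest(\eigen_j)| \leq 1$. Combined with $\eigen_j \in [\eigen_p, \eigen_{\nlargest + 1}]$ and Theorem~\ref{thm:polynomial_bound_over_interval} applied to $C_k$, this gives
\begin{equation*}
|Q_{\nlargest + k}(\eigen_j)| \leq \max_{\eigen \in [\eigen_p, \eigen_{\nlargest + 1}]} |C_k(\eigen)| \leq 2 \left( \frac{\sqrt{\eigen_{\nlargest + 1}/\eigen_p} - 1}{\sqrt{\eigen_{\nlargest + 1}/\eigen_p} + 1} \right)^k.
\end{equation*}
Taking the minimum over $\polyspace_{\nlargest + k}$ on the left-hand side of Theorem~\ref{thm:cg_bound_over_discrete_polynomial_values} then delivers the claimed bound.

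I do not anticipate a serious obstacle; the proof is essentially a polynomial design trick followed by invocations of two earlier results. The only point that requires care is confirming $|P_\nlargest(\eigen_j)| \leq 1$ at the non-outlying eigenvalues, which relies on both the ordering of the spectrum and the strict positivity of the $\eigen_j$; without the latter some factor could become negative with large magnitude.
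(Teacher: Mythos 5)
Your proposal is correct and follows essentially the same route as the paper: both multiply an annihilating factor $\prod_{i=1}^{\nlargest}(1 - \eigen/\eigen_i)$ (which vanishes at the outlying eigenvalues and has magnitude at most $1$ on the rest of the spectrum) by a degree-$k$ polynomial controlled on $[\eigen_p, \eigen_{\nlargest+1}]$ via Theorem~\ref{thm:polynomial_bound_over_interval}, then invoke Theorem~\ref{thm:cg_bound_over_discrete_polynomial_values}. The only cosmetic difference is that the paper routes through the discrete minimizer over $\{\eigen_{\nlargest+1},\ldots,\eigen_p\}$ before relaxing to the interval, whereas you substitute the Chebyshev polynomial directly; your side remark that the displayed equality should be read via the sharpness clause is also apt.
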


The smallest eigenvalues affect the CG convergence rate differently from the largest ones due to the constraint $Q_k(0) = 1$ in $\polyspace_k$. Intuitively, this constraint makes the smallest eigenvalues more significant contributers to the polynomial interpolation error because it competes with the objective of minimizing $|Q_k(\eigen)|$ for small $\eigen$. This is why we state in Rule of Thumb~\ref{quasi-thm:cg_convergence} that ``the same number of smallest eigenvalues tends to delay the convergence longer.'' Nonetheless, the effects of the smallest eigenvalues on the CG approximation error becomes attenuated as the CG iterations proceed. To quantify this phenomenon, we need to introduce the notion of \textit{Ritz values} and describe their roles in the CG convergence behavior.

The Ritz values at the $\nRitzIter$-th CG iteration refer to the roots $\big\{ \ritz_1^{(\nRitzIter)}, \ldots, \ritz_\nRitzIter^{(\nRitzIter)} \big\}$ of the optimal CG polynomial $R_\nRitzIter$ as defined in \eqref{eq:cg_error_as_polynomial_sum}. Unless the eigenvalues $\eigen_p, \ldots, \eigen_1$ are distributed in a highly unusual manner, the largest and smallest Ritz values have a property that they converges quickly to to the largest and smallest eigenvalues of $\bPhi$ \citep{trefethen1997numerical_linalg, driscoll1998potential-theory, kuijlaars2006krylov_method_and_potential_theory}. More precisely, we have $\ritz^{(\nRitzIter)}_i \to \eigen_i$ for $i = 1, \ldots, \nlargest$ and $\ritz^{(\nRitzIter)}_{\nRitzIter-i} \to \eigen_{p - i}$ for $i = 0, \ldots, \nsmallest$ as $\nRitzIter \to p$. While the convergence rates of the Ritz values can be shown to be exponential, in practice quite a large number of CG iterations may be required to obtain good approximations unless $\max\{\nlargest, \nsmallest \} \ll p$ \citep{saad2011methods-for-eigenvalue}.

Theorem~\ref{thm:cg_convergence_extreme_eigenvalue_removal} below quantifies how the convergence of the Ritz values are related to the subsequent acceleration of the CG convergence rates.
\begin{theorem}
	\label{thm:cg_convergence_extreme_eigenvalue_removal}
	The CG approximation error of the $(\nRitzIter + \nSubseqIter)$-th iterate relative to the $\nRitzIter$-th iterate satisfies the following bound:
	\begin{equation}
		\label{eq:cg_convergence_extreme_eigenvalue_removal}
		\frac{
			\left\| \bbeta_{\nRitzIter + \nSubseqIter} - \bbeta \right\|_{\bPhi}
		}{
			\left\| \bbeta_{\nRitzIter} - \bbeta \right\|_{\bPhi}
		}
		\leq C_{\nRitzIter, \nlargest, \nsmallest} \, 2 \left( \frac{
			\sqrt{\eigen_{\nlargest + 1} / \eigen_{p - \nsmallest}} - 1
		}{
			\sqrt{\eigen_{\nlargest + 1} / \eigen_{p - \nsmallest}} + 1
		} \right)^\nSubseqIter,
	\end{equation}
	where $C_{\nRitzIter, \nlargest, \nsmallest} = C_{\nRitzIter, \nlargest, \nsmallest}\big( \ritz_1^{(\nRitzIter)}, \ldots, \ritz_\nRitzIter^{(\nRitzIter)} \big) \to 1$ as $\nRitzIter \to p$ for any fixed $\nlargest, \nsmallest \geq 0$ with $\nlargest + \nsmallest < p$. More precisely, $C_{\nRitzIter, \nlargest, \nsmallest}$ tends to 1 as the $\nlargest$ largest and $\nsmallest$ smallest Ritz values converge to the largest and smallest eigenvalues of $\bPhi$.
\end{theorem}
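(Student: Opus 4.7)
The plan is to reduce the bound at iterate $\nRitzIter + \nSubseqIter$ to a polynomial approximation problem on the interval $[\eigen_{p-\nsmallest}, \eigen_{\nlargest+1}]$, exploiting the fact that the optimal CG polynomial $R_\nRitzIter$ already nearly vanishes at the extreme eigenvalues once the Ritz values have converged there. Concretely, I would invoke the upper-bound form of Proposition~\ref{prop:cg_error_as_polynomial_sum} with the product test polynomial $Q_{\nRitzIter+\nSubseqIter}(\eigen) = R_\nRitzIter(\eigen)\, T_\nSubseqIter(\eigen)$, where $T_\nSubseqIter \in \polyspace_\nSubseqIter$ is the scaled-and-shifted Chebyshev polynomial on $[\eigen_{p-\nsmallest}, \eigen_{\nlargest+1}]$ that achieves the minimax bound of Theorem~\ref{thm:polynomial_bound_over_interval}. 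Since $R_\nRitzIter(\bPhi)(\bbeta_0-\bbeta) = \bbeta_\nRitzIter-\bbeta$ by Proposition~\ref{prop:cg_error_as_polynomial_sum}, and since polynomials of $\bPhi$ commute, this immediately yields
\begin{equation*}
\|\bbeta_{\nRitzIter+\nSubseqIter}-\bbeta\|_{\bPhi} \;\leq\; \|T_\nSubseqIter(\bPhi)\,(\bbeta_\nRitzIter-\bbeta)\|_{\bPhi}.
\end{equation*}

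The second step is to evaluate the right-hand side in the eigenbasis of $\bPhi$. Writing $\bbeta_\nRitzIter-\bbeta = \sum_{j=1}^p d_j\,\bv_j$ with $d_j = R_\nRitzIter(\eigen_j)\,c_j$ for the components $c_j$ of the initial error, I get
\begin{equation*}
\frac{\|T_\nSubseqIter(\bPhi)(\bbeta_\nRitzIter-\bbeta)\|_{\bPhi}^2}{\|\bbeta_\nRitzIter-\bbeta\|_{\bPhi}^2}
\;=\; \sum_{j=1}^p T_\nSubseqIter(\eigen_j)^2\, w_j,
\qquad w_j = \frac{\eigen_j d_j^2}{\sum_i \eigen_i d_i^2},
\end{equation*}
which is a convex combination of the $T_\nSubseqIter(\eigen_j)^2$. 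I would split the sum into the middle indices $j=\nlargest+1,\ldots,p-\nsmallest$, where the classical Chebyshev estimate gives $|T_\nSubseqIter(\eigen_j)| \leq 2\bigl((\sqrt{\kappa}-1)/(\sqrt{\kappa}+1)\bigr)^\nSubseqIter$ with $\kappa = \eigen_{\nlargest+1}/\eigen_{p-\nsmallest}$, and the outer indices $j\leq \nlargest$ and $j \geq p-\nsmallest+1$, where $|T_\nSubseqIter(\eigen_j)|$ may blow up but the weights $w_j$ are proportional to $R_\nRitzIter(\eigen_j)^2$.

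The third step absorbs the outlier contributions into the constant $C_{\nRitzIter,\nlargest,\nsmallest}$. Because $R_\nRitzIter$ vanishes exactly at the Ritz values $\ritz_i^{(\nRitzIter)}$, a first-order Taylor expansion around $\ritz_i^{(\nRitzIter)}$ gives $R_\nRitzIter(\eigen_j)^2 \leq \bigl|\eigen_j - \ritz_i^{(\nRitzIter)}\bigr|^2 \cdot M_{\nRitzIter,i}$ for some bounded derivative factor, whenever $\ritz_i^{(\nRitzIter)}$ is the Ritz value closest to $\eigen_j$. Pairing the $\nlargest$ largest and $\nsmallest$ smallest Ritz values with the corresponding extreme eigenvalues, the extreme-index contribution is controlled by the Ritz approximation gaps $|\eigen_i - \ritz_i^{(\nRitzIter)}|$ and $|\eigen_{p-i}-\ritz_{\nRitzIter-i}^{(\nRitzIter)}|$, which vanish as $\nRitzIter \to p$. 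Defining $C_{\nRitzIter,\nlargest,\nsmallest}$ as the ratio between the full sum and the Chebyshev-bounded middle contribution yields the displayed bound \eqref{eq:cg_convergence_extreme_eigenvalue_removal}, with $C_{\nRitzIter,\nlargest,\nsmallest} \to 1$ under Ritz convergence.

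The main obstacle is making the claim $C_{\nRitzIter,\nlargest,\nsmallest}\to 1$ quantitatively rigorous: the Chebyshev polynomial $T_\nSubseqIter$ grows exponentially outside $[\eigen_{p-\nsmallest},\eigen_{\nlargest+1}]$, so the contribution of each outlier index is roughly $T_\nSubseqIter(\eigen_j)^2 \cdot R_\nRitzIter(\eigen_j)^2$ times a bounded factor, and I need to ensure the Ritz-value convergence rate (which is itself exponential but with a rate that depends delicately on the spectrum of $\bPhi$) dominates the Chebyshev blow-up uniformly in $\nSubseqIter$. This is precisely where the statement has to be phrased as convergence of the constant rather than a clean explicit rate; an honest write-up requires either invoking a known quantitative bound on Ritz convergence, as in the references cited in the preceding text, or abstracting the dependence into the unspecified function $C_{\nRitzIter,\nlargest,\nsmallest}\bigl(\ritz_1^{(\nRitzIter)},\ldots,\ritz_\nRitzIter^{(\nRitzIter)}\bigr)$.
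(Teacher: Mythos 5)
There is a genuine gap in your third step, and it is structural rather than a matter of making estimates quantitative. Your test polynomial $R_\nRitzIter \cdot T_\nSubseqIter$ leaves the outlier contributions in the form $T_\nSubseqIter(\eigen_j)^2\, R_\nRitzIter(\eigen_j)^2\, \eigen_j c_j^2$ for $j \leq \nlargest$ and $j > p - \nsmallest$. For a \emph{fixed} $\nRitzIter$, the Ritz values $\ritz_i^{(\nRitzIter)}$ and hence the numbers $R_\nRitzIter(\eigen_j)$ are fixed and generically nonzero, while $T_\nSubseqIter(\eigen_j)^2$ grows exponentially in $\nSubseqIter$ outside $[\eigen_{p-\nsmallest}, \eigen_{\nlargest+1}]$. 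The Ritz values do not improve as $\nSubseqIter$ increases, so no convergence rate for them can rescue you: the ratio you propose to call $C_{\nRitzIter,\nlargest,\nsmallest}$ necessarily depends on $\nSubseqIter$ and diverges as $\nSubseqIter \to \infty$, whereas the theorem requires a constant depending only on $\nRitzIter$, $\nlargest$, $\nsmallest$ (through the Ritz values at step $\nRitzIter$). Your own ``main obstacle'' paragraph senses this tension but misdiagnoses it as a race between two exponential rates; in fact no such race exists to be won with this choice of test polynomial.

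The paper's proof sidesteps the problem by choosing a different comparison polynomial: it replaces the $\nlargest + \nsmallest$ factors of $R_\nRitzIter$ associated with the extreme Ritz values by factors $(\eigen_i - \eigen)/\eigen_i$ that vanish \emph{exactly} at the extreme eigenvalues $\eigen_1, \ldots, \eigen_\nlargest$ and $\eigen_{p-\nsmallest+1}, \ldots, \eigen_p$, keeping the interior Ritz-value factors of $R_\nRitzIter$. The resulting $Q_\nRitzIter(\bPhi)(\bbeta_0 - \bbeta)$ then has \emph{zero} components along the extreme eigenvectors, so the subsequent $\nSubseqIter$ CG steps can be bounded by the Chebyshev estimate restricted to $[\eigen_{p-\nsmallest}, \eigen_{\nlargest+1}]$ (this is the content of Lemma~\ref{lem:cg_bound_over_subset_eigenvalues}), with no outlier terms at all. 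The constant $C_{\nRitzIter,\nlargest,\nsmallest}$ is then defined as $\max_{j = \nlargest+1, \ldots, p-\nsmallest} Q_\nRitzIter(\eigen_j)/R_\nRitzIter(\eigen_j)$, a ratio over the \emph{middle} eigenvalues only, where both polynomials are tame; it manifestly does not involve $\nSubseqIter$, and it tends to $1$ as each extreme Ritz value approaches the corresponding extreme eigenvalue because the ratio is an explicit product of factors $(\ritz_i^{(\nRitzIter)}/\eigen_i)\,(\eigen_i - \eigen)/(\ritz_i^{(\nRitzIter)} - \eigen)$. Your first two steps (the product test polynomial and the eigenbasis expansion) are the right opening moves; the missing idea is to make the test polynomial interpolate the extreme \emph{eigenvalues} exactly rather than relying on $R_\nRitzIter$ being merely small there.
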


\end{appendices}

\FloatBarrier

\setlength{\bibsep}{5.5pt plus 0.3ex}
\bibliographystyle{agsm}
\bibliography{cg_accelerated_gibbs}{}

\newpage

\renewcommand{\thesection}{S\arabic{section}}
\renewcommand{\thetable}{S\arabic{table}}
\renewcommand{\thefigure}{S\arabic{figure}}
\renewcommand{\theequation}{S\arabic{equation}}

\setcounter{section}{0}
\setcounter{figure}{0}
\setcounter{table}{0}
\setcounter{equation}{0}
\setcounter{page}{1}

\nolinenumbers
{
	\bigskip
	\bigskip
	\bigskip
  \begin{center}
  \spacingset{1}
    {\LARGE\bf
    	Supplement to
    	``Prior-preconditioned conjugate gradient method for accelerated Gibbs sampling in \mbox{`large $n$ \& large $p$'} \\ Bayesian sparse regression'' \par
    }
  \end{center}
  \medskip
}
\linenumbers

\section{Sparse logistic regression Gibbs sampler}
\label{sec:gibbs_sampler_details}

Here we provide a detailed description of sparse logistic regression Gibbs samplers, the computational bottleneck of which is the focus of this article.

The conditional distributions of $\tau$ and $\blshrink$ obviously depends on particular shrinkage priors used and is given in terms of the local and global scale prior $\localPrior(\cdot)$ and $\globalPrior(\cdot)$ as
\begin{equation}
\label{eq:scale_parameter_posterior}
\pi(\tau, \blshrink \given \bbeta, \bomega, \y, \X)
	= \pi(\tau, \blshrink \given \bbeta)
	\propto \globalPrior(\tau) \prod_j \frac{1}{\tau \lshrink_j} \exp\!\left( - \frac{\beta_j^2}{2 \tau^2 \lshrink_j^2}\right) \localPrior(\lshrink_j).
\end{equation}
For the Bayesian bridge prior used in our simulations, a Gamma distribution is a conjugate prior for $\tau^{-\alpha}$, and $\lshrink_j$'s can be updated via the double-rejection sampler of \cite{devroye2006random_variate_generation} \citep{polson2014bayes_bridge}.
For the popular horseshoe prior \citep{carvalho2010horseshoe}, which corresponds to a half-Cauchy prior for $\localPrior(\cdot)$, an efficient rejection sampler is available for the full conditional of $\blshrink$ \citep{nishimura2019regularized_shrinkage}.
Coincidentally, a half-Cauchy is also a common prior choice for $\tau$ \citep{gelman2008default_prior, piironen2017regularized-horseshoe}, so the same rejection sampler can be used to update $\tau$ from its full conditional.

The Polya-Gamma data-augmentation of \cite{polson2013polya_gamma} is a widely-used approach for posterior computation under the logistic model.
By introducing an auxiliary parameter $\bomega = (\omega_1, \ldots, \omega_n)$, with each element having a Polya-Gamma distribution, the Gibbs sampler induces a transition kernel: $(\bomega^*, \bbeta^*, \blshrink^*, \tau^*) \to (\bomega, \bbeta, \blshrink, \tau)$ through the following cycle of conditional updates:
\begin{enumerate} 
\item Draw $\tau \given \bbeta^*, \blshrink^*$ from the density proportional to \eqref{eq:scale_parameter_posterior}.
When using Bayesian bridge priors, draw from the collapsed distribution $\tau \given \bbeta^*$ \citep{polson2014bayes_bridge}.

\item Draw $\blshrink \given \bbeta^*, \tau$ from the density proportional to \eqref{eq:scale_parameter_posterior}.
\item Draw
$\omega_i \given \bbeta^*, \X
	\sim \textrm{PolyaGamma}(\textrm{shape}=1, \textrm{tilting}=\x_i^\transpose \bbeta^*)$
for $i = 1, \ldots, n$.
\item Draw $\bbeta \given \bomega, \tau, \blshrink, \y, \X$ from the multivariate-Gaussian as given in \eqref{eq:beta_conditional}.
\end{enumerate}
We refer readers to \cite{polson2013polya_gamma} for more details on this data augmentation scheme.

\section{Pseudo-code for conjugate gradient method}
\label{sec:cg_pseudo_code}
Algorithm~\ref{alg:conjugate_grad} below describes the steps of CG for solving a linear system $\bPhi \bbeta = \bb$ from an initial guess $\bbeta_0$.
It essentially coincides with the form as described in Section~11.3.8 of \cite{golub2012matrix} but with suitably adapted notations.

\newcommand{\resid}{\bm{r}}
\newcommand{\searchDirec}{\bm{s}}
\newcommand{\bPhiSearchDirec}{\bm{v}}
\newcommand{\cgStepsize}{\varsigma}
\newcommand{\residNormPrev}{\rho_{\textrm{prev}}}
\newcommand{\residNormCurr}{\rho_{\textrm{curr}}}
\begin{algorithm}[H]
\caption{Conjugate gradient method\rule[-4pt]{0pt}{16pt}}
\label{alg:conjugate_grad}
\begin{algorithmic}
	\Function{ConjugateGradient\rule{0pt}{14pt}}{$\bPhi, \bbeta_0, \bb$}
		\State $\bbeta \gets \bbeta_0$
		\State $\resid \gets \bPhi \bbeta - \bb$
		\State $\residNormCurr \gets \| \resid \|^2$
		\State $\searchDirec \gets \resid$
		\While{convergence criteria is unmet} \Comment{See Section~\ref{supp:termination_criteria_for_cg_sampler} for the criteria}
			\State $\bPhiSearchDirec \gets \bPhi \searchDirec$
			\State $\cgStepsize \gets \residNormCurr / \langle \searchDirec, \bPhiSearchDirec \rangle$
			\State $\bbeta \gets \bbeta + \cgStepsize \searchDirec$
			\State $\resid \gets \resid - \cgStepsize \bPhiSearchDirec$
			\State $\residNormPrev \gets \residNormCurr$
			\State $\residNormCurr \gets \| \resid \|^2$
			\State $\searchDirec \gets \resid + (\residNormCurr / \residNormPrev) \searchDirec$
		\EndWhile
     \EndFunction
\end{algorithmic}
\end{algorithm}

\section{General principle behind prior-preconditioning}
\label{sec:principle_behind_prior_preconditioning}
	In the context of the CG sampler, the preconditioned matrix $\M^{-1/2} \bPhi \M^{-1/2}$ represents the precision matrix of the transformed parameter $\bm{\tilde{\beta}} = \M^{1/2} \bbeta$. In fact, preconditioning the linear system \eqref{eq:linear_system_for_cg_sampler} with a preconditioner $\bm{M}$ is equivalent to applying a parameter transformation $\bbeta \to \M^{1/2} \bbeta$ before employing the CG sampler. That is, we can apply one of the two strategies --- precondition the linear system or apply the parameter transformation --- to achieve exactly the same effect on the speed of the CG sampler.

	When we choose the prior precision as the preconditioner, the transformed parameter $\M^{1/2} \bbeta$ a priori has the identity precision matrix, before its distribution is modified via the likelihood. This perspective, combined with the fact that the eigenvalues of $\btilPhi$ represents the posterior precisions of $\M^{1/2} \bbeta$ along its principal components, suggests the following principle:
	\begin{principle-behind}
		Under a strongly informative prior, the posterior looks like the prior except in a small number of directions along which the data provide significant information. This translates into the eigenvalues of the prior-preconditioned matrix $\btilPhi$ clustering around 1 except for a relatively small number of large eigenvalues.
	\end{principle-behind}
	The eigenvalue structure of the prior-preconditioned matrix as predicted above is indeed observed across all of our numerical examples --- see Figure~\ref{fig:eigval_distributions_for_data_simulated_with_10_correlated_signal}, \ref{fig:eigval_distributions_for_treatment_model}, and \ref{fig:eigval_distributions_for_outcome_model}.

\section{Practical details on deploying CG sampler for sparse regression}
\label{supp:cg_sampler_practical_details}
	Throughout this section, we write $\bv \bw$ and $\bv / \bw$ to denote an element-wise multiplication and division of two vectors $\bv$ and $\bw$.

\subsection{Choice of initial vector for CG iterations}
\label{supp:cg_init_vector}
	Generally speaking, the CG iterations decrease the distance between the iterates $\bbeta_k$'s and the exact solution $\bbeta$ relative to the initial error $\|\bbeta_0 - \bbeta \|_{\bPhi}$. However, a choice of the initial vector is not as significant as that of the preconditioner which determines the eventual exponential convergence rate of CG. In other words, once the initial vector is chosen within a reasonable range, we should not expect a dramatic gain from further fine-tuning.
	When sampling $\bbeta$ from a sparse regression posterior, we indeed find it difficult to improve much over a simple initialization $\bbeta_0 = \bm{0}$, which is a reasonable choice as most coefficients are shrunken to zero. We achieve only small ($\lesssim 10$\%), though consistent, improvements by one of the alternative approaches we experimented with. We describe these approaches below.

	As an alternative to $\bbeta_0 = \bm{0}$, we consider three approaches for constructing the initial vector. At the $m$-th Gibbs update, the CG sampler needs to draw $\bbeta^{(m)}$ from the distribution $\bbeta \given \bomega^{(m - 1)}, \blshrink^{(m - 1)}, \tau^{(m - 1)}, \y, \X$. We have no control over the variability in $\bbeta^{(m)}$, so we focus on getting $\bbeta_0$ as close as possible to the mean of $\bbeta^{(m)}$. The two seemingly obvious choices of $\bbeta_0$ are 1) the previous MCMC sample $\bbeta^{(m - 1)}$ and 2) the MCMC estimate $m^{-1} \sum_{i = 0}^{m - 1} \bbeta^{(i)}$ of the expectation $\mathbb{E}[\bbeta \given \y, \X]$. These options, however, ignore the fact that the distribution of $\bbeta^{(m)}$ depends strongly on $\tau^{(m - 1)} \blshrink^{(m - 1)}$, which generally is very different from $\tau^{(m - i)} \blshrink^{(m - i)}$ for $i \geq 2$.

	We found the following approach, implemented in our CG-accelerated Gibbs sampler of Section~\ref{sec:application}, to yield a better estimate of the mean and hence a better initialization for $\bbeta^{(m)}$.
	We first estimate $\mathbb{E}[\tau^{-1} \blshrink^{-1} \bbeta \given \y, \X]$ by the estimator $\bm{\tilde{\beta}}_0 = m^{-1} \sum_{i = 0}^{m - 1} \bbeta^{(i)} / \tau^{(i - 1)} \blshrink^{(i - 1)}$, where we define $\tau^{(- 1)} \lshrink_j^{(- 1)} = 1$.
	Then we rescale it with the current conditioned values of $\tau$ and $\blshrink$, setting $\bbeta_0 = \tau^{(m-1)} \blshrink^{(m-1)} \bm{\tilde{\beta}}_0$ to obtain the initial vector. We compared this approach to the other two through a simulation study and found our choice to consistently yield smaller $\bPhi$-norm errors and faster convergence.

\subsection{Termination criterion for CG iterations}
\label{supp:termination_criteria_for_cg_sampler}
	An iterative method must be supplied with a termination criterion to decide when the current iterate $\bbeta_k$ is close enough to the exact solution.
	While different convergence metrics can be computed as bi-products of the CG iterations \citep{meurant2006lancoz-and-cg}, most existing linear algebra libraries uses the $\ell^2$ norm of the residual $\bm{r}_k= \bPhi \bbeta_k - \bm{b}$. It is possible to relate the residual norm to $\| \bbeta_k - \bbeta \|_2$ as
	\begin{equation*}
	\| \bbeta_k - \bbeta \|_2
	= \| \bPhi^{-1} \bm{r}_k \|_2
	\leq \| \bPhi^{-1} \|_2 \| \bm{r}_k \|_2.
	\end{equation*}
	For the purpose of sampling a Gaussian vector $\bbeta$, however, it is not at all clear when $\| \bm{r}_k \|_2$ or $\| \bbeta_k - \bbeta \|_2$ can be considered small enough. To address this problem, we develop an alternative metric tailored toward the CG sampler for sparse regression.

	We propose to assess the CG convergence in terms of the $\ell^2$ norm of the prior-preconditioned residual $\bm{\tilde{r}}_k = \btilPhi \bm{\tilde{\beta}}_k - \bm{\tilde{b}} = \tau \blshrink \bm{r}_k$. More specifically, we use the termination criterion
	\begin{equation}
	\label{eq:mean_sq_resid_for_cg_termination}
	p^{-1/2} \| \bm{\tilde{r}}_k  \|_{2}
		= \left\{ p^{-1} \textstyle{\sum}_{j = 1}^p (\bm{\tilde{r}}_k)_j^2 \right\}^{1/2}
		\leq 10^{-6},
	\end{equation}
	in terms of the root-mean-squared residual $p^{-1/2} \| \bm{\tilde{r}}_k  \|_{2}$.
	The criteria is justified by the norm $\| \bm{\tilde{r}}_k \|_2$ being an approximate upper bound to the following quantity:
	\begin{equation}
	\label{eq:cg_sampler_error_metric}
	\big\| \bm{\secmom}^{-1} \left( \bbeta_k - \bbeta \right) \big\|_2
	\ \text{ with } \ \xi_j^2 = \mathbb{E} \big[ \, \beta_j^2 \given \bomega, \blshrink, \tau, \y, \X \big].
	\end{equation}
	The standardization by second moment ensures that, when the computed error is small, all the coordinates of $\bbeta_k$ are close to those of $\bbeta$ either in terms of their means or variances of the target Gaussian distribution.

	To relate the norm of the prior-preconditioned residual $\bm{\tilde{\nlargest}}_k = \btilPhi \bm{\tilde{\beta}}_k - \bm{\tilde{b}}$ to the error metric \eqref{eq:cg_sampler_error_metric}, observe that $\bbeta_k - \bbeta = \M^{-1/2} \btilPhi^{-1} \bm{\tilde{\nlargest}}_k$ with $\M = \tau^{-2} \bLshrink^{-2}$ and hence
	\begin{equation}
	\label{eq:normalized_error_relation_to_precond_resid}
	\begin{aligned}
	\big\| \bm{\secmom}^{-1} \left( \bbeta_k - \bbeta \right) \big\|_2
		&= \big\|
			\bm{\secmom}^{-1} (\tau \blshrink) \big( \btilPhi^{-1} \bm{\tilde{\nlargest}}_k \big)
		\big\|_2
		\leq \left(\max_j \secmom_j^{-1} \tau \lshrink_j \right) \big\| \btilPhi^{-1} \bm{\tilde{\nlargest}}_k \big\|_2.
	\end{aligned}
	\end{equation}
	The inequality in the above equation only represents the worst-case scenario; in more typical settings, one expects the norm of $\bm{\secmom}^{-1} (\tau \blshrink) \bv$ to be related to that of $\bv$ through some average of $\secmom_j^{-1} \tau \lshrink_j$'s. In any case, we proceed to analyze a typical behavior of $\secmom_j^{-1} \tau \lshrink_j$ as the parameters $\bomega, \blshrink, \tau$ are drawn from a sparse regression posterior. As before, we interpret $\tau^2 \lshrink_j^2$ as the prior variance of $\beta_j$ (conditional on $\bomega, \blshrink, \tau$) before observing $\y, \X$. Note that
	\begin{equation*}
	(\secmom_j^{-1} \tau \lshrink_j)^2
		= \frac{
			\tau^2 \lshrink_j^2
		}{
			\mu_j^2 + \sigma_j^2
		},
	\end{equation*}
	where $\mu_j$ and $\sigma_j$ are the conditional mean and variance of $\beta_j \given \bomega, \blshrink, \tau, \y, \X$. So the quantity $\secmom_j^{-1} \tau \lshrink_j$ is not too far from 1 if either $|\mu_j|$ or $\sigma_j$ is in the same order of magnitude as $\tau \lshrink_j$. If $\beta_j$'s posterior is dominated by the prior shrinkage, we expect the posterior (conditional) variance to be not much smaller than the prior one and hence $\sigma_j \approx \tau \lshrink_j$. Otherwise, if $\sigma_j \ll \tau \lshrink_j$ and the likelihood is a dominant contributer to the posterior, then the posterior of $\tau \lshrink_j$ should concentrate around of $|\mu_j|$ to maximize the marginal likelihood of $\beta_j \approx \mu_j$. Either way, we can expect $\secmom_j^{-1} \tau \lshrink_j$ to be in the same order of magnitude as 1.

	From the relation \eqref{eq:normalized_error_relation_to_precond_resid} and our analysis above, we deduce that
	\begin{equation*}
	\begin{aligned}
	\big\| \bm{\secmom}^{-1} \left( \bbeta_k - \bbeta \right) \big\|_2
		&\lesssim \big\| \btilPhi^{-1} \bm{\tilde{\nlargest}}_k \big\|_2
		\leq \| \bm{\tilde{\nlargest}}_k \|_2,
	\end{aligned}
	\end{equation*}
	where the latter inequality follows from the fact that the largest eigenvalue of the prior-preconditioned matrix $\btilPhi^{-1}$ is bounded above by 1 by Theorem~\ref{thm:clustering_of_eigenvalues}.

\subsection{Modified preconditioner to handle coefficients with uninformative priors}
\label{supp:preconditioning_unshrunken_coef}
	When fitting a sparse regression model, standard practice is to include an intercept $\beta_0$ without any shrinkage, often with the improper flat prior $\pi(\beta_0) \propto 1$ \citepsupplement{park2008bayes-lasso}. Additionally, there may be predictors of particular interests, inference for whose regression coefficients is more appropriately carried out with uninformative or weakly-informative priors without shrinkage; see \citesupplement{zucknick2015variable-selection-for-cox} as well as the application in Section~\ref{sec:application} for examples of such predictors. The CG sampler can accommodate such predictors with an appropriate modification.

	For notational convenience, suppose that the regression coefficients are indexed so that the first $(\nunshrunk + 1)$-th coefficients $\beta_0, \beta_1, \ldots, \beta_{\nunshrunk}$ are to be estimated without shrinkage. We further assume that the unshrunk coefficients are given independent Gaussian priors $\beta_j \sim \normal(0, \sigma_j^2)$ for $0 < \sigma_j \leq \infty$ where $\sigma_j = \infty$ denotes an improper prior $\pi(\beta_j) \propto 1$. The precision matrix of $\bbeta \given \bomega, \blshrink, \tau, \y, \X$ then is given by
	\begin{equation*}
	\bPhi = \X^\transpose \bOmega \X +
		\begin{bmatrix}
		\textrm{diag}( \bm{\sigma} )^{-2} & \bm{0} \\
		\bm{0} & \tau^{-2} \bLshrink^{-2}
		\end{bmatrix}
	\end{equation*}
	for $\bm{\sigma} = (\sigma_0, \ldots, \sigma_\nunshrunk)$ where we employ the convention $1 / \sigma_j = 0$ if $\sigma_j = \infty$. The unshrunk coefficients $\beta_0, \ldots, \beta_\nunshrunk$ are distinguished from the shrunk ones by the fact that their prior scales $\sigma_j$ (before conditioning on $\y$ and $\X$) typically have little to do with their posterior scales (after conditioning on $\y$ and $\X$). For this reason, a naively modified preconditioner $\M = \textrm{diag}(\bm{\sigma}^{-2}, \tau^{-2} \blshrink^{-2})$ may not be appropriate, especially for coefficients with $\sigma_j \gg 1$ corresponding to uninformative priors.

	We propose a modified preconditioner of the form $\M = \textrm{diag}(\bm{\gamma}^{-2},\tau^{-2} \blshrink^{-2})$ for appropriately chosen $\bm{\gamma} = (\gamma_0, \gamma_1, \ldots, \gamma_q)$. For the corresponding preconditioned matrix $\btilPhi = \M^{- 1/2} \bPhi \M^{- 1/2}$, let $\btilPhi_{(-\nunshrunk - 1)}$ denote the sub-matrix with the first $q + 1$ rows and columns removed.
 	 As shown in Section~\ref{sec:theory_of_prior_preconditioning}, the sub-matrix $\btilPhi_{(-\nunshrunk - 1)}$ has an eigenvalue distribution particularly well-suited to induce rapid CG convergence.
	By the Poincar\'{e} separation theorem (Theorem~\ref{thm:poincare_separation}), all but $\nunshrunk + 1$ eigenvalues of the original matrix $\btilPhi$ lie within the largest and smallest eigenvalues of the sub-matrix $\btilPhi_{(-\nunshrunk-1)}$. In choosing $\gamma_j$'s, therefore, we are concerned with the behavior of the $\nunshrunk + 1$ additional eigenvalues introduced by the unshrunk coefficients. Additionally, we should err on the side of introducing larger eigenvalues than smaller ones as small eigenvalues impact CG convergence rates more significantly (Rule of Thumb~\ref{quasi-thm:cg_convergence}).

	With the above objectives in mind, we propose a choice
	\begin{equation}
	\label{eq:precond_scale_for_unshrunk_modified}
	\gamma_j = c \, \widehat{\psi}_j
		\ \text{ for } \
		\widehat{\psi}_j^2 \approx \textrm{var}(\beta_j \given \bomega, \blshrink, \tau, \y, \X)
	 	\, \text{ and } \, c \geq 1.
	\end{equation}
	To explain the reasoning behind the above choice, let $\bbeta_q = (\beta_0, \ldots, \beta_q)$ and $\bbeta_{(-q)} = (\beta_{q + 1}, \ldots, \beta_p)$. The smallest eigenvalues of $\btilPhi$ correspond to the largest variances (conditional on $\bomega, \blshrink, \tau, \y, \X$) of the Gaussian vector $\M^{1/2} \bbeta = (\bm{\gamma}^{-1} \bbeta_q, \tau^{-1} \blshrink^{-1} \bbeta_{(-q)})$ along its principal components. The variances of $\tau^{-1} \blshrink^{-1} \bbeta_{(-q)}$ conditional on $\bm{\gamma}^{-1} \bbeta_q$ are bounded above by 1 along any directions because the eigenvalues of the conditional precision matrix $\btilPhi_{(-\nunshrunk)}$ are bounded below by 1. Therefore, we do not expect $(\bm{\gamma}^{-1} \bbeta_q, \tau^{-1} \blshrink^{-1} \bbeta_{(-q)})$ to have variances much larger than 1 unless the marginal variances of $\bm{\gamma}^{-1} \bbeta_q$ are large. The proposed choice of $\gamma_j$'s ensure that the marginal variances of $\gamma_j^{-1} \beta_j$'s are less than $c^{-1}$
	and thus prevent an introduction of small eigenvalues to $\btilPhi$. The multiplicative factor $c \geq 1$ provides a further safeguard as we are more concerned about small eigenvalues than large ones.

	As the parameters $\bomega, \tau, \lshrink$ are constantly updated during Gibbs sampling, technically we cannot estimate $\textrm{var}(\beta_j \given \bomega, \blshrink, \tau, \y, \X)$ from earlier MCMC samples. In practice, we instead use
	\begin{equation}
	\label{eq:precond_scale_for_unshrunk}
	\gamma_j = c \, \widehat{\eta}_j
		\ \text{ for } \
		\widehat{\eta}_j^2 \approx \textrm{var}(\beta_j \given \y, \X).
	\end{equation}
	Using \eqref{eq:precond_scale_for_unshrunk} in place of \eqref{eq:precond_scale_for_unshrunk_modified} is justified in two ways. First, by the variance decomposition formula we have
	\begin{equation*}
	\mathbb{E}_{\bomega, \tau, \blshrink \given \y, \X} \left[
		\textrm{var} \left(
			\beta_j \given \bomega, \blshrink, \tau, \y, \X
		\right)
	\right]
		\leq \textrm{var}(\beta_j \given \y, \X).
	\end{equation*}
	In other words, on average $\widehat{\eta}_j$ is an overestimate of $\widehat{\psi}_j$ which, as we have discussed, is more preferable to an underestimate. Secondly, the unshrunk coefficients $\beta_0, \ldots, \beta_q$ have only limited dependency on the shrinkage parameters $\tau$ and $\blshrink$ through $\beta_{q + 1}, \ldots, \beta_{p}$. Also, in our experience we have never noticed any obvious correlations between the posterior samples of $\bomega$ and $\bbeta$. For these reasons, we suspect that $\textrm{var}(\beta_j \given \y, \X)$ is generally not too far from $\textrm{var}(\beta_j \given \bomega, \blshrink, \tau, \y, \X)$.


	Once chosen within reasonable ranges, the precise values of $\gamma_j$'s have limited effect on the CG convergence rate.
	This is because all but $(q + 1)$ eigenvalues are well-behaved regardless of the choice of $\gamma_0, \ldots, \gamma_q$ and CG has an ability to eventually ``remove'' the extreme eigenvalues (Rule of Thumb~\ref{quasi-thm:cg_convergence}). In our simulations (not presented in the manuscript), we found the delay in the CG convergence to be no more than $20 \sim 30$\% even when the values of $\gamma_j$'s were off by two orders of magnitude from empirically-determined optimal values. The convergence rate achieved by the proposed choice of $\bm{\gamma}$ was essentially indistinguishable from that achieved by an optimal choice.

\section{Additional simulation results on CG sampler convergence behaviors for Section~\ref{sec:cg_sampler_demonstration}}

\subsection{Effects of error metrics and right-hand vectors}
\label{supp:alt_cg_convergence_plot_for_simulated_data}
	The CG convergence behavior as illustrated in Figure~\ref{fig:cg_convergence_plot_for_simulated_data} remains qualitatively consistent across different random draws of the right-hand vector $\bb$ and across various metrics of the approximation error.
	Figure~\ref{fig:cg_convergence_plot_with_random_seed_for_simulated_data} shows the average of the coordinate-wise relative error as a function of the CG iterations as in Figure~\ref{fig:cg_convergence_plot_for_simulated_data}, but with an individual line for each of the random draws of $\bb$. The convergence behaviors under the prior and Jacobi preconditioners are plotted in the two separate sub-figures to avoid  cluttering the plot with too many lines. Figure~\ref{fig:cg_convergence_plot_with_different_error_metrics_for_simulated_data} shows the CG convergence behaviors under the two additional error metrics: the $\ell^2$-norm and $\bPhi$-norm distance between $\bbeta_k$ and $\bbeta_{\textrm{direct}}$.

	We also investigate how much the coordinate-wise error $\epsilon_j(k) = | (\bbeta_k - \bbeta_{\textrm{direct}})_j / (\bbeta_{\textrm{direct}})_j |$ varies across the coefficient index $j$.
	To summarize this high-dimensional information --- with $k = 1, \ldots, 500$ and $j = 1, \ldots, 10{,}000$ --- we focus on the error trajectories $\{ \epsilon_j(k) \}_{k = 1, 2, \ldots}$ along the coordinates with the largest and smallest errors.
	More precisely, we rank the coordinates by their running maximum error values $\max_k \{ \epsilon_j(k) \}$ and select the top and bottom fifty coordinates accordingly.
	We then plot the error trajectories along these coordinates in Figure~\ref{fig:cg_rel_coord_wise_error_traj_for_simulated_data}.
	The plot shows that the CG error varies considerably across the coordinates but that they all decay roughly at the same rate as a function of $k$.
	The plot is based on a single draw of the right-hand vector $\bb$, but the results are qualitatively similar across all the random draws.
	Additionally, we found no obvious pattern in the distribution of relative errors across the coordinates.
	For instance, one might wonder if larger values of $\bbeta_{\textrm{direct}, j}$ corresponds to larger (or smaller) relative errors, but Figure~\ref{fig:cg_coef_val_vs_final_rel_error_for_simulated_data} shows this not to be the case.

	\FloatBarrier
	\begin{figure}
		\hspace{-.03\linewidth}
		\includegraphics[width=1.03\linewidth]{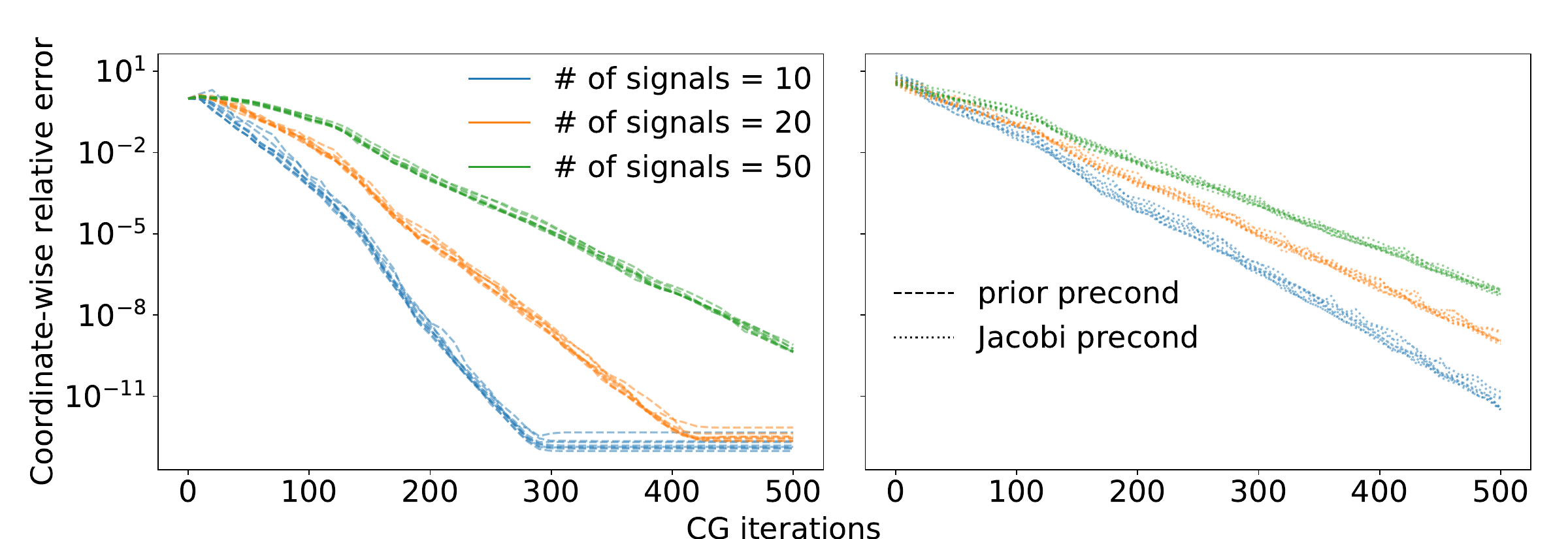}
		\caption{Plots of the CG approximation errors (with the same error metric as used in Figure~\ref{fig:cg_convergence_plot_for_simulated_data}) as a function of the number of CG iterations. Shown on the left is under the prior preconditioner and on the right is under the Jacobi preconditioner. The three different colors corresponds to the three different posterior conditional distributions of $\bbeta$ with the varying numbers of true signals. Within the same color, the different lines correspond to the different random draws of the right-hand vector $\bb$ generated as in \eqref{eq:cg_sampler_target_vector}.}
		\label{fig:cg_convergence_plot_with_random_seed_for_simulated_data}
	\end{figure}

	\begin{figure}
		\hspace{-.03\linewidth}
		\includegraphics[width=1.04\linewidth]{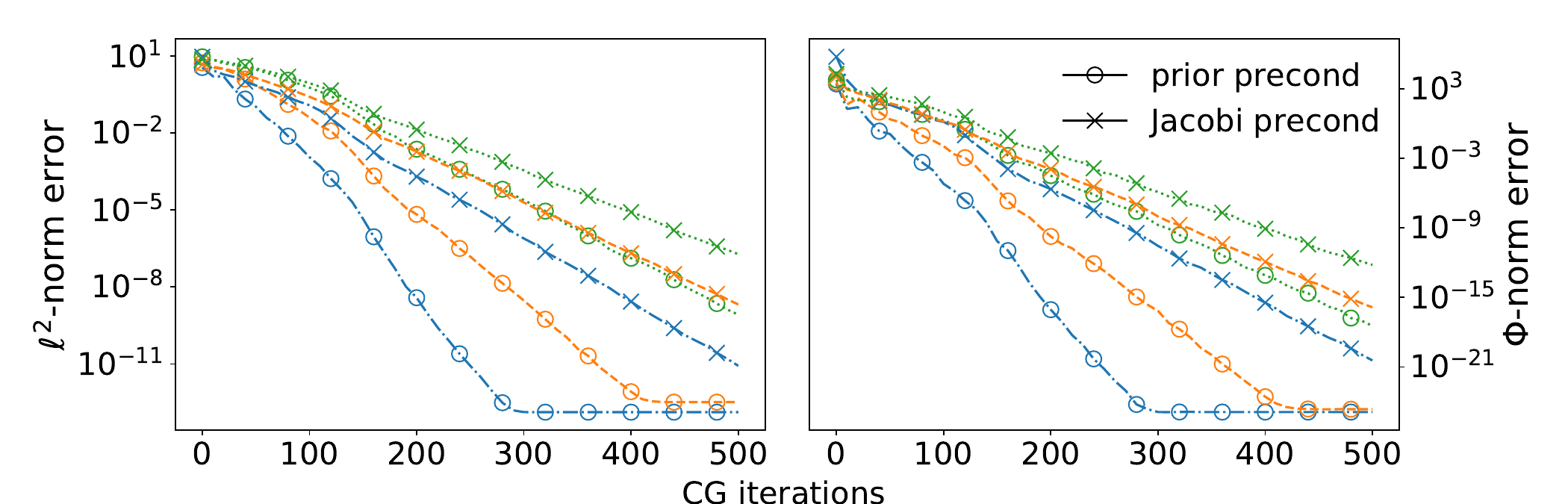}
		\caption{Plots of the $\ell^2$-norm (on the left) and the $\bPhi$-norm (on the right) between $\bbeta_k$ and $\bbeta_{\textrm{direct}}$ as a function of the number of CG iterations. Other than the use of the two alternative error metrics for the $y$-axes, each of the plotted lines directly corresponds to the one with the same color and marker in Figure~\ref{fig:cg_convergence_plot_for_simulated_data}.}
		\label{fig:cg_convergence_plot_with_different_error_metrics_for_simulated_data}
	\end{figure}

	\begin{figure}
	\centering
	\includegraphics[width=.975\linewidth]{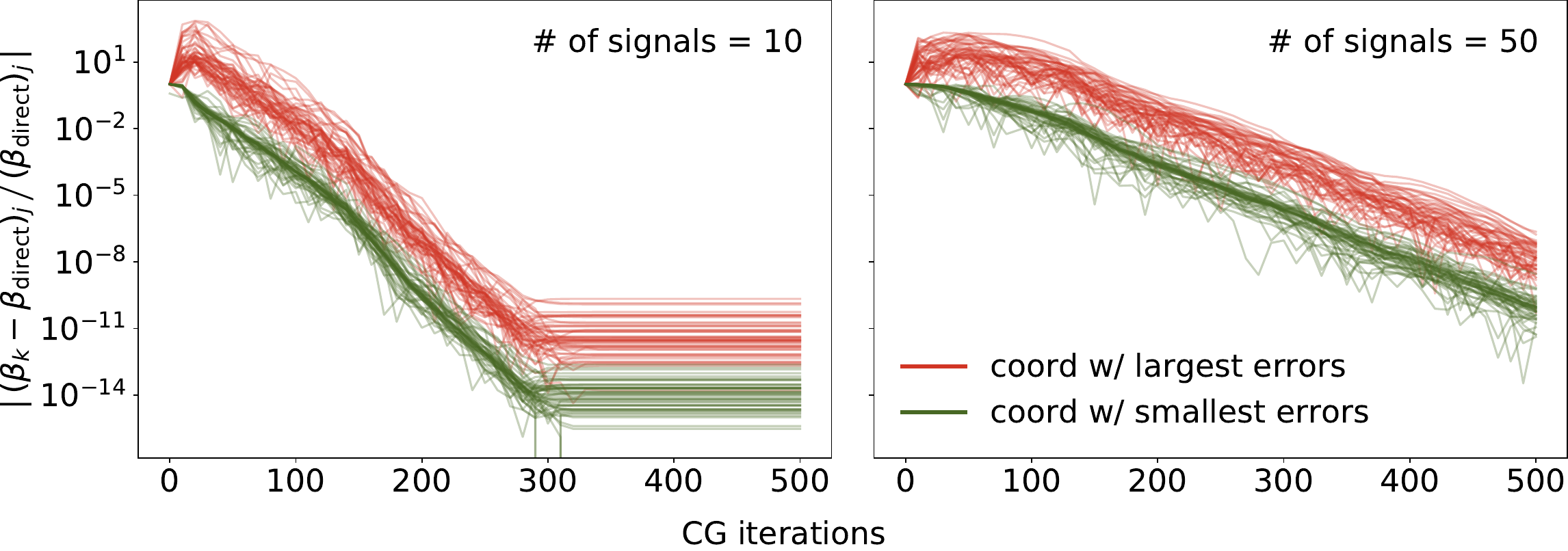}
	\caption{%
		Fifty largest and smallest of coordinate-wise relative errors $| (\bbeta_k - \bbeta_{\textrm{direct}})_j / (\bbeta_{\textrm{direct}})_j |$ as a function of the number of CG iterations.
		The left and right plot correspond to the CG sampler applied to the synthetic posteriors with 10 and 50 signals respectively.
	}
	\label{fig:cg_rel_coord_wise_error_traj_for_simulated_data}
	\end{figure}

	\begin{figure}
	\centering
	\includegraphics[width=\linewidth]{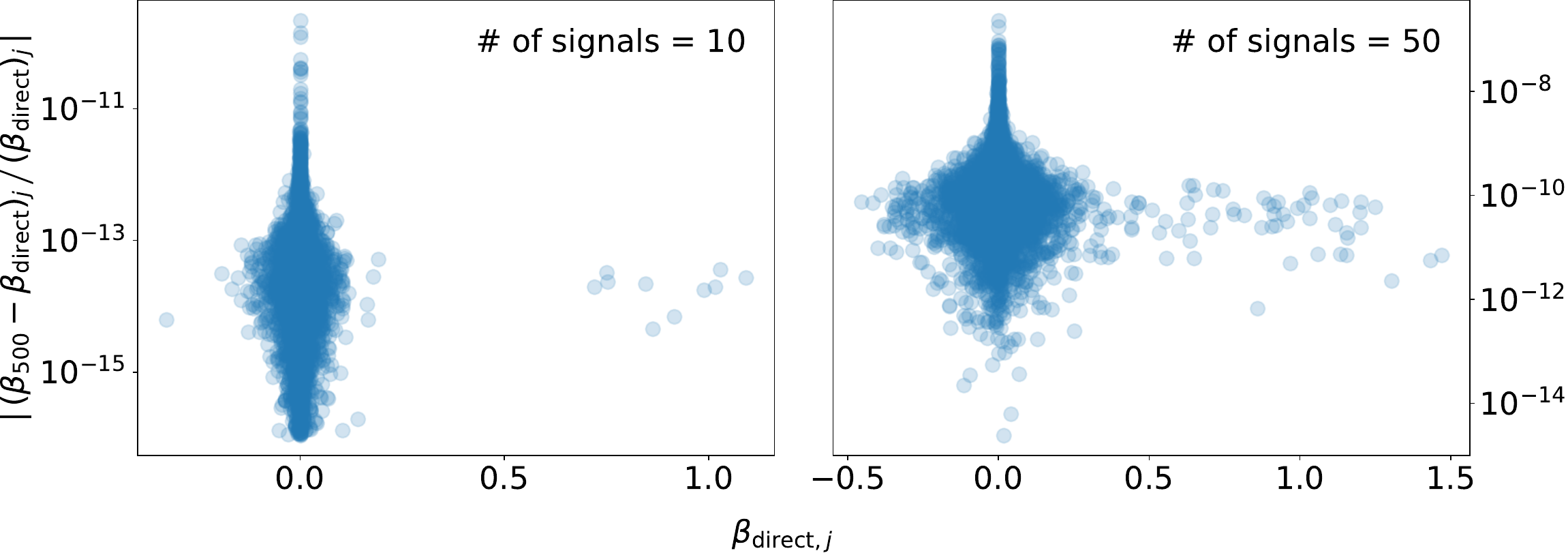}
	\caption{%
		Coordinate-wise relative error after 500 CG iterations, $| (\bbeta_{500} - \bbeta_{\textrm{direct}})_j / (\bbeta_{\textrm{direct}})_j |$, plotted against the exact solution value $\bbeta_{\textrm{direct}, j}$.
		As can be seen, there is no obvious relation between the magnitudes of relative error and the solution values.
	}
	\label{fig:cg_coef_val_vs_final_rel_error_for_simulated_data}
	\end{figure}

\FloatBarrier
\subsection{Effects of correlation among predictors}
\label{sec:effects_of_correlation}
As discussed in Section~\ref{sec:cg_convergence_on_simulated_data}, the convergence rate of CG sampler is also a function of correlation among the predictors as well as the number of true signals.
To demonstrate this, we repeat the experiment of Section~\ref{sec:cg_sampler_demonstration} with a synthetic design matrix having independent columns but otherwise with the exact same set-ups.
The design matrix has its entries simulated from i.i.d.\ Gaussians and is subsequently standardized.
Comparing Figure~\ref{fig:cg_convergence_plot_for_simulated_data_with_iid_design} below with Figure~\ref{fig:cg_convergence_plot_for_simulated_data}, it is clear that CG converges more quickly for the same number of true signals when applied to the posteriors under this set-up.

\begin{figure}
\centering
\includegraphics[width=0.75\linewidth]{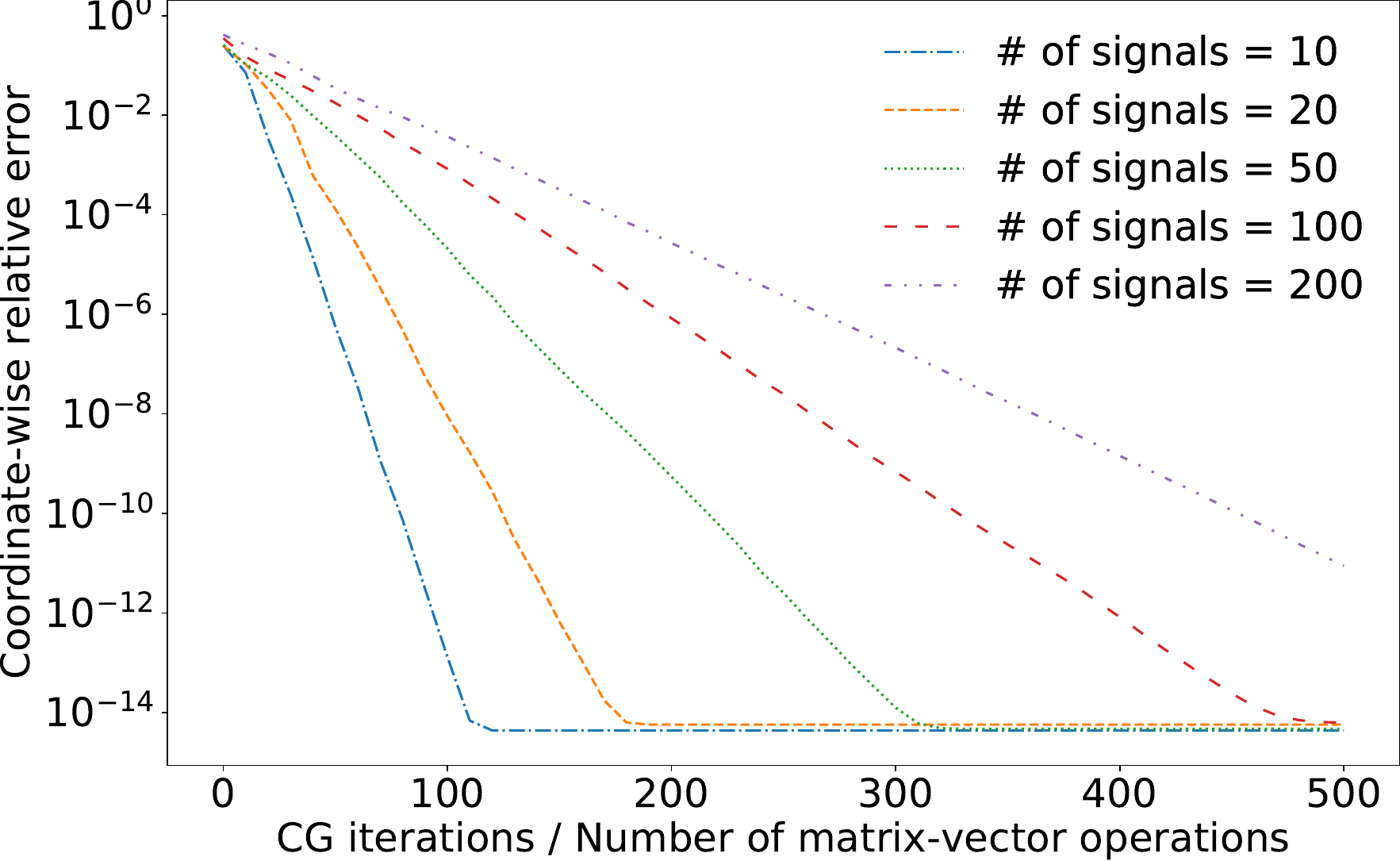}
\caption{%
	Plot of the prior-preconditioned CG approximation error  vs.\ the number of CG iterations.
	The CG sampler is applied to synthetic sparse regression posteriors based on the synthetic design matrix with independent columns. 
	The different line styles correspond to the different numbers of true signals.
}
\label{fig:cg_convergence_plot_for_simulated_data_with_iid_design}
\end{figure}

\subsection{Effects of number of factors in synthetic design matrix}
\label{sec:effects_of_num_factors}
Our results in Section~\ref{sec:cg_sampler_demonstration} are based a synthetic design matrix with $m = 99$ underlying factors as given in \eqref{eq:factor_model_design}.
Here we repeat the same simulation but using synthetic design matrices with a larger ($m = 199$) and smaller ($m = 49$) number of underlying factors.
As we vary the number of factors, we keep the eigenvalues of the resulting covariance matrix $\textrm{Cov}(\x)$ uniformly spaced in the range $[1, 100]$ with distance of $99 / m$ in-between.

To be more precise, following the procedure described in Section~\ref{sec:simulation_setup}, we first sample a set of $m = 49$ and $= 199$ orthonormal vectors $\bm{u}_1, \ldots, \bm{u}_{m} \in \mathbb{R}^p$ uniformly from a Stiefel manifold.
We then set the predictor $\x_i$ for the $i$-th observation as
	\begin{equation}
	\label{eq:factor_model_with_varying_num_factors}
	\x_i = \sum_{\ell = 1}^{m} f_{i, \ell} \bm{u}_\ell + \bm{\epsilon}_i
		\ \text{ for }
		f_{i, \ell} \sim \normal \left(0, \left[ 100 - \frac{99}{m} \left( \ell - 1 \right) \right]^2 - 1 \right)
		\text{ and }
		\bm{\epsilon}_i \sim \normal \left( \bm{0}, \I_p \right).
	\end{equation}
As before, this is equivalent to sampling $\x_i \sim \normal \! \left(\bm{0}, \bm{U} \bm{D} \bm{U}^\transpose \right)$ for a diagonal matrix $\bm{D}$ with $\sqrt{D_{\ell \ell}} = \max \{100 - \frac{99}{m} (\ell - 1), 1\}$ and orthonormal matrix $\bm{U}$ sampled uniformly from the space of orthonormal matrices.

It is worth noting that more factors do \textit{not} mean more correlations among the predictors.
In fact, we empirically find that more factors lead to less correlations among the predictors (Figure~\ref{fig:pairwise_corr_hist_with_varying_number_of_factors}).
This makes intuitive sense ---
when the predictors' variability is concentrated on a smaller number of factors, it induces stronger correlations among the predictors.

Figure~\ref{fig:cg_convergence_plot_with_varying_num_factors} shows the results of the CG sampler performance evaluation based on the synthetic design matrices with $m = 49$ and $m = 199$ underlying factors.
The simulation results here show the same patterns as Section~\ref{sec:cg_sampler_demonstration} for the most parts, but also provide us with additional insight.
The new pattern to emerge here is that the CG convergence rate is faster with a fewer number of underlying factors behind the design matrix.
In light of the observation that fewer factors correspond to more correlation (Figure~\ref{fig:pairwise_corr_hist_with_varying_number_of_factors}), however, this new finding is again explained by the overall principle as previously observed: the sparser the regression coefficient posteriors, the faster the prior-preconditioned CG's convergence rate.
Stronger correlation among the predictors make it difficult for the likelihood to separate out significant coefficients from the rest.
This leads to less sparse regression coefficients under the posterior and hence to a slower CG convergence.

Another curious phenomenon observed here is that the Jacobi preconditioner becomes competitive with the prior preconditioner in the $m = 49$ case with $50$ true signals.
Here, the posterior is less sparse due to both the strong correlation among the predictors and the large number of signals.
It is unclear whether this is a general pattern that would hold outside the specific generative model we chose for design matrices here.
Although the Jacobi preconditioner is accepted as one of the best choices for a diagonally dominant matrix \citep{golub2012matrix}, as is the case for the conditional precision of $\bbeta$ under the sparse regression posterior \eqref{eq:beta_conditional}, it is difficult to obtain a quantitative result on the eigenvalue structure of the Jacobi-preconditioned matrix.
Diagonal dominance plays prominent roles in qualitative properties of a matrix, such as its invertibility and positive definteness, but quantitative results remain scarce \citep{golub2012matrix, horn2012matrix-analysis}.

At a minimum, our results here suggest that the Jacobi preconditioner warrants consideration despite lagging behind the prior-preconditioner in all but one of our examples.
We still recommend the prior-preconditioner as the first choice given its sound theoretical support and the overall superiority demonstrated in the range of examples, including the real data case of Section~\ref{sec:application}.
That said, it is easy enough to adaptively choose the better of the two preconditioners for a specific posterior with negligible computational overhead, e.g. by comparing their relative performance at every 100 Gibbs iterations.
All in all, the CG sampler performs well under both the prior and Jacobi preconditioner, both delivering the convergence within $k \ll p$ iterations for solving the $p \times p$ linear system.

\begin{figure}
\centering
	\subfigure{
		\includegraphics[width=.5\linewidth]{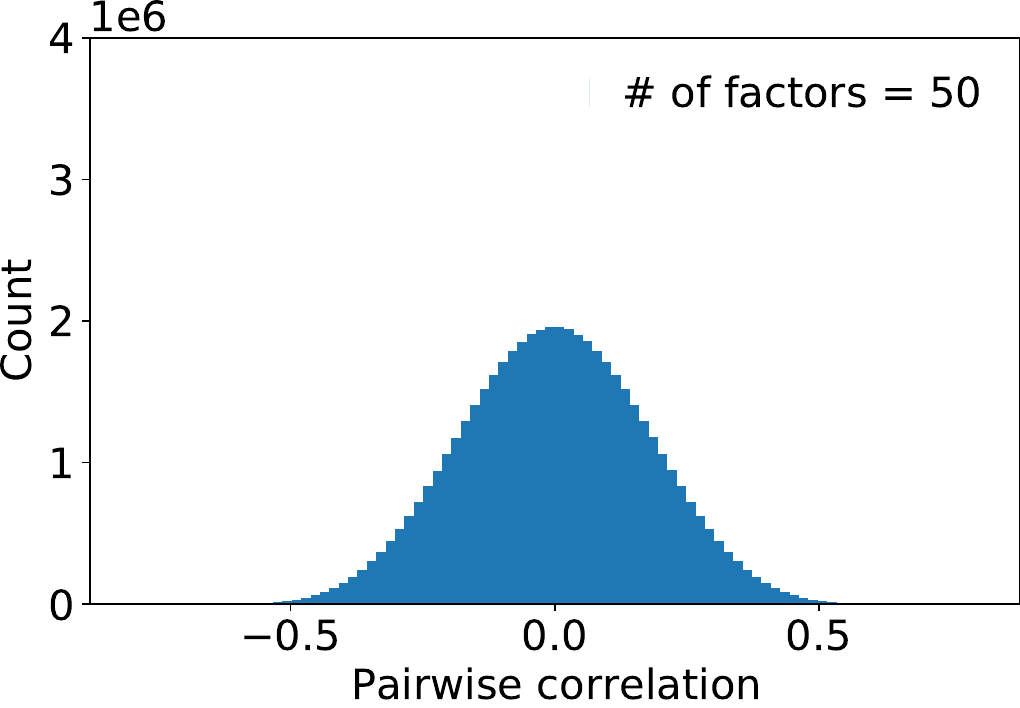}
	}
	\subfigure{
		\includegraphics[width=.5\linewidth]{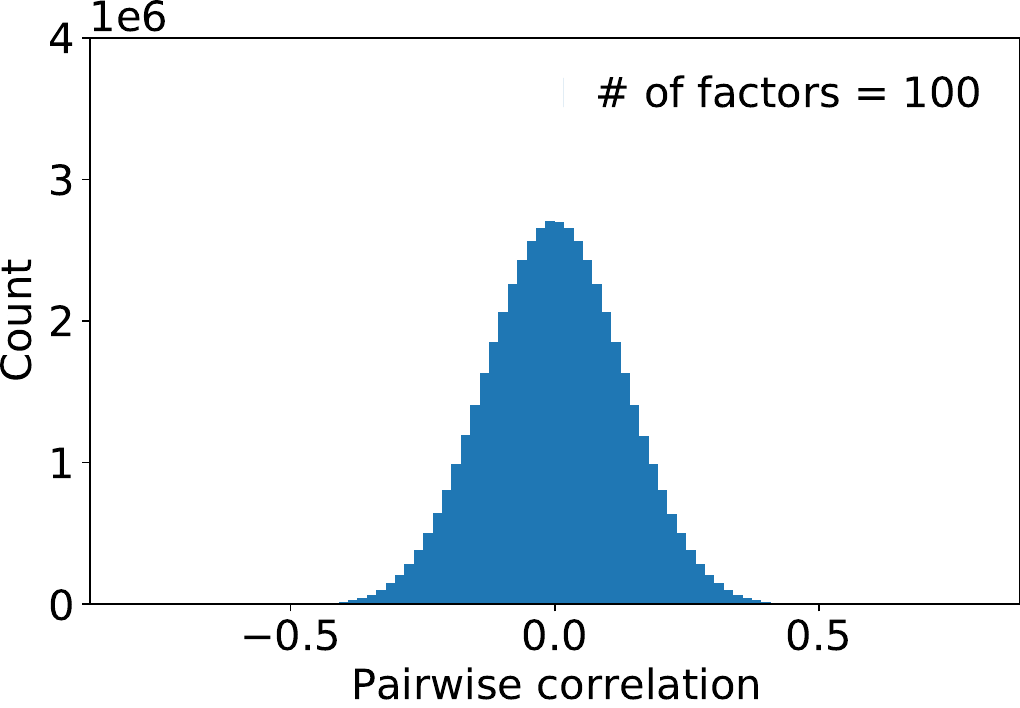}
	}
	\subfigure{
		\includegraphics[width=.5\linewidth]{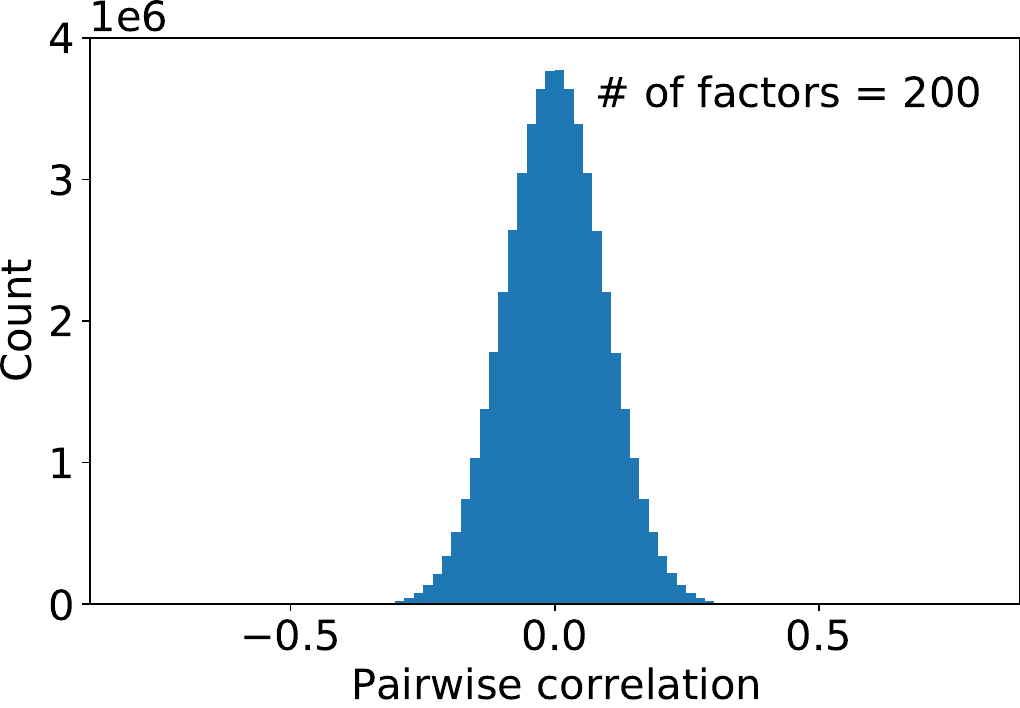}
	}
\caption{%
	Histograms of pairwise correlations among the $10{,}000$ predictors simulated with the varying numbers of underlying factors $m = 49$, $99$, and $199$ according to \eqref{eq:factor_model_with_varying_num_factors}.
}
\label{fig:pairwise_corr_hist_with_varying_number_of_factors}
\end{figure}

\begin{figure}
\centering
\subfigure[Using the synthetic design matrix with $m = 49$ underlying factors]{
	\includegraphics[width=0.75\linewidth]{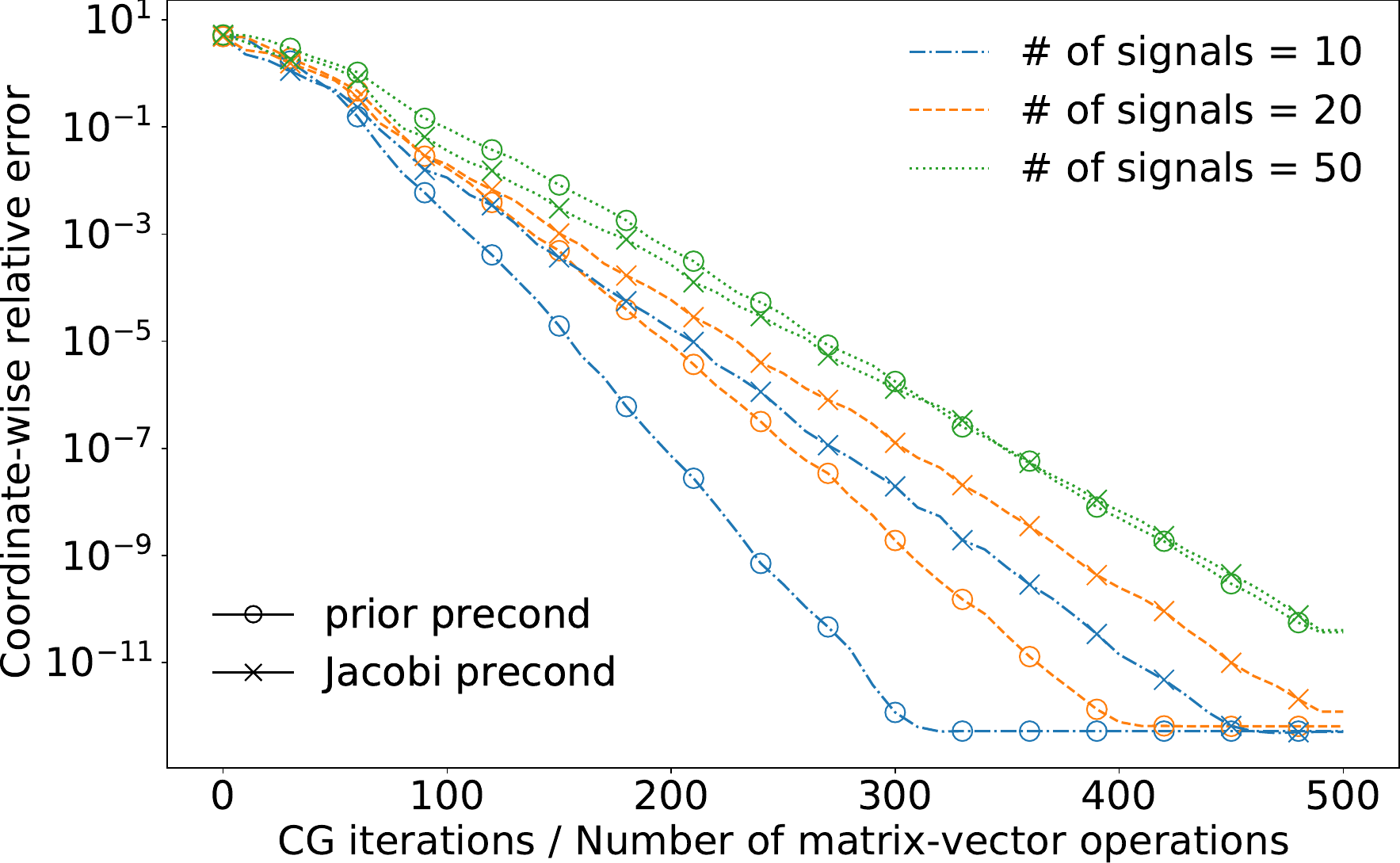}
}
\subfigure[Using the synthetic design matrix with $m = 199$ underlying factors]{
	\includegraphics[width=0.75\linewidth]{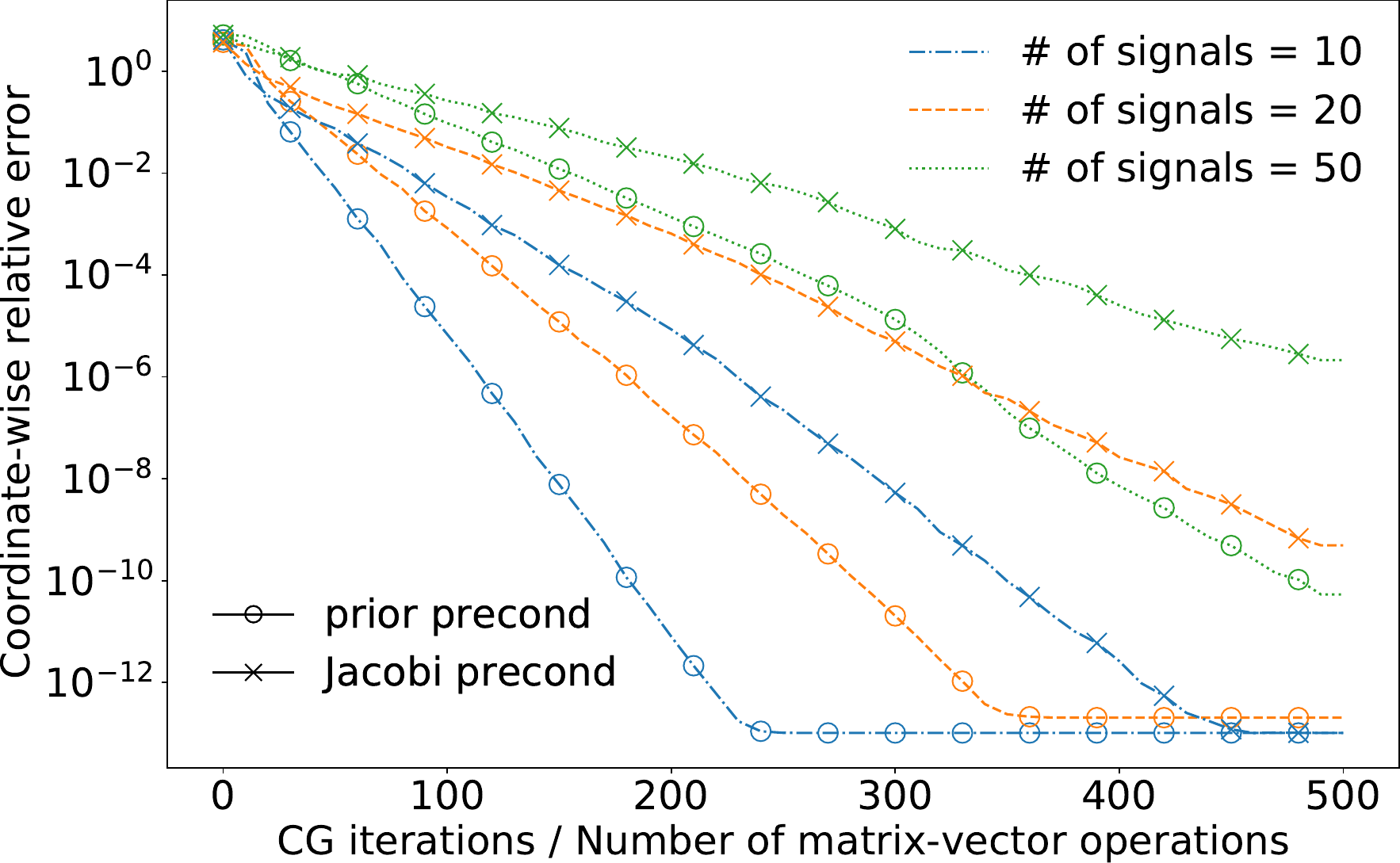}
}
\caption{%
	Plot of the CG approximation error vs.\ the number of CG iterations.
	The set-up is identical to that for Figure~\ref{fig:cg_convergence_plot_for_simulated_data}, except for the fact that the CG sampler is applied to synthetic sparse regression posteriors base on synthetic design matrices with $m = 49$ and $m = 199$ underlying factors, as opposed to that with $m = 99$. The different line styles correspond to the different numbers of true signals. The circle and cross markers denote the uses of the prior and Jacobi preconditioners.
}
\label{fig:cg_convergence_plot_with_varying_num_factors}
\end{figure}

\FloatBarrier
\section{CG-accelerated Gibbs on synthetic data of Section~\ref{sec:cg_sampler_demonstration}: quality of posterior samples and computational speed}
\label{sec:cg_sampler_accuracy_and_speed_for_synthetic_data}

In Section~\ref{sec:cg_sampler_demonstration}, we focus on the prior-preconditioned CG's performance within one iteration of the Gibbs sampler.
Here we more holistically compare the performance of the two Gibbs samplers: one based on the CG sampler and the other on the direct linear algebra sampler.
We show in particular that, when using the criteria of Section~\ref{supp:termination_criteria_for_cg_sampler} in terminating CG iterations, the outputs of the two Gibbs samplers are statistically indistinguishable.
This confirms that we can use the CG sampler as a drop-in replacement within the Gibbs sampler to deal with conditional updates of $\bbeta$ from the high-dimensional Gaussian \eqref{eq:beta_conditional}.
Consequently, the prior-preconditioned CG's performance in solving the linear system \eqref{eq:linear_system_for_cg_sampler} directly translates into the performance of the CG-accelerated Gibbs sampler.

Also investigated in this section is how the two Gibbs sampler perform in terms of actual computing time.
As evident from our discussion in Section~\ref{sec:complexity_analysis}, computational gains from CG-acceleration depends as much on the size of a problem as posterior sparsity level.
Therefore, we complement our simulation study of Section~\ref{sec:cg_sampler_demonstration} by varying not only the number of true signals but also the size of the synthetic design matrices.
Our results clearly show that the CG-accelerated Gibbs sampler delivers increasing advantage as the problem size grows.

Finally, we investigate how the CG sampler's performance depends on the choice of shrinkage prior. 
The Bayesian bridge prior $\pi(\beta_j \given \tau) \propto \tau^{-1} \exp\left( - | \beta_j / \tau |^\alpha \right)$ allows us to conveniently vary its behavior through the exponent $\alpha > 0$, so we assess this question by running the CG-accelerated Gibbs sampler with $\alpha \in \{1/4, 1/8\}$ in addition to $\alpha = 1/2$ as used in Section~\ref{sec:cg_sampler_demonstration}. 
We find that the smaller $\alpha$'s improve separation of true signals from the rest, which in turn induce faster convergences of CG. 
We also replicate our findings using synthetic data sets generated from different random seeds, thereby verifying that  our findings are not artifacts of quirks in a specific synthetic data set.

\subsection{Accuracy of CG sampler}
\label{sec:cg_sampler_accuracy}
	Since CG technically does not yield the exact solution when terminated at $k < p$ iterations, we assess the accuracy of the samples generated by the CG-accelerated Gibbs by comparing them against the ``ground truth'' samples generated by the direct Gibbs.
	While the nascent field of \textit{probabilistic numerics} provides potentially useful theoretical quantification of the CG sampler accuracy, the current state-of-the-art appears to fall short of practical uncertainly quantification \citep{cockayne2018bayesian-cg, hennig2019discussion-on-bayes-cg}.
	Instead, we empirically demonstrate that perturbation, if any, of the target distribution due to the CG approximation error is so small that it is essentially negligible within the Bayesian sparse regression context. 
	From a more qualitative perspective, the theoretical results of \cite{roberts1998perturbed_chain} guarantee that, under a sufficiently small numerical error, a geometrically ergodic chain retains its convergence rate and has its stationary distribution close to the original.

	We compare the two sets of samples in terms of the primary parameter of interest $\bbeta$.
	The mixing of $\bbeta$ is generally fast for any fixed $\tau$, but the dependency between $\bbeta$ and $\tau$ somewhat reduces the overall mixing rate.
	To ensure that the effective sample sizes for $\bbeta$ are large enough to adequately characterize the stationary distribution, therefore, we employ an empirical Bayes approach.
	We first find a value $\hat{\tau}$ which approximately maximizes the marginal likelihood through Monte Carlo expectation-maximization algorithm \citep{park2008bayes-lasso}.
	We then run the two samplers conditional on $\tau = \hat{\tau}$ for 5,000 iterations.

	We test for differences between the two sets of the MCMC samples as follows. We first set $\bm{\hat{\beta}}_{\text{bench}}$ and $\bm{\hat{\beta}}_{\text{cg}}$ to be the posterior means estimated by averaging the samples from the direct Gibbs (used as a benchmark) and CG-accelerated Gibbs.
	The plots on the left column of Figure~\ref{fig:mcmc_estimators_first_moment_comparison_plot_for_simulated_data} compares these two estimators graphically as an informal sanity check.
	We then estimate the effective sample sizes of $\beta_j$ from the respective samplers using the R {CODA} package \citep{plummer2006coda}. These estimated effective sample sizes can be used to estimate the Monte Carlo standard deviations $\hat{\sigma}_j$ of the differences ${\hat{\beta}}_{\text{bench}, \, j} - {\hat{\beta}}_{\text{cg}, \, j}$. When the two sets of samples have the same stationary distribution, the standardized differences $({\hat{\beta}}_{\text{bench}, \, j} - {\hat{\beta}}_{\text{cg}, \, j} ) / \hat{\sigma}_j$ are approximately distributed as the standard Gaussians by the Markov chain central limit theorem \citep{geyer11intro_to_mcmc}.
	The plots on the right column of Figure~\ref{fig:mcmc_estimators_first_moment_comparison_plot_for_simulated_data} confirm that the histograms of the standardized distances closely match the ``null'' Gaussian distribution.

	We perform the same diagnostic on the estimators of the posterior second moment of $\bbeta$ and obtain similar results as shown in Figure~\ref{fig:mcmc_estimators_second_moment_comparison_plot_for_simulated_data}.

	Finally, we compare the effective sample sizes (ESS) of $\beta_j$'s generated by the two Gibbs samplers.
	Given that the two sets of samples are statistically indistinguishable, we expect their ESS's to also coincide.
	Figure~\ref{fig:mcmc_estimators_ess_comparison_plot_for_simulated_data} confirm that this is indeed the case;
	besides some natural variations from the statistical estimations of ESS (by the R package \textsc{coda}), we find no obvious differences in the ESS's from the two Gibbs samplers.

	 \begin{figure}
	 \centering
	 \includegraphics[width=0.875\linewidth]{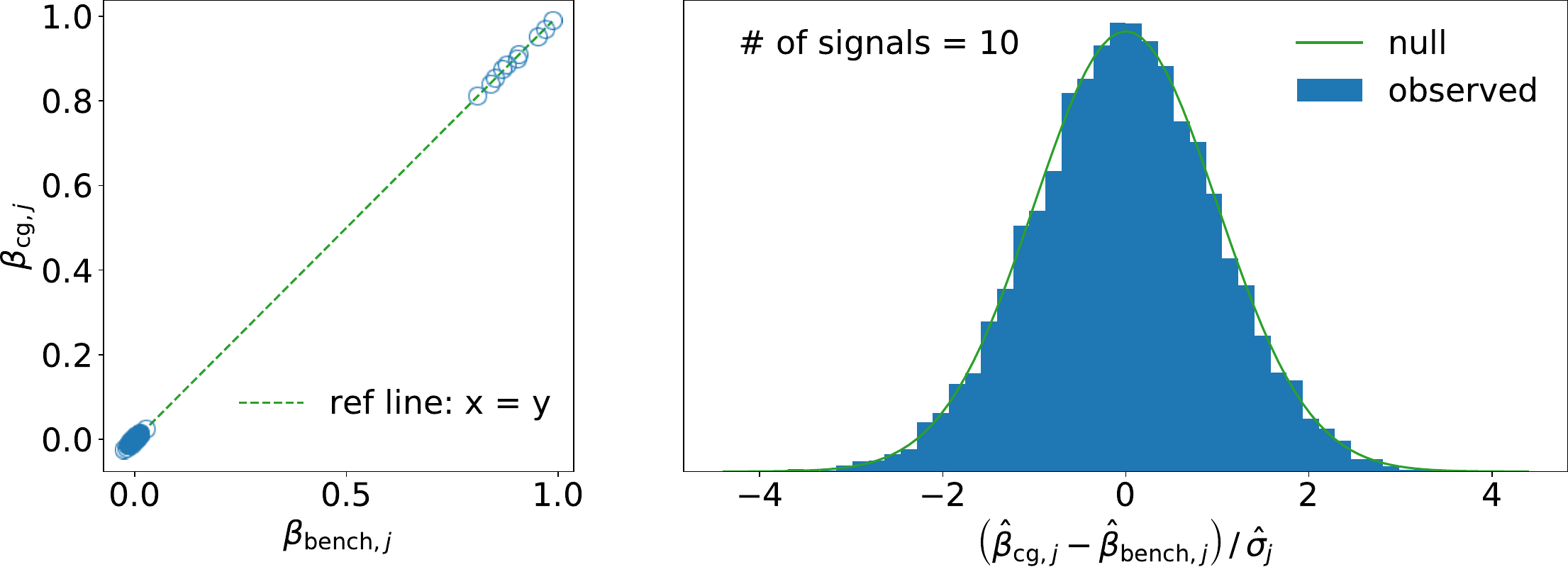}
	 \vspace*{.5\baselineskip}

	 \includegraphics[width=0.875\linewidth]{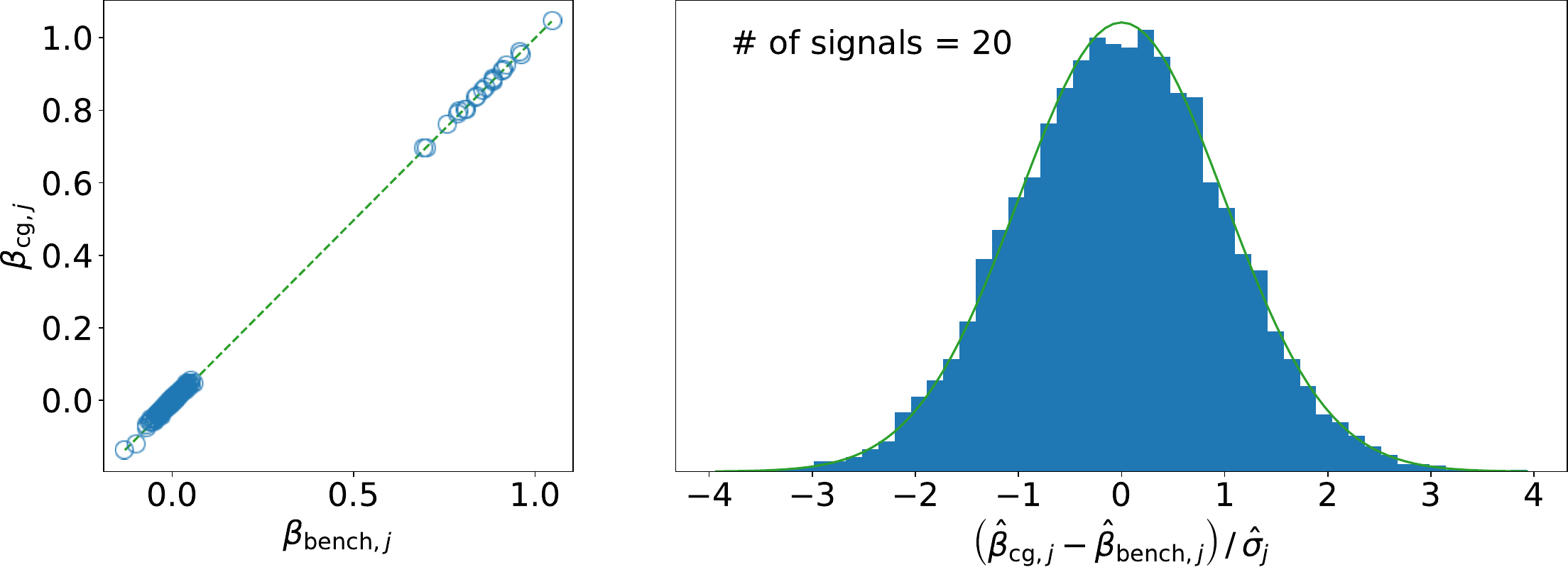}
	 \vspace*{.5\baselineskip}

	 \hspace*{-.035\linewidth}
	 \includegraphics[width=0.9\linewidth]{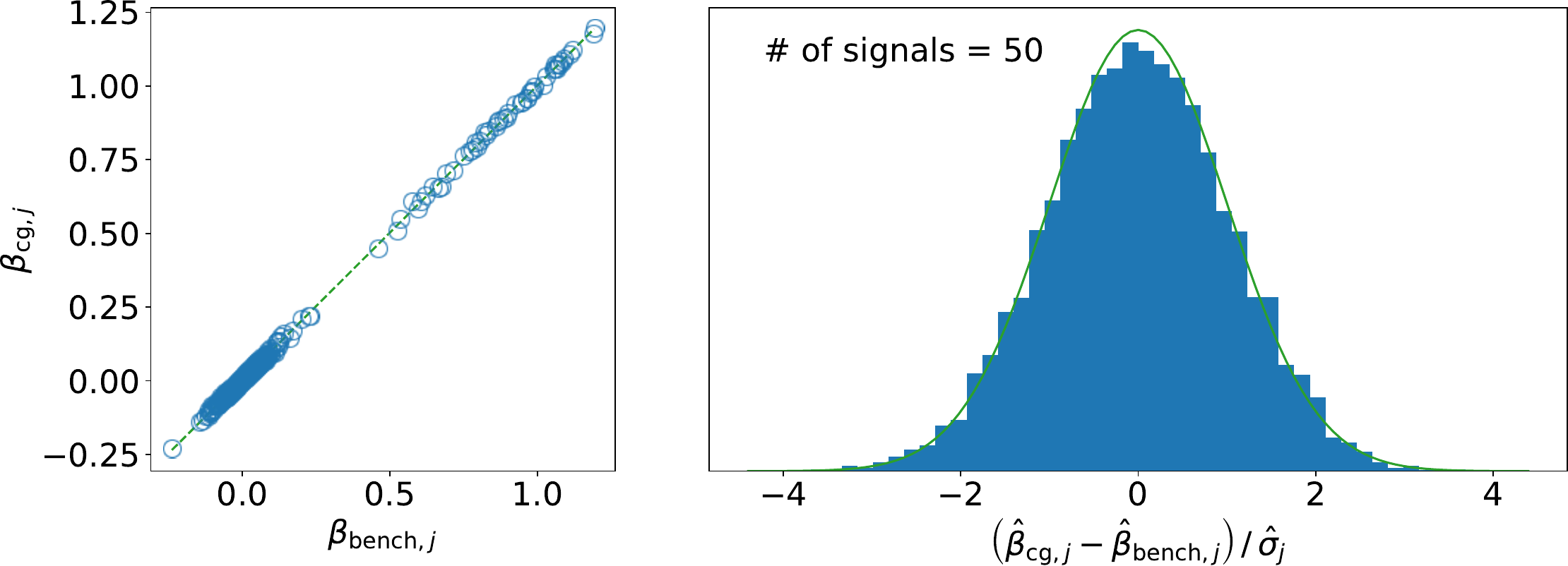}
	 \vspace*{-.5\baselineskip}
	 \caption{%
	 	Diagnostic plots to check for statistically significant differences between the two MCMC outputs.
	 	The three rows of the figure correspond to the results based on the synthetic data simulated with 10, 20, and 50 true signals.
	 	The plots on the left compare the regression coefficient estimates (posterior means) between those based on the direct and CG-accelerated Gibbs samplers.
	 	On the right are normalized histograms for the standardized differences $(\hat{\beta}_{\text{bench}, j} - \hat{\beta}_{\text{cg}, j}) / \hat{\sigma}_i$, where $\hat{\sigma}_i^2$ is an estimate of the Monte Carlo variance of $\hat{\beta}_{\text{bench}, j} - \hat{\beta}_{\text{cg}, j}$.
	 	Gaussianity of the histogram indicates no statistically significant difference between the two MCMC outputs.
	   }
	 \label{fig:mcmc_estimators_first_moment_comparison_plot_for_simulated_data}
	 \end{figure}

	 \begin{figure}
	 \centering
	 \includegraphics[width=0.875\linewidth]{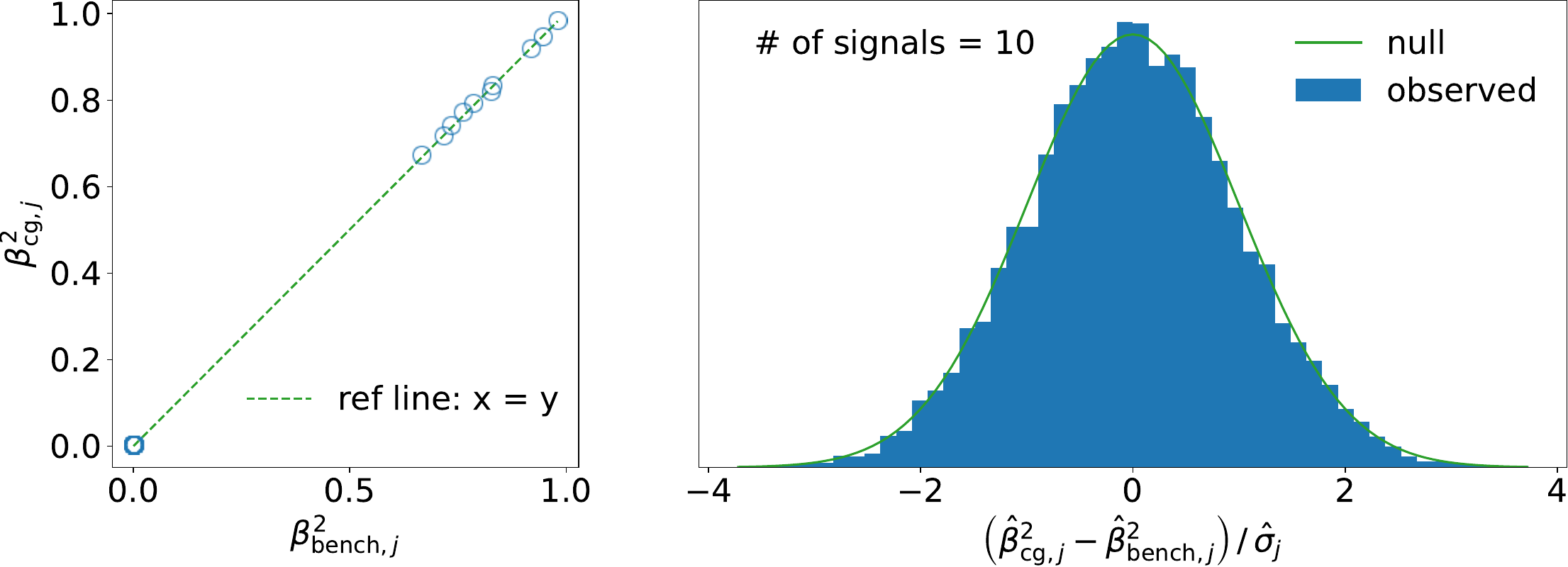}
	 \vspace*{.5\baselineskip}

	 \includegraphics[width=0.875\linewidth]{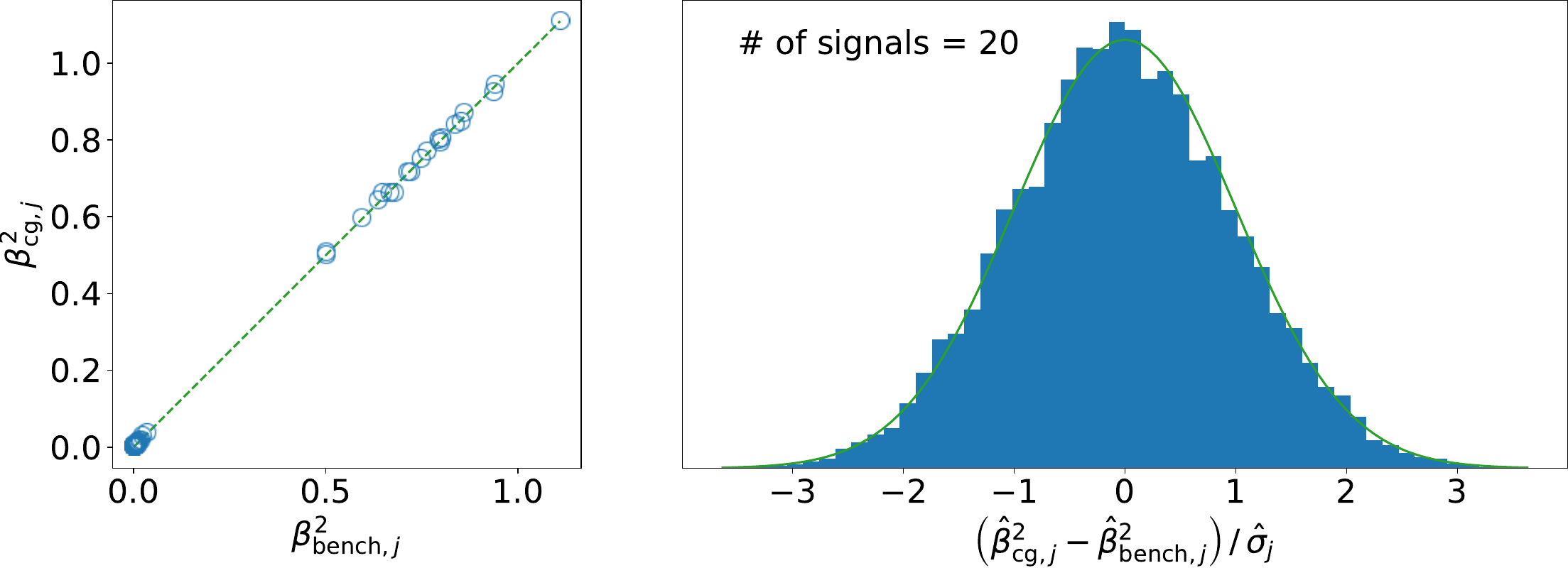}
	 \vspace*{.5\baselineskip}

	 \hspace*{-.035\linewidth}
	 \includegraphics[width=0.9\linewidth]{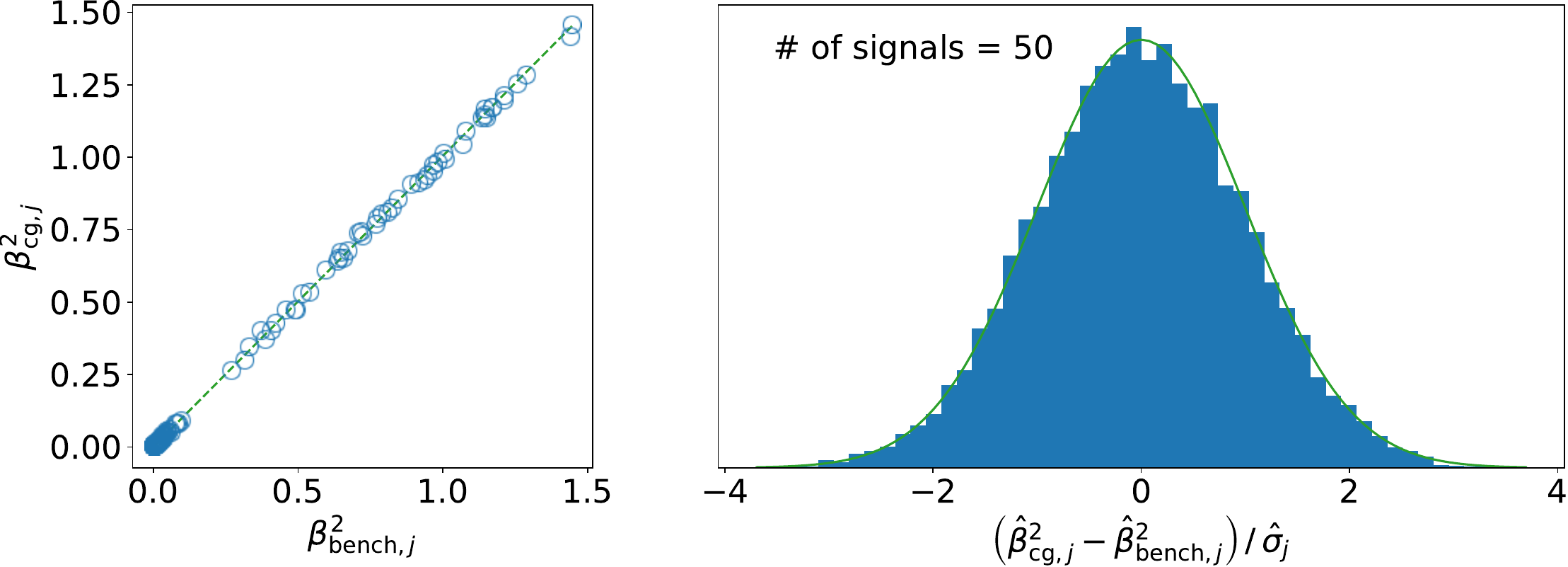}
	 \vspace*{-.5\baselineskip}
	 \caption{%
 	 	Diagnostic plots as in Figure~\ref{fig:mcmc_estimators_first_moment_comparison_plot_for_simulated_data} to check for statistically significant differences between the two MCMC outputs.
 	 	The only difference from Figure~\ref{fig:mcmc_estimators_first_moment_comparison_plot_for_simulated_data} is that here we compare the second moment estimates $\hat{\beta}_{\text{bench}, j}^2$ and $\hat{\beta}_{\text{cg}, j}^2$.
 	 }
	 \label{fig:mcmc_estimators_second_moment_comparison_plot_for_simulated_data}
	 \end{figure}

	 \begin{figure}
	 \centering
	 \includegraphics[width=0.9\linewidth]{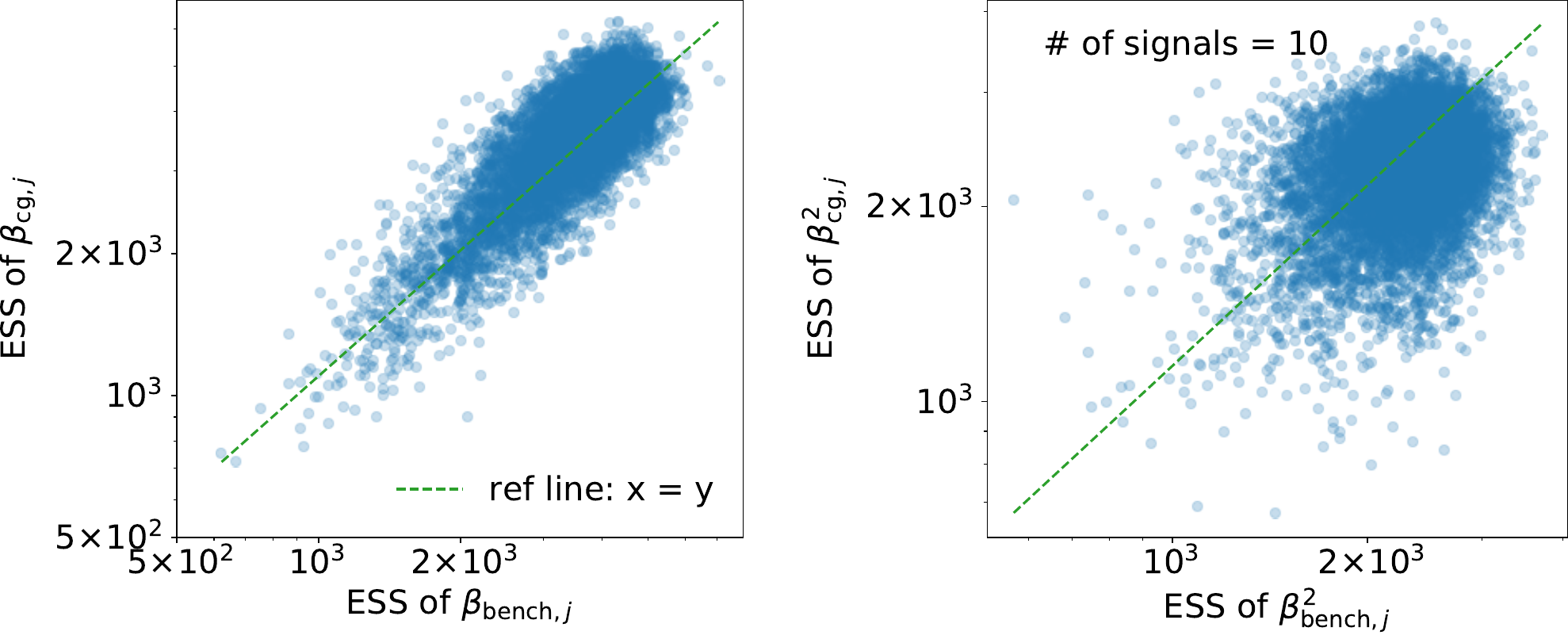}
	 \vspace*{.5\baselineskip}

	 \includegraphics[width=0.9\linewidth]{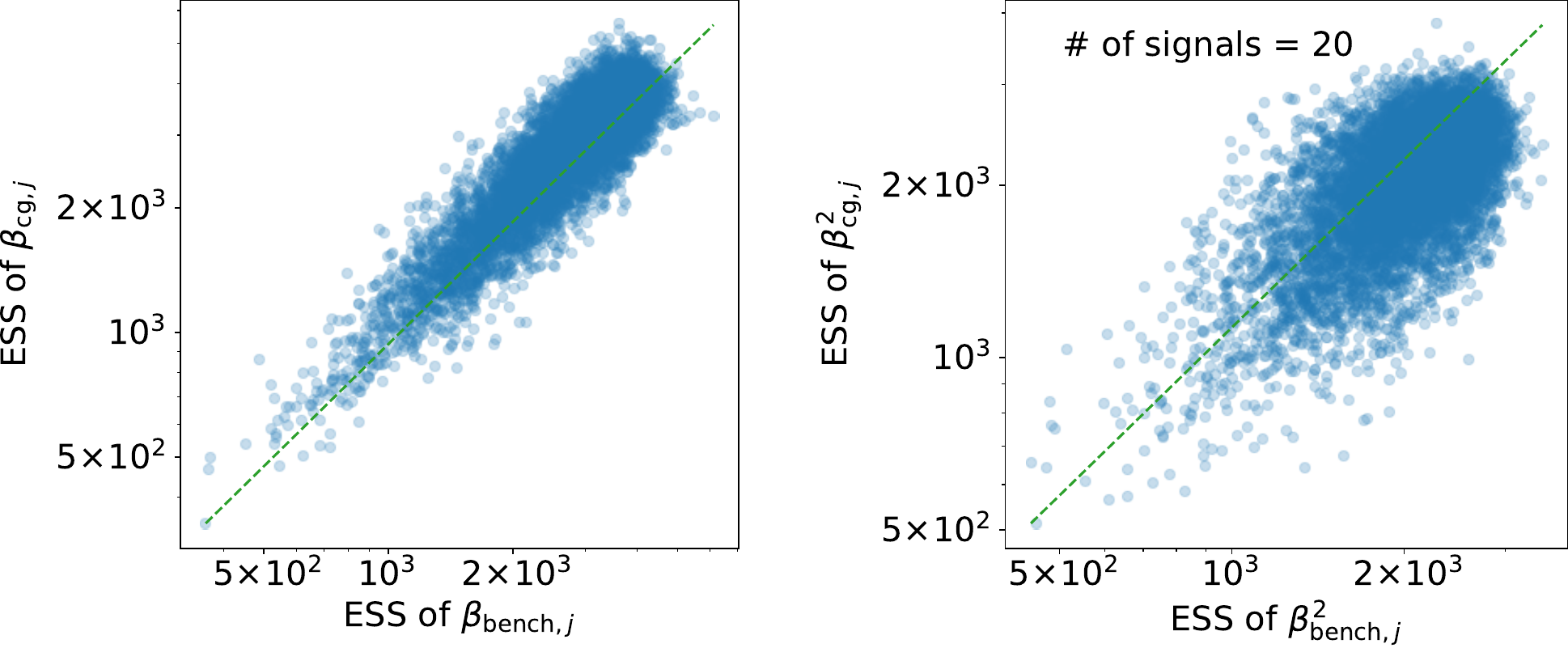}
	 \vspace*{.5\baselineskip}

	 \includegraphics[width=0.9\linewidth]{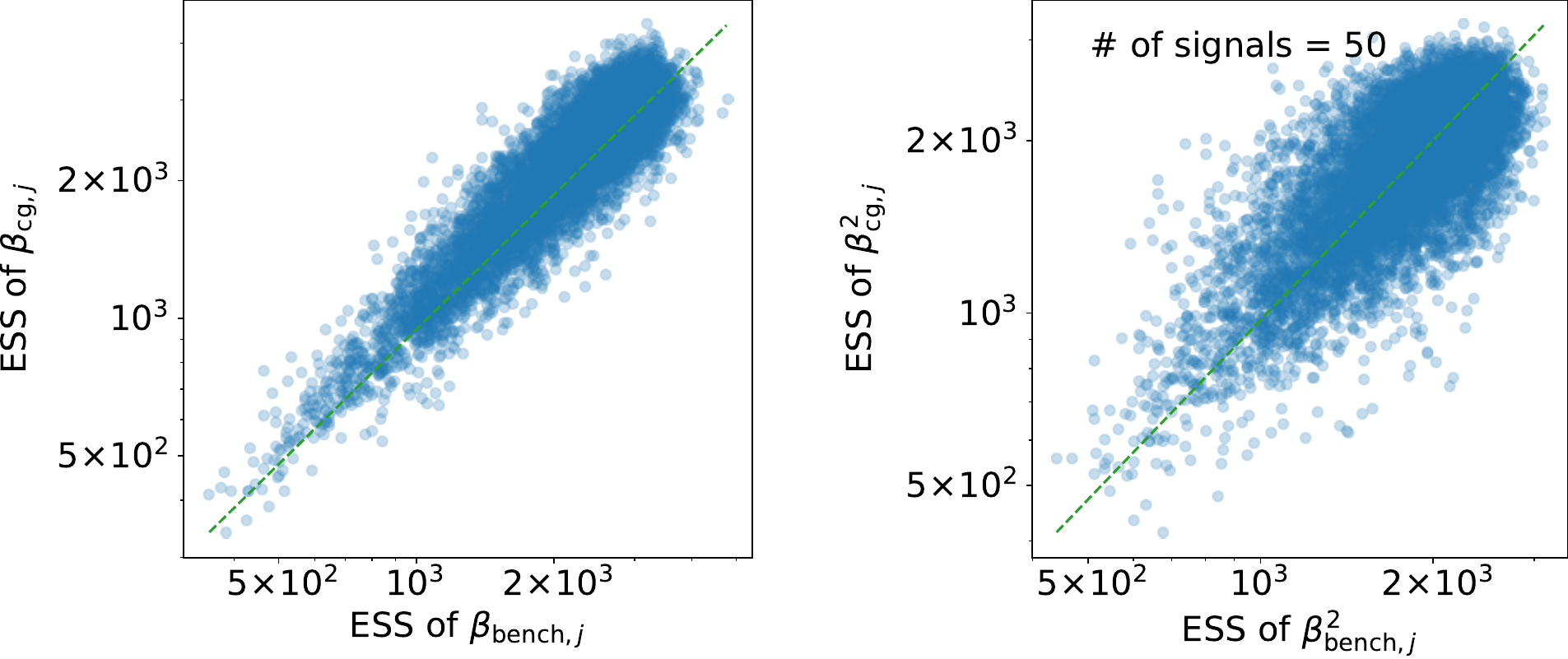}
	 \vspace*{-.5\baselineskip}
	 \caption{%
	 	Comparison of ESS's from the two Gibbs samplers.
	 	Each Gibbs sampler is run for 5,000 iterations.
	 	The three rows of the figure correspond to the results based on the synthetic data simulated with 10, 20, and 50 true signals.
	 	ESS's of $\beta_j$ are shown on the left and those of $\beta_j^2$ on the right.
 	 }
	 \label{fig:mcmc_estimators_ess_comparison_plot_for_simulated_data}
	 \end{figure}

\FloatBarrier

\subsection{Relative computational efficiency}
\label{sec:rel_computational_efficiency_on_synthetic_data}
The diagnostics of Section~\ref{sec:cg_sampler_accuracy} show that the two Gibbs samplers are essentially identical in their output.
Consequently, their relative computational efficiency as MCMC algorithms --- as measured by ESS per unit time, for example --- can be quantified directly by their relative computational time.
We thus compare the actual computing times of the two Gibbs samplers run for 5,000 iterations as in Section~\ref{sec:cg_sampler_accuracy}.

Since computational gains from CG-acceleration depends as much on the size of a problem as posterior sparsity level, we complement our simulation study of Section~\ref{sec:cg_sampler_demonstration} by varying not only the number of true signals but also the size of the synthetic design matrices.
More specifically, we use the same data generating model as described in Section~\ref{sec:simulation_setup} but generates the design matrix $\X$ of size $12{,}500 \times 5{,}000$; $25{,}000 \times 10{,}000$; and $50{,}000 \times 20{,}000$.
We then carry out repeat the same simulations with these three matrices.
As in Section~\ref{sec:application}, we measure the computing times on a 2015 iMac equipped with an Intel Core i7 processor.\footnote{%
	The simulation here is extremely computationally intensive.
	In order to complete this simulation within a reasonable amount of time, therefore, we carried out the actual computations using the Joint High Performance Computing Exchange at Johns Hopkins University (\url{https://jhpce.jhu.edu/}).
	Since the linear algebra operations required for updating regression coefficients account for over 99\% of total computing times of both Gibbs samplers, we calculated the computational time we would have obtained had we run the Gibbs samplers on the original 2015 iMac as follows.
	We first measured on the 2015 iMac the computing times required for these linear algebra operations:
	matrix-matrix multiplication and Cholesky decomposition for each iteration of the direct method and matrix-vector multiplication by $\X$ and $\X^\transpose$ for each CG iteration.
	(These linear algebra benchmarks are based on single-threaded implementations; see Section~\ref{sec:optimizing_linear_algebra} for details on how we optimized each operation and on possibilities of multi-threading.)
	We then counted the number of the linear algebra operations required in the actual runs of the Gibbs samplers.
	Finally, we calculated the computing times we would have obtained on the 2015 iMac by multiplying the costs of these linear algebra operations with the number of times they were used in the Gibbs sampler runs.
	Overall, the simulation required over 10,000 hours of \cpu{} time (but less than 3 weeks in actual clock time thanks to parallelization) and 372 \textsc{gb} of memory.
}

Figure~\ref{fig:computing_time_vs_mat_size} shows the time required for posterior computation by the two Gibbs samplers as the number of signals and size of design matrix vary.
We see that the CG-accelerated Gibbs demonstrate increasing advantage over the direct Gibbs as the problem size grows.
This is as expected from our discussion in Section~\ref{sec:complexity_analysis}.
The computational complexity of the direct Gibbs is $O(n^2 p + p^3)$ while that of of the CG-accelerated Gibbs is approximately $O(nps)$, where $s$ is the number of $\tau \lshrink_j$'s --- and hence of $\beta_j$'s --- significantly away from $0$.
Therefore, the required computing time increases 8-fold every time $n$ and $p$ double in size.
On the other hand, the increase is roughly only 4-fold for the CG-accelerated Gibb.
For the previous statement to hold, the quantity $s$ must remain roughly constant over varying problem sizes as long as the underlying number of true signals remain fixed.
We are not aware of any theoretical results guaranteeing such behavior, but characterizing the cost of CG-accelerated Gibbs in this manner seems like a reasonable and conceptually useful approximation that agrees with our empirical results here.
Figure~\ref{fig:rel_speed_vs_mat_size_single} facilitate comparison of the two Gibbs samplers' performances by plotting their relative computational speed as the number of signals and problem size varies.

The CG-accelerated Gibbs is not necessarily faster than the direct Gibbs for smaller problems as seen in Figure~\ref{fig:computing_time_vs_mat_size}.
This is because existing computing architectures are typically more optimized for BLAS Level 3 operations, such as matrix-matrix multiplications used in the direct Gibbs, than for BLAS Level 2 operations, such as matrix-vector multiplication used in the CG-accelerated Gibbs (Section~\ref{sec:optimizing_linear_algebra}).
In newer computing architectures, however, there is an increasingly greater emphasis on high bandwidth and low latency, both of which are critical e.g.\ for high-performance (sparse) matrix-vector multiplications \citep{dongarra2016hpc_connjugate_gradient}.
We thus expect the advantage of CG-accelerated Gibbs on typical computing environments to grow over time as architectures and software adapt to modern large-scale applications.

\begin{figure}
\centering
\subfigure[Computing time vs.\ problem size]{
	\includegraphics[width=0.475\linewidth]{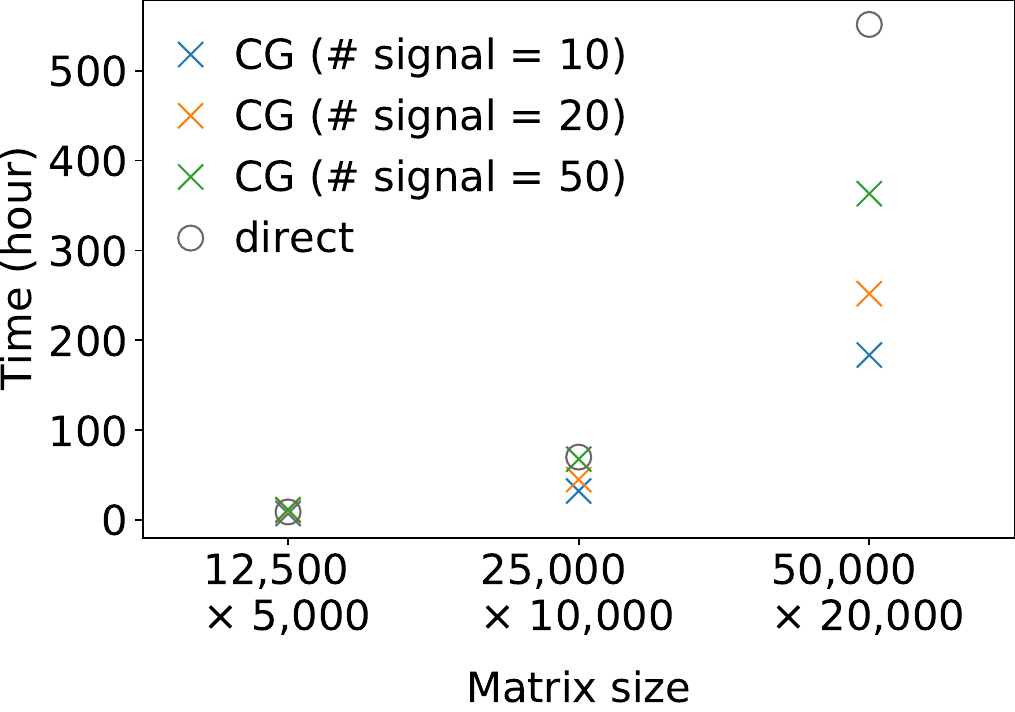}
}%
~
\subfigure[Same as the left figure, but with log-scale $y$-axis]{
	\includegraphics[width=0.475\linewidth]{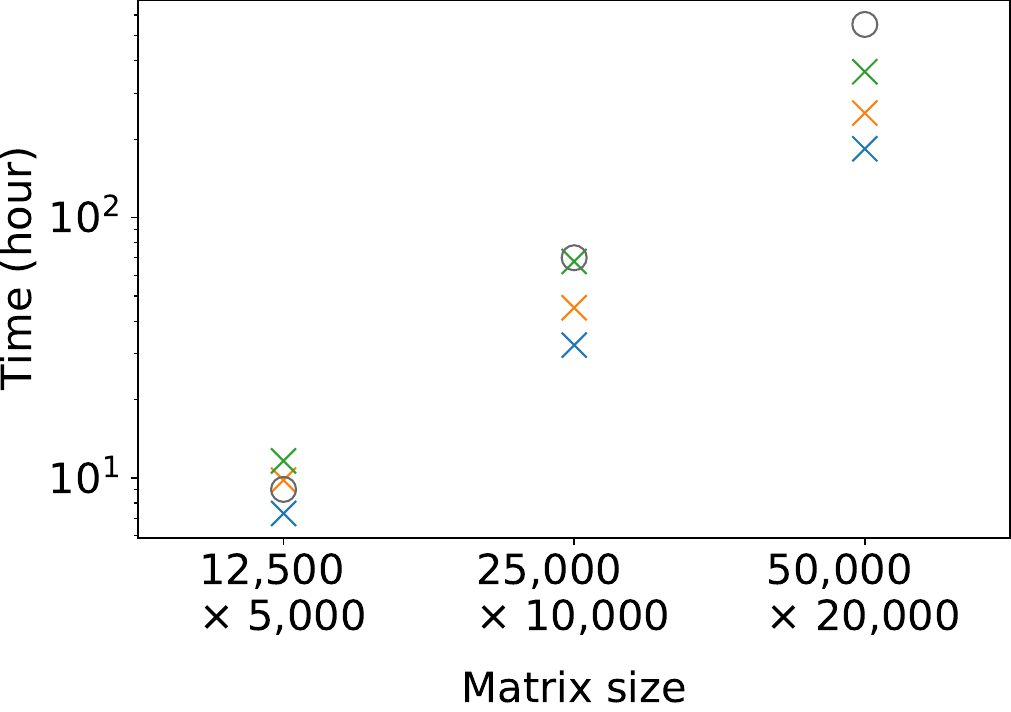}
}
\caption{%
	Posterior computation time required for 5,000 Gibbs sampler iterations.
	Note that the computational cost of the direct Gibbs sampler depends only on the problem size and not on the number of true signals in the data.
	The right figure plots computing time in log-scale, but otherwise shows information identical to that in the left figure.
	We can see that the direct Gibbs has a steeper slope in log-scale and hence more rapid increase in the computational cost as the problem size increases.
}
\label{fig:computing_time_vs_mat_size}
\end{figure}

\begin{figure}
\centering
\includegraphics[width=0.6\linewidth]{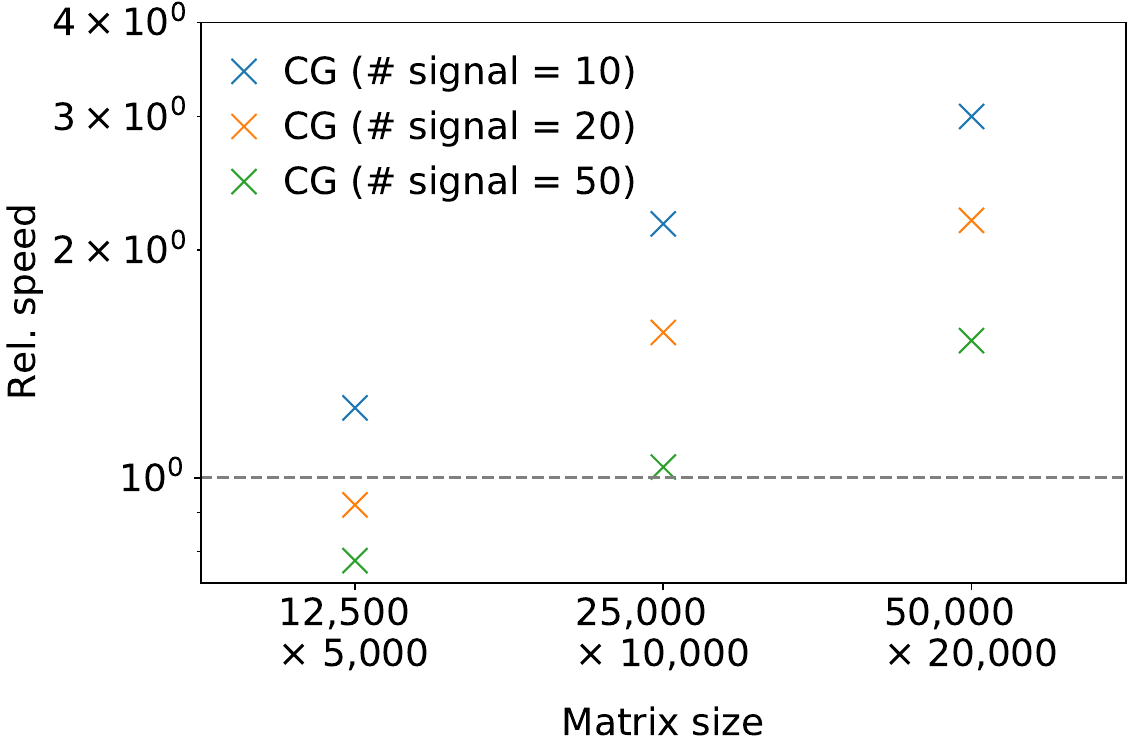}
\caption{%
	Relative speed of the two Gibbs samplers i.e.\ ratio of posterior computation time required by direct Gibbs to that required by CG-accelerated Gibbs.
	Values above 1 indicate superior performance of CG-accelerated Gibbs.
}
\label{fig:rel_speed_vs_mat_size_single}
\end{figure}

\newpage
\subsection{Choice of shrinkage prior and its effect on CG performance}
\label{supp:shrinkage_prior_choice_and_cg_performance}

The choice of shrinkage prior can significantly affect Bayesian sparse logistic regression's practical performance \citep{wei2020shrinkage_in_logistic_reg}. 
Ideally, we would like to achieve both statistical and computational efficiency.
It is of interest, therefore, to investigate how the CG-accelerated Gibbs sampler performance might depend on different shrinkage priors.

We run the CG-accelerated Gibbs sampler for 5,500 iterations on the synthetic data sets generated in the same manner as in Section~\ref{sec:cg_sampler_demonstration}, but this time varying $\alpha \in \{1/2, 1/4, 1/8\}$.
A smaller $\alpha$ corresponds to a larger peak at $\beta_j / \tau = 0$, heavier tail as $|\beta_j / \tau| \to \infty$, and generally superior statistical performance \citep{nishimura2019regularized_shrinkage}.
For each simulation set-up, we use 16 replicate data sets generated from different random seeds in order to assess how the variability in data sets might affect the posteriors and in turn the CG sampler's performances. 

The results are summarized in Table~\ref{tab:computing_time_across_varying_shrinkage}.\footnote{%
	As in Section~\ref{sec:rel_computational_efficiency_on_synthetic_data}, the simulation study here is extremely computationally intensive. 
	We thus deployed the same procedure as described in the footnote of Section~\ref{sec:rel_computational_efficiency_on_synthetic_data} to calculate the computational time we would have obtained had we run the Gibbs samplers on the original 2015 iMac.
	The actual computations using the Joint High Performance Computing Exchange required about 2 weeks and 864\gb{} of memory.
}
We find that, when fixing the number of signals and the overall synthetic data structure, the CG-accelerated sampler runs faster with smaller $\alpha$. 
Examination of the posterior structures reveals that this result here is consistent with our main finding throughout the paper --- the sparser the regression coefficient posterior, the faster the preconditioned CG's convergence rate.
Table~\ref{tab:sparsity_across_varying_shrinkage} for example shows that the coefficient estimates under a smaller $\alpha$ are sparser in terms of the posterior mean magnitudes. 
Table~\ref{tab:coverage_across_varying_shrinkage} additionally shows that a posterior under a smaller $\alpha$ in fact achieves better coverage of the true signals while shrinking the rest more strongly towards zero. 
The simulation study of \cite{nishimura2019regularized_shrinkage} finds a similar relation between $\alpha$ and posterior sparsity structure.

\begin{table}
\centering
	\begin{tabular}{ccccc}
	\hline
	& \multicolumn{4}{c}{Computing time (hours)}\\
	& \multicolumn{3}{c}{CG-accelerated Gibbs} & \hspace{1em}\multirow{2}{*}[-1ex]{\shortstack{Direct Gibbs\\(Fixed cost)}} \hspace{1em}\\
	\cmidrule(lr){2-4}
	& $\alpha = 1/2$ & $\alpha = 1/4$ & $\alpha = 1/8$ \\
	\hline
	\# of signals
	&  &  &  \\
	10
	& $38.4 \pm 1.45$ & $22.5 \pm 1.26$ &  $22.3 \pm 0.50$ & \\
	20
	& $50.5 \pm 2.87$ & $46.3 \pm 2.45$ & $36.6 \pm 1.91$ & 77.1 \\
	50
	& $73.1 \pm 2.88$ & $71.8 \pm 3.19$ & $71.4 \pm 2.14$ \\
	\hline
	\end{tabular}\vspace*{.1\baselineskip}
	\caption{%
		Computing times in hours for 5,500 iterations of the CG-accelerated Gibbs sampler on the synthetic data sets (left three columns) as the number of signals and the bridge prior's exponent $\alpha$ are varied.
		The values after the $\pm$ signs show variability in computing times across the 16 replicates, summarized as 1.96 times the standard deviation. 
		The computing time for the direct Gibbs sampler is shown on the right.
		The direct sampler's cost depends only on the design matrix size and is thus unaffected by the randomness in synthetic data or by the type of shrinkage prior. 
	}
	\label{tab:computing_time_across_varying_shrinkage}
\end{table}


\begin{table}
	\centering
	\begin{tabular}{ccccc}
	\hline
	& \multicolumn{3}{c}{
		\multirow{3}{*}[-1ex]{\shortstack{Number of coefficients with\\posterior mean magnitude\\ $> 0.1$ (and $> 0.01$)}}
	} \\ \\ \\
	\cmidrule(lr){2-4} 
	& $\alpha = 1/2$ & $\alpha = 1/4$ & $\alpha = 1/8$ \\
	\hline
	\# of signals & & & \\
	10 
	& $10 \pm 0$ \,($185 \pm 89$) & $10 \pm 0$ \,($10.4 \pm 1.0$) & $10 \pm 0$ \,($10.4 \pm 1.0$) \\
	20 
	& $21.8 \pm 3.4$ \,($1702 \pm 483$) & $20.1 \pm 0.5$ \,($59.1 \pm 35.4$) & $20 \pm 0$ \,($20.8 \pm 1.3$) \\
	50 
	& $72.8 \pm 15.3$ \,($4411 \pm 356$) & $58.2 \pm 7.4$ \,($1267 \pm 270$) & $51.8 \pm 2.7$ \,($252 \pm 67$) \\
	\hline
	\end{tabular}\vspace*{.1\baselineskip}
	\caption{%
		Numbers of coefficients whose estimated posterior means have magnitude above 0.1 (and above 0.01) as the number of signals and the bridge prior's exponent $\alpha$ are varied.
		In other words, the table shows the numbers of $j = 1, \ldots, 10{,}000$ for which $\big| \mathbb{E}[\beta_j \given \y, \X] \big| > 0.1$ (and $> 0.01$). 
		The values after the $\pm$ signs show variability across the 16 replicates, summarized as 1.96 times the standard deviation. 
	}
	\label{tab:sparsity_across_varying_shrinkage}
\end{table}

\begin{table}
	\centering
	\begin{tabular}{ccccc}
	\hline
	& \multicolumn{3}{c}{
		Coverage of true signals at 95\% level
	} \\ 
	\cmidrule(lr){2-4} 
	& $\alpha = 1/2$ & $\alpha = 1/4$ & $\alpha = 1/8$ \\
	\hline
	\# of signals & & & \\
	10 
	& $9.06 \pm 1.76$ & $9.94 \pm 0.47$ & $10 \pm 0$ \\
	20 
	& $17.3 \pm 3.2$ & $19.1 \pm 1.8$ & $19.9 \pm 0.7$  \\
	50 
	& $43.1 \pm 4.2$ & $46.4 \pm 3.3$ & $47.8 \pm 2.6$ \\
	\hline
	\end{tabular}\vspace*{.1\baselineskip}
	\caption{%
		Coverage of the true signal value $\beta_j = 1$ by 95\% posterior credible intervals as the number of signals and the bridge prior's exponent $\alpha$ are varied.
		The values after the $\pm$ signs show variability across the 16 replicates, summarized as 1.96 times the standard deviation. 
	}
	\label{tab:coverage_across_varying_shrinkage}
\end{table}

\FloatBarrier

\section{Optimizing linear algebra computations for Gibbs samplers}
\label{sec:optimizing_linear_algebra}

In the subsections to follow, the computation times are measured on a 2015 iMac with Intel Core i7 processor as in the main manuscript.
We first focus on a single-threaded implementation before exploring parallelization opportunities in Section~\ref{sec:opportunities_for_parallelization}.
Unless otherwise stated, all the benchmarks use the $72{,}489 \times 22{,}175$ sparse design matrix $\X$ in the application of Section~\ref{sec:application}.

\subsection{Dense vs.\ sparse numerical linear algebra}
\label{sec:dense_vs_linear_algebra}
When a design matrix $\X$ is sparse as in the application of Section~\ref{sec:application}, one may wonder if the precision matrix $\bPhi$ can be factorized efficiently using sparse numerical linear algebra techniques. This is not the case in typical sparse regression applications, however, for the following reasons. First, the matrix $\X^\transpose \bOmega \X$ and hence $\bPhi$ is typically much denser than $\X$ itself, especially when $n > p$. In particular, the $(j, j')$-th element of  $\X^\transpose \bOmega \X$ is non-zero if the $j$-th and $j'$-th predictors co-occur in any of the $n$ samples. Secondly, when employing sparse methods, time spent on irregular data access completely dominate over that on arithmetic operations \citepsupplement{duff2017direct-sparse}. In the absence of sufficient sparsity, therefore, it can be more computationally efficient to ignore the sparse structure and employ dense methods.

In the application of Section~\ref{sec:application}, we find the precision matrix $\bPhi$ to be $85.4$\% dense. The Cholesky factor of a sparse matrix is typically denser than the matrix itself (see Theorem 2.8 in \citealtsupplement{rue2005gmrf}); indeed, we find the Cholesky factor $\bm{L}$ to be over $99$\% dense. Sparse methods have no advantage whatsoever for such a dense matrix.


Conceivably, we can avoid dealing with the near-dense $\bPhi$ matrix in \eqref{eq:linear_system_for_cg_sampler} as follows. Noting that $\bPhi = \tilX^\transpose \tilX$ for $\tilX^\transpose = [\X^\transpose \bOmega^{1/2}, \tau^{-1} \bLshrink^{-1}]$, we see that a ($Q$-less) QR decomposition of the sparse matrix $\tilX$ would provide the Cholesky factor of $\bPhi$. For the application of Section~\ref{sec:application}, we experimented with this idea using the popular SparseSuites package \citepsupplement{davis2011sparseQR}. Sparse QR decomposition first attempts to find a permutation of the matrix columns to reduce the subsequent computation as much as possible. The package implements approximate minimum degree (\textsc{amd}), column \textsc{amd} (\textsc{colamd}), as well as graph-partitioning-based nested dissection (\textsc{metis}) algorithm. None of these algorithms find column orderings that differ significantly from the original arbitrary ordering with only handful of columns permuted. The subsequent factorization of the matrix requires about 25 minutes regardless of the column permutation algorithm chosen. On the other hand, explicitly computing $\bPhi$ and finding its Cholesky factor via dense linear algebra requires about 3 minutes only (Section~\ref{sec:linalg_library_choice}).

Incidentally, we can extend the observation $\bPhi = \tilX^\transpose \tilX$ for $\tilX^\transpose = [\X^\transpose \bOmega^{1/2}, \tau^{-1} \bLshrink^{-1}]$ and realize that solving the linear system \eqref{eq:linear_system_for_cg_sampler}, $\bPhi \bbeta = \bm{b}\,$ for $\,\bm{b} = \X^\transpose \bOmega \tilde{\y} + \X^\transpose \bm{\Omega}^{1/2} \bm{\eta} + \tau^{-1} \bLshrink^{-1} \bm{\delta},$ is equivalent to solving a least-square problem.
More explicitly, we can express \eqref{eq:linear_system_for_cg_sampler} as
\begin{equation}
\label{eq:normal_equation_formulation}
\tilX^\transpose \tilX \, \bbeta 
	= \tilX^\transpose \bm{\tilde{b}}
	\ \text{ where } \
	\bm{\tilde{b}} = 
	\begin{bmatrix}
	\bOmega^{1/2} \tilde{\y} + \bm{\eta} \\
	\bm{\delta}
	\end{bmatrix}.
\end{equation}
An equation of the form \eqref{eq:normal_equation_formulation} is known as a normal equation in linear algebra and its solution coincides with that of the minimizer of $\big\| \tilX \bbeta - \bm{\tilde{b}}  \big\|^2$.
It follows that, to draw a Gaussian vector using the algorithm of Proposition~\ref{prop:gaussian_as_linear_system_solution}, we could apply the LSQR algorithm of \cite{paige1982lsqr} to minimize $\big\| \tilX \bbeta - \bm{\tilde{b}}  \big\|^2$ instead of applying CG to solve \eqref{eq:linear_system_for_cg_sampler}.
While LSQR and CG are mathematically equivalent when applied to normal equations, LSQR may provide more numerically stable behavior when $\bPhi = \tilX^\transpose \tilX$ is ill-conditioned.

\subsection{Choice of linear algebra library}
\label{sec:linalg_library_choice}
For linear algebra operations involving large matrices, hardware-specific compilation and optimization are essential for achieving good computational efficiency. Major linear algebra libraries all achieve reasonable computational efficiency, but some variation in performance may occur depending on types of operations and computing environments. We therefore compare a few options for implementing the direct and CG-accelerated Gibbs sampler in Section~\ref{sec:application}. Efficiency of sparse matrix operations also depends critically on the underlying representations of sparse matrices \citepsupplement{saad2011methods-for-eigenvalue}. For each linear algebra library, we try all the major sparse matrix formats and report only the result with best performance. All the benchmarks here are run using the design matrix of Section~\ref{sec:application}.

The computational bottleneck of the direct Gibbs sampler is computing $\X^\transpose \bOmega \X$ and the subsequent Cholesky factorization of $\bPhi$. Since $\bOmega$ is a diagonal matrix, computing $\X^\transpose \bOmega \X$ can be carried out as multiplying a sparse matrix $\X^\transpose \bOmega^{1/2}$ with its transpose. The high-performance computing community refers to the operation of multiplying two sparse matrices by the acronym \spgemm{} (generalized sparse matrix-matrix multiplication). \spgemm[\MakeUppercase] is a surprisingly complicated operation to optimize in modern computing architectures.  No specification is provided for such an operation by sparse \blas{} \citepsupplement{duff2002spblas}, and its hardware-specific implementation is an active area of research \citepsupplement{matam2012spgemm, azad2016spgemm}.

The Scipy library provides an \spgemm{} implementation via the algorithm of \citesupplement{bank1993scipy-sparse-algorithms}. The Scipy \spgemm{} operation requires 144 seconds. We find an alternative implementation in the Intel \mkl{} library with the option of returning a dense (instead of sparse) matrix, which is faster here as the multiplied matrix is almost completely dense. The \mkl{} \spgemm{} requires 64.6 seconds, being a clear winner and our choice for the simulation results of Section~\ref{sec:application}.

For the (dense linear algebra) Cholesky factorization of $\bPhi$, the Scipy library by default calls the \mkl{} library. The computation requires 78.0 seconds.  The \openblas{} implementation performs comparably, requiring 79.0 seconds.

The computation time of the CG-accelerated Gibbs is dominated by the (sparse) matrix-vector multiplications $\bv \to \X \bv$ and $\bw \to \X^\transpose \bw$ required for CG iterations. For these operations, the Scipy library uses its own C-extension code. On average, the operation $\bv \to \X \bv$ requires $6.70 \times 10^{-2}$ seconds and the operation $\bw \to \X^\transpose \bw$ $6.07 \times 10^{-2}$ seconds. With the \mkl{} library implementations, these matrix-vector multiplications on average require $4.69 \times 10^{-2}$ and $5.49 \times 10^{-2}$ seconds respectively.

\subsection{Number of arithmetic operations v.s.\ actual computing time}
\label{sec:flop_vs_actual_time}
Here we elaborate on why we cannot compare the relative performance of the direct and CG-accelerated Gibbs sampler by simply counting the number of required floating point operations (``flop'' for short).
By way of empirical demonstration, for each linear algebra operation we compare the actual computation times to the numbers of arithmetic operations.
A detailed discussion of how data movement --- and not arithmetic operations --- creates a bottleneck in modern hardware is beyond the scope of this section, and we refer the readers to existing references such as \cite{guntheroth2016optimizedCpp} and \cite{holbrook2020massive_parallelization}.

\subsubsection*{CG vs.~direct linear algebra}
In the discussion to follow, we count the number of flops up to the leading order terms, ignoring contributions that are essentially negligible for any moderately-sized design matrices.
The sparse design matrix $\X$ of Section~\ref{sec:application} has $\nNonzero = 6.43 \times 10^7$ non-zero elements.
The matrix-vector operations $\bv \to \X \bv$ and $\bw \to \X^\transpose \bw$ both require $\nNonzero$ multiplications and additions, for the total of $\nMatvec = 1.28 \times 10^8$ flops.
The two matrix-vector operations result in $\nCg = 2.56 \times 10^8$ flops for each CG iteration.
Incidentally, as we have already seen in Section~\ref{sec:linalg_library_choice}, $\bv \to \X \bv$ and $\bw \to \X^\transpose \bw$ take different amounts of time despite requiring the same number of flops.

The total number of flops in multiplying $\X^\transpose \bOmega^{1/2}$ with its transpose, which coincides with that for multiplying $\X^\transpose$ with its transpose, is proportional to the sum of the overlaps between the pairs of the columns of $\X$:
\begin{equation*}
\nOverlap
	= \sum_{k, \ell = 1}^p \sum_{i = 1}^n \ind \left\{ x_{i k}  \neq 0, \ x_{i \ell}  \neq 0 \right\}.
\end{equation*}
We have $\nOverlap = 6.44 \times 10^{10}$ for the sparse design matrix $\X$ of Section~\ref{sec:application}.
Computing $\X^\transpose \bOmega \X$ (or $\X^\transpose \X$) requires $\nOverlap$ multiplications and additions, resulting in $\nMatmat = 1.29 \times 10^{11}$.
The subsequent Cholesky decomposition of $\bPhi = \X^\transpose \bOmega \X + \tau^{-2} \bLshrink^{-2}$ requires $\nCholesky = p^3 / 3 = 3.63 \times 10^{12}$ flops \citep{trefethen1997numerical_linalg}.

Table~\ref{tab:flop_vs_time} summarizes the preceding analysis and contrasts the numbers of flops to the actual computation times as measured in Section~\ref{sec:linalg_library_choice}.
We can clearly see that the number of flops does not directly correlate with computation time.
Note in particular that, despite requiring 28 times more flops, the Cholesky decomposition of the dense precision matrix $\bPhi$ takes only a little more time than the matrix-matrix multiplication for $\X^\transpose \bOmega \X$.
This is because dense matrix operations can take advantage of streamlined and highly efficient access to data stored contiguously in memory \citep{dongarra2016hpc_connjugate_gradient}.

\begin{table}
\centering
	\begin{tabular}{cccc}
	& CG iteration & Matrix-matrix & Cholesky \\
	\hline
	Number of flops
	& $2.56 \times 10^8$ & $1.29 \times 10^{11}$ & $3.63 \times 10^{12}$  \\
	\hline
	Computation time (sec)
	& $0.102$ & $64.6$ & $78.0$ \\
	\hline
	Relative number of flops
	& 1 & 504 & 14,200 \\
	\hline
	Relative computation time
	& 1 & 633 & 765
	\end{tabular}
	\caption{%
		Numbers of flops and actual computation times for the following operations:
		one CG iteration (whose cost is dominated by matrix-vector multiplications $\bv \to \X \bv$ and $\bw \to  \X^\transpose \bw$), matrix-matrix multiplication for computing $\X^\transpose \bOmega \X$, and Cholesky decomposition of $\bPhi = \X^\transpose \bOmega \X + \tau^{-2} \bLshrink^{-2}$.
		The benchmarks are based on the $72{,}489 \times 22{,}175$ sparse design matrix $\X$ of Section~\ref{sec:application}.
	}
	\label{tab:flop_vs_time}
\end{table}

\subsubsection*{Effect of data access efficiency on computational speed}
We carry out a few more experiments to further illustrate the effects of data access efficiency on computation time.
To this end, we first create a synthetic design matrix $\Xsynthetic$ of size $n = 14{,}499$ and $p = 4{,}435$.
We store this design matrix as a dense array, so that the total number of entries $np$ roughly equals that of non-zero entries $\nNonzero = 6.43 \times 10^7$ in the $72{,}489 \times 22{,}175$ sparse design matrix $\X$.
In particular, the matrix-vector multiplications by the two design matrices require the same number of flops.
The actual computation times differ markedly, however: $2.04 \times 10^{-2}$ seconds for the synthetic matrix and $4.67 \times 10^{-2}$ seconds for the real-data one.

The effects of data access efficiency are not limited to sparse vs.\ dense matrices; the effects manifest themselves also within dense linear algebra.
As an illustration, we create a synthetic design matrix $\Xsynthetic$ of the same dimension ($n = 72{,}489$ and $p = 22{,}175$) as the real-data one but stored as a dense array instead.
The matrix-vector operation $\bv \to \Xsynthetic \bv$ requires $2np$ flops, while the matrix-matrix operation $\Xsynthetic^\transpose \Xsynthetic$ requires $2np^2$ flops.
In particular, the matrix-matrix operation requires $p = 22{,}175$ times more flops than the matrix-vector one.
In terms of the actual computation time, however, the matrix-vector operation requires $0.571$ seconds while the matrix-matrix one requires $592$ seconds, or only $1{,}037$ times longer in duration.
The matrix-matrix operation consumes less time than otherwise expected from the number of flops because it re-uses the same pieces of data in cache many times, rather than fetching them from main memory every time \citep{golub2012matrix}.
Moreover, such efficient data movement enables potential use of vector processing to add multiple floating point numbers at the same time;
while most modern \cpu{}'s have vector processing capability, bottlenecks in data movement often prevent algorithms from exploiting it \citep{holbrook2020massive_parallelization}. 
Incidentally, storing $\Xsynthetic$ as a dense array requires 12.9\gb{} of memory, so results from this particular benchmark will depend strongly on specific hardware and amounts of available \textsc{ram}.

\subsection{Opportunities for parallelization within memory constraints}
\label{sec:opportunities_for_parallelization}
The above comparisons of computational efficiency are based on a single-threaded \cpu{} computing environment. Computational gain from parallelization is highly architecture dependent for large-scale problems and is difficult to draw any general conclusions \citepsupplement{dongarra2016hpc_connjugate_gradient, duff2017direct-sparse}. Nonetheless, here we provide a qualitative discussion of to what extent each algorithm can benefit from parallelization. We complement the discussion with illustrative quantitative results, obtained by using all the four cores of Intel i7 \cpu{} on 2015 iMac.

Before any discussion of computational gains from parallelization, we emphasize the following point regarding the two alternative Gibbs samplers for Bayesian sparse regression: as the problem size grows, memory constraints make the CG-accelerated sampler \emph{the only option} in a typical computing environment. We can run the CG-accelerated sampler as long as we have enough memory to store the (sparse) design matrix $\X$. On the other hand, as we have discussed, the direct Gibbs sampler generally cannot avoid having to store the near-dense precision matrix $\bPhi$. In case of the sparse design matrix $\X$ of size $72{,}489 \times 22{,}175$ in Section~\ref{sec:application}, for example, storing $\X$ in the compressed sparse row format only requires 0.719\gb{} of memory while storing the dense $\bPhi$ requires 3.67\gb{} of memory. In fact, further memory burden is incurred by temporary allocation of extra memory necessary for the Cholesky factorization of $\bPhi$. Profiling the memory usage by the \mkl{} library reveal that temporary memory allocation of 3.75\gb{}, requiring at least 8.14 ($= 0.719 + 3.67 + 3.75$) \gb{} of memory for running the direct Gibbs sampler.

Dense linear algebra operations benefit most from parallelization when using a typical modern hardware, which performs best at accessing data stored contiguously in memory \citepsupplement{dongarra2016hpc_connjugate_gradient}. In fact, computation time for the Cholesky factorization goes down from 78.0 to 22.7 seconds when using the four cores with the \mkl{} library. It is also worth noting that the speed-up is significantly smaller when using the \openblas{} library; the time goes down from 79.0 only to 33.1 seconds, illustrating the importance of hardware-specific optimizations in parallel computing.

Parallelizing sparse linear algebra operations are more complex due to their bottleneck being irregular data access \citepsupplement{duff2017direct-sparse}. Speed-ups thus tend to be smaller, though there are growing efforts in building hardwares optimized for sparse operations \citepsupplement{dongarra2016hpc_connjugate_gradient}. The \mkl{} \spgemm{} delivers a modest speed-up when using the four cores, cutting the time from 64.6 to 46.8 seconds. The sparse matrix-vector multiplications $\bv \to \X \bv$ and $\bw \to \X^\transpose \bw$ benefit slightly more from parallelization. The \mkl{} library implementations reduces the time from $4.69 \times 10^{-2}$ to $3.13 \times 10^{-2}$ seconds and from $5.49 \times 10^{-2}$ to $3.14 \times 10^{-2}$ seconds respectively.


\section{In-depth look at mechanism of CG-acceleration in Section~\ref{sec:application}}
\label{supp:cg_acceleration_mechanism_details}

\subsection{Accuracy of CG sampler}
\label{sec:cg_sampler_accuracy_real_data}
	We assess accuracy of the CG sampler in the real data setting of Section~\ref{sec:application} by applying the same diagnostics as in Section~\ref{sec:cg_sampler_accuracy}.
	We confirm again that the distribution of the CG sampler output is exact for practical purposes.

	When comparing the outputs of the direct and CG-accelerated Gibbs samplers, the estimated posterior means of regression coefficients closely aligns with each other (Figure~\ref{fig:cg_sampler_accuracy_diagnostic_plots}(a)).
	As the more formal statistical test of difference in the two estimators, Figure~\ref{fig:cg_sampler_accuracy_diagnostic_plots}(b) shows that the distribution of the standardized differences closely follows the ``null'' distribution.
	Figure~\ref{fig:cg_sampler_accuracy_diagnostic_plots}(a) and \ref{fig:cg_sampler_accuracy_diagnostic_plots}(b) are based on the posterior samples for the propensity score model, but we obtained essentially the same result under the treatment effect model.
	We additionally performed the same diagnostic on the estimators of the posterior second moment of $\bbeta$ and obtained similar results.

  \begin{figure}
  	\begin{minipage}{.4\linewidth}
  		\subfigure[
  			Comparison of the regression \newline coefficient estimates (posterior means) between those based on the direct and CG-accelerated Gibbs samplers.
  		]{
  			\includegraphics[width=\linewidth]{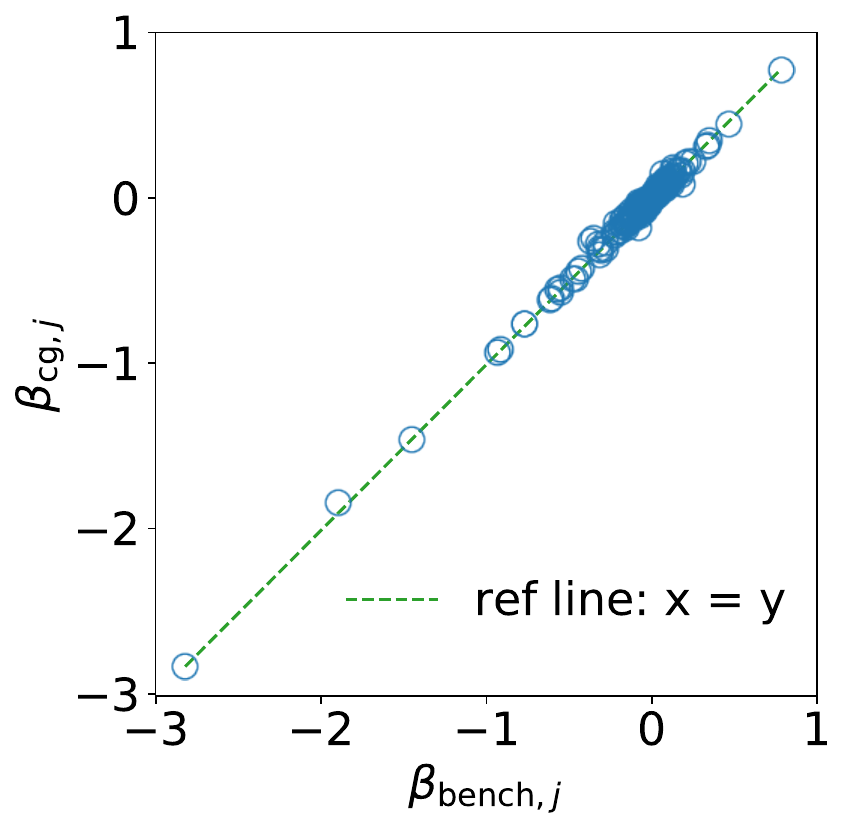}
  		}
  	\end{minipage}
  	\hspace{2ex}
  	\begin{minipage}{.55\linewidth}
  		\subfigure[
  			Normalized histogram for the standardized differences $(\hat{\beta}_{\text{bench}, j} - \hat{\beta}_{\text{cg}, j}) / \hat{\sigma}_i$, where $\hat{\sigma}_i^2$ is an estimate of the Monte Carlo variance of $\hat{\beta}_{\text{bench}, j} - \hat{\beta}_{\text{cg}, j}$. Gaussianity of the histogram indicates no statistically significant difference between the two MCMC outputs.
  		]{
  			\includegraphics[width=.95\linewidth]{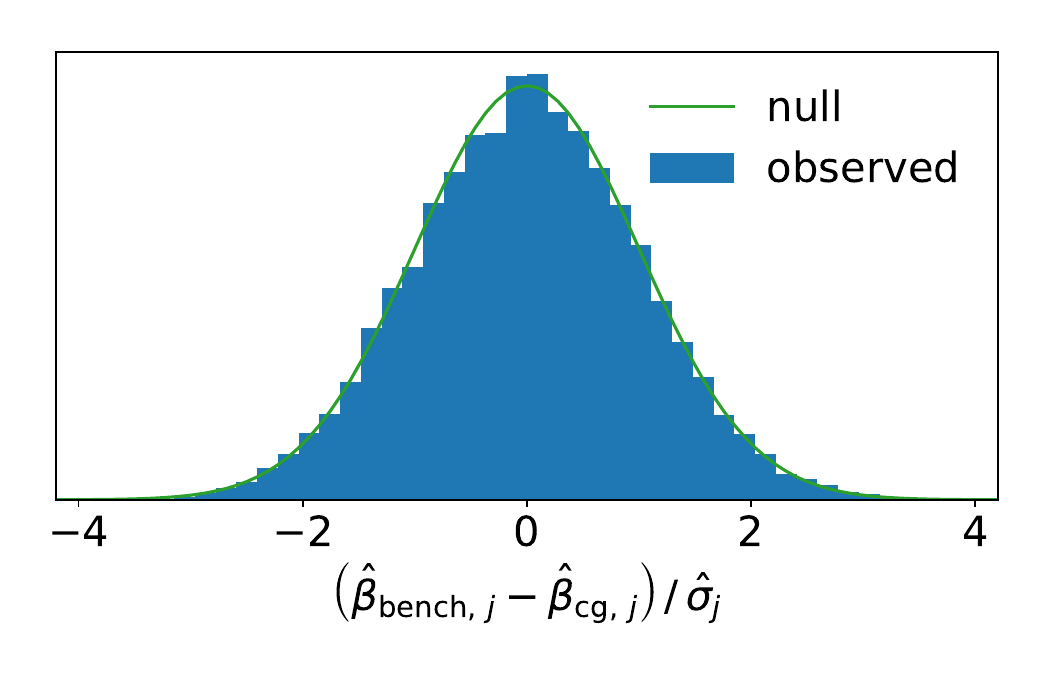}
  		}
  	\end{minipage}
  	\caption{Diagnostic plots to check for statistically significant differences between the two MCMC outputs.}
  	\label{fig:cg_sampler_accuracy_diagnostic_plots}
	\end{figure}

\subsection{Number of CG iterations at each Gibbs step}
\label{sec:n_cg_iter_within_gibbs}
	To study computational cost of the CG sampler at each Gibbs iteration, we first focus on a post-burn-in update of $\bbeta$ for the propensity score model.
	As in Section~\ref{sec:cg_sampler_demonstration}, we compare the CG iterates $\bbeta_k$ against the exact solution $\bbeta_{\textrm{direct}}$ of the linear system \eqref{eq:linear_system_for_cg_sampler} found via the Cholesky-based direct method. Figure~\ref{fig:treatment_model_cg_error_plot} plots the distances between $\bbeta_k$ and $\bbeta_{\textrm{direct}}$ as a function of $k$, the number of CG iterations or equivalently of matrix vector multiplications $\bv \to \bPhi \bv$.

	The solid blue line tracks the root mean squared residual $p^{-1/2} \| \bm{\tilde{r}}_k \|_2$ as introduced in Section~\ref{supp:termination_criteria_for_cg_sampler}. The dotted vertical line indicates when the magnitude of the prior-preconditioned residual $\bm{\tilde{r}}_k$ falls below the termination criterion of \eqref{eq:mean_sq_resid_for_cg_termination}. The termination occurs at $k = 133$ and the CG sampler consequently spends less than $10\%$ of the computational time relative to the direct sampler.
	Note also how the solid blue line upper-bound the dashed one which tracks the following error metric computed as a proxy for \eqref{eq:cg_sampler_error_metric}:
	\begin{equation}
	\label{eq:mean_secmom_normalized_error}
	\left\{
		p^{-1} \textstyle{\sum}_j \, \hat{\xi}_j^{-2} (\bbeta_k - \bbeta_{\textrm{direct}})_j^2
	\right\}^{1/2}
		\ \text{ with } \ \hat{\xi}_j^{2} \approx \mathbb{E} [ \beta_j^2 \given \y, \X].
	\end{equation}
	This empirical result provides a further support to our theoretical analysis in Section~\ref{supp:termination_criteria_for_cg_sampler} and hence to the use of $p^{-1/2} \| \bm{\tilde{r}}_k \|_2$ in the termination criterion.
	As confirmed in Section~\ref{sec:cg_sampler_accuracy}, the CG error at termination is so small that it does not affect the stationary distribution of the Gibbs sampler in any statistically significant way.

	Figure~\ref{fig:treatment_model_cg_error_plot} also makes it clear that the advantage of the prior preconditioner, as demonstrated in the simulated data examples of Section~\ref{sec:cg_sampler_demonstration}, continues to hold in this real data example.  The observed convergence behaviors under the two preconditioners are again well explained by the eigenvalue distributions of the respective preconditioned matrices (Figure~\ref{fig:eigval_distributions_for_treatment_model});
	prior preconditioning leads to a tighter cluster of the eigenvalues and avoids introducing small eigenvalues.

	\begin{figure}
	\centering
		\includegraphics[width=.6\linewidth]{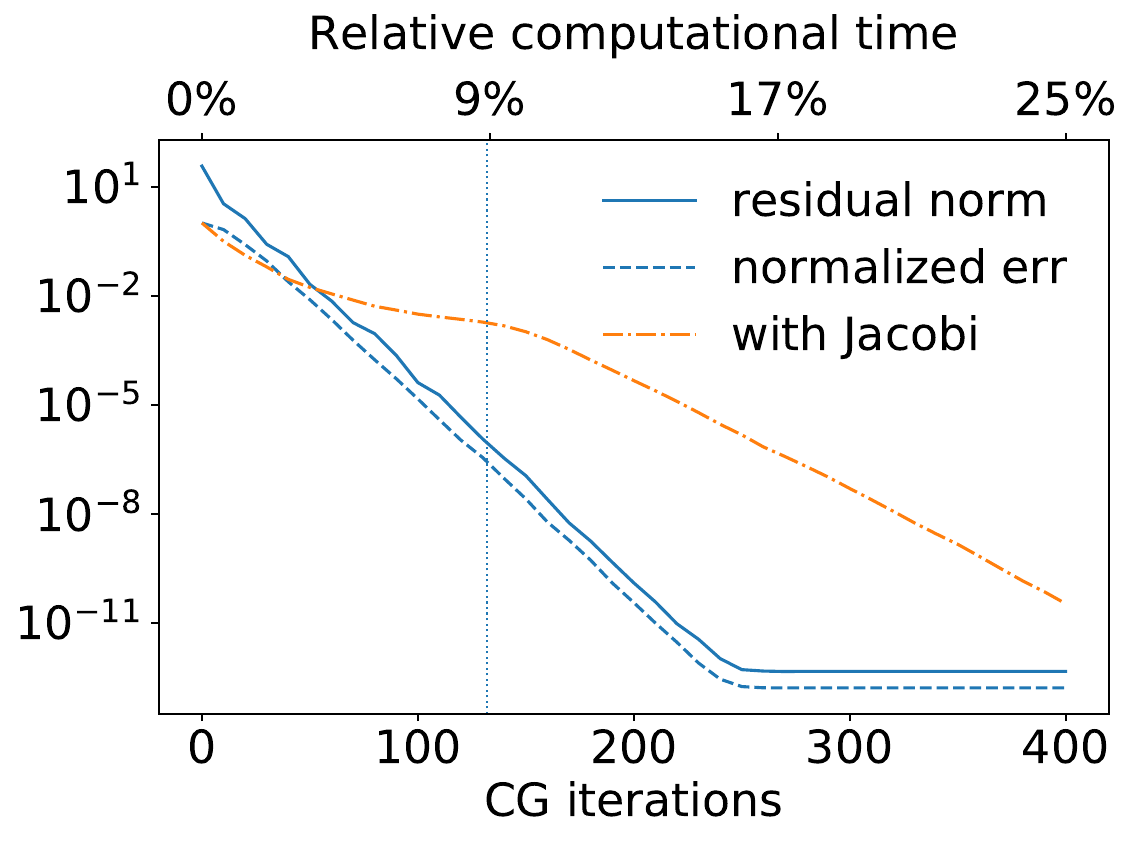}
		\vspace*{-.5\baselineskip}
		\caption{Plot of the CG errors during a conditional update of $\bbeta$ within the propensity score model posterior computation. The errors are plotted as a function of the number of CG iterations (bottom axis) and of the computational time relative to the direct linear algebra (top axis). The solid line shows the root mean squared residual $p^{-1/2} \| \bm{\tilde{r}}_k \|_2$ used in the stopping criteria \eqref{eq:mean_sq_resid_for_cg_termination}. The dotted vertical line indicates when $p^{-1/2} \| \bm{\tilde{r}}_k \|_2$ reaches the threshold $10^{-6}$. The 2nd-moment normalized errors \eqref{eq:mean_secmom_normalized_error} are shown as the blue dashed line for the prior preconditioner and as the orange dash-dot line for the Jacobi preconditioner.}
		\label{fig:treatment_model_cg_error_plot}
	\end{figure}


	\begin{figure}
		\includegraphics[width=\linewidth]{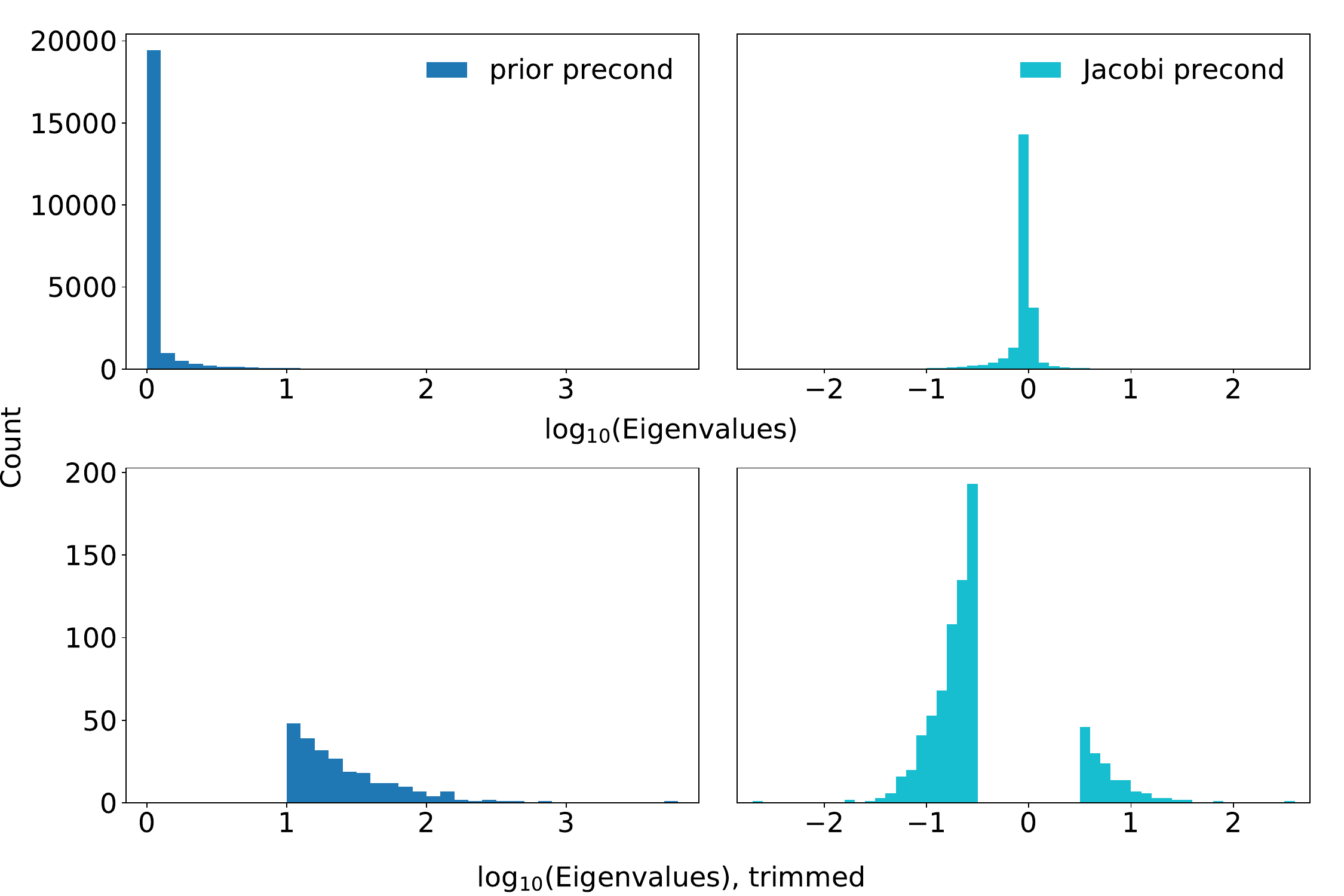}
		\caption{Histograms of the eigenvalues of the preconditioned matrices as in Figure~\ref{fig:eigval_distributions_for_data_simulated_with_10_correlated_signal}. The only differences here are that 1) the preconditioned matrices are based on a posterior sample from the propensity score model \eqref{eq:propensity_score_model} and 2) the trimmed version for the Jacobi preconditioner removes the eigenvalues in the range $[-0.5, 0.5]$ as this choice better demonstrates the tail behavior here.}
		\label{fig:eigval_distributions_for_treatment_model}
	\end{figure}

	As mentioned in Section~\ref{sec:theory_of_prior_preconditioning}, the number of required CG iterations is in practice random since the linear system \eqref{eq:linear_system_for_cg_sampler} depends on the quantities $\bomega$, $\tau$, $\blshrink$, and $\bb$, which vary from one Gibbs iteration to another.
	Even with substantial variation in these random quantities, however, we consistently observe fast decay in all $\tau \lshrink_{(k)}$ and rapid CG convergence at every iteration.
	To illustrate, Figure~\ref{fig:n_cg_iter_till_convergence} shows the number of required CG iterations at each iteration of the Gibbs sampler at stationarity; 95\% of the numbers falls in the range $[107, 120]$ in this propensity score model example.

	\begin{figure}
	\centering
	\includegraphics[width=0.7\linewidth]{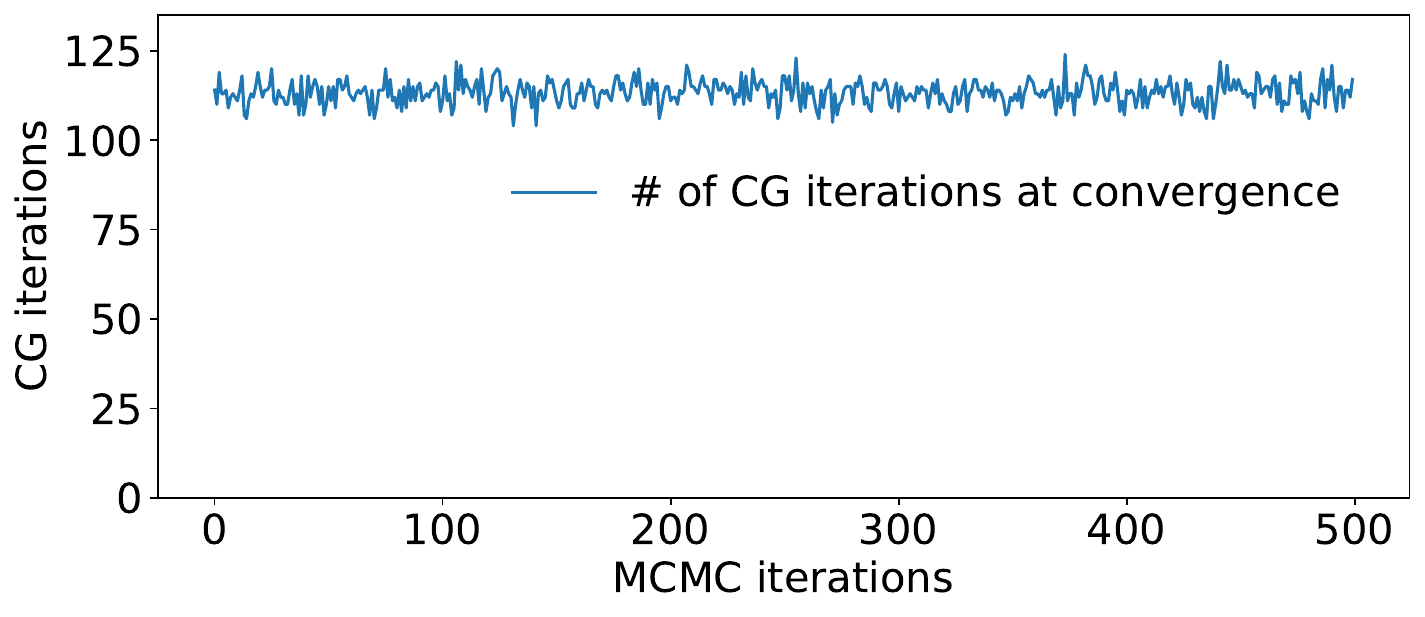}
	\caption{%
		Number of CG iterations required for the conditional updates of $\bbeta$ via the CG sampler.
		The first 10\% (500 iterations) of the post-burnin Gibbs sampler iterations is shown.
	}
	\label{fig:n_cg_iter_till_convergence}
	\end{figure}

    We have so far studied the mechanism of CG-acceleration by using the propensity score model example.
    For the treatment effect model, the Jacobi preconditioner turns out to be comparable the prior preconditioner because the precision matrix $\bPhi = \X^\transpose \bOmega \X + \tau^{-2} \bLshrink^{-2}$ is so strongly dominated by the diagonal terms.
    This phenomenon is explained by the following two facts.
    First, the posterior is extremely sparse (Section~\ref{sec:speed_up_from_cg_acceleration_on_ohdsi_data}) and correspondingly the majority of $\tau \lshrink$'s are also extremely small.
    This makes the diagonal prior shrinkage term $\tau^{-2} \bLshrink^{-2}$ far more significant than the non-diagonal likelihood term $\X^\transpose \bOmega \X$.
    Secondly, the entries of $\bOmega = \textrm{diag}(\bomega)$, which can be interpreted as the weight on or informativeness of the observation $\y$ (Equation~\ref{eq:pg_augmented_likelihood}), are typically quite small due to the low incidence rate (Section~\ref{sec:data_description}).
    This further reduces the contribution of the non-diagonal term $\X^\transpose \bOmega \X$ to the precision $\bPhi$.

	\begin{figure}
		\includegraphics[width=\linewidth]{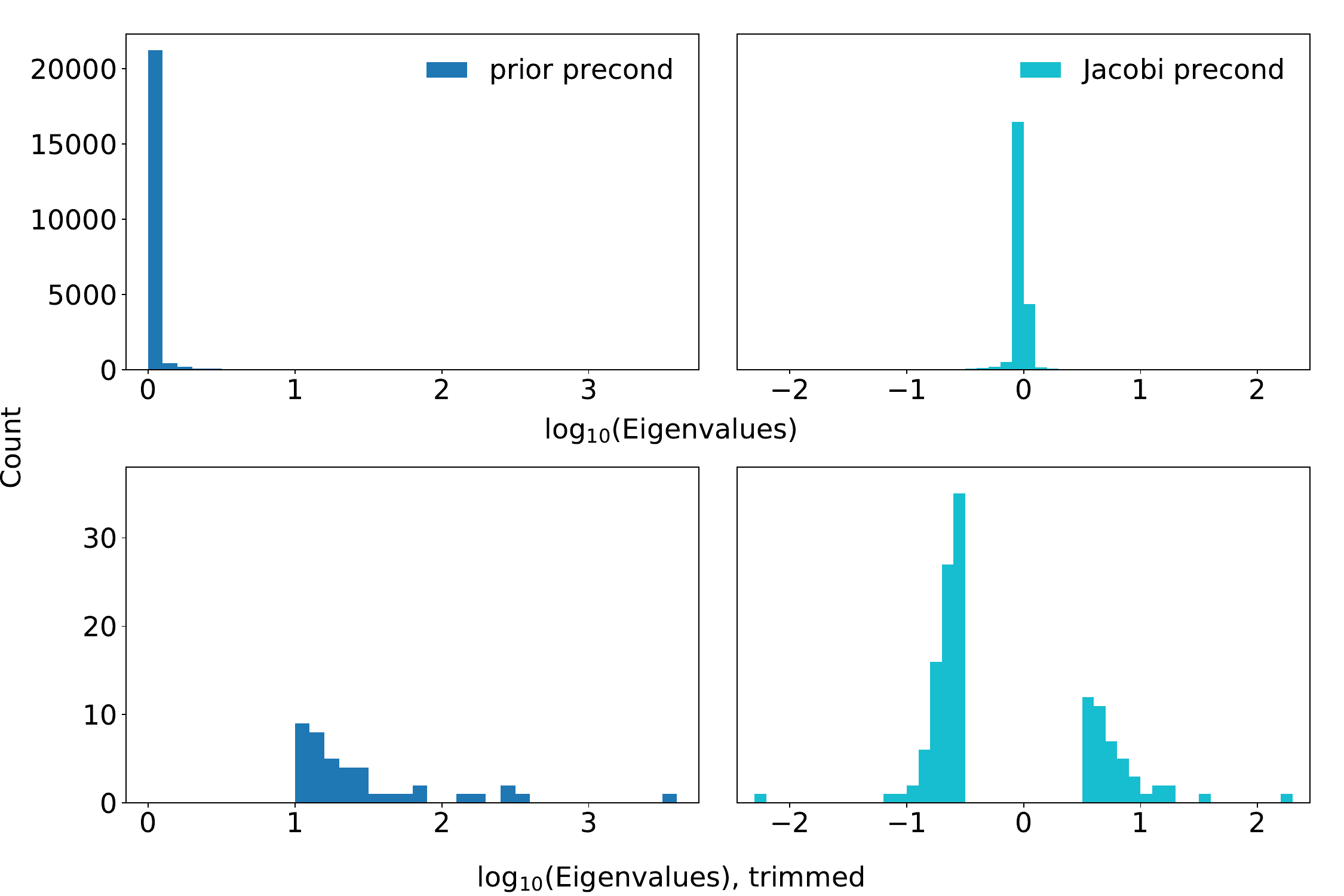}
		\caption{Histograms of the eigenvalues of the preconditioned matrices as in Figure~\ref{fig:eigval_distributions_for_data_simulated_with_10_correlated_signal}. The only differences here are that 1) the preconditioned matrices are based on a posterior sample from the treatment effect model \eqref{eq:treatment_effect_model} and 2) the trimmed version for the Jacobi preconditioner removes the eigenvalues in the range $[-0.5, 0.5]$ as this choice better demonstrates the tail behavior here.}
		\label{fig:eigval_distributions_for_outcome_model}
	\end{figure}
	
\FloatBarrier

\subsection{Choice of shrinkage prior and its effect on CG performance}
\label{supp:shrinkage_prior_choice_and_cg_performance_on_ohdsi_data}

Here we repeat the experiment of Section~\ref{supp:shrinkage_prior_choice_and_cg_performance} using the real data and assess how the CG-accelerated Gibbs sampler performance might depend on different shrinkage priors.

As in Section~\ref{supp:shrinkage_prior_choice_and_cg_performance}, we vary the bridge prior's exponent $\alpha$ among 1/2, 1/4, and 1/8.
In Section~\ref{sec:application}, we use $\alpha = 1/2$ for the propensity score model and, due to the low incident rate in the outcome of interest (Section~\ref{sec:prior_and_computation}), $\alpha = 1/4$ in the treatment effect model.
The treatment effect model additionally deploys a weakly informative Gamma prior on $\phi = \tau^{-\alpha}$ so that $\log_{10}(\tau)$ has the prior mean of $-1.5$ and standard deviation of $0.5$ (Section~\ref{sec:prior_and_computation}).
For fair comparison among different $\alpha$'s, we place the corresponding Gamma priors so that $\log_{10}(\tau)$ always have the prior mean of $-1.5$ and standard deviation of $0.5$.
As in Section~\ref{sec:application}, we run the Gibbs sampler for 5,500 iterations for the propensity score model and 11,000 iterations for the treatment effect model.

The results are summarized in Table~\ref{tab:computing_time_across_varying_shrinkage_on_ohdsi_data} and \ref{tab:sparsity_across_varying_shrinkage_on_ohdsi_data}.
Ostensibly, the CG-accelerated sampler performance here shows a trend opposite of the one observed in Section~\ref{supp:shrinkage_prior_choice_and_cg_performance}: 
the average number of required CG iterations, and hence overall computing time, increases as $\alpha$ becomes smaller (Table~\ref{tab:computing_time_across_varying_shrinkage_on_ohdsi_data}).
However, the posterior sparsity structure summarized in Table~\ref{tab:sparsity_across_varying_shrinkage_on_ohdsi_data} suggests that our main takeaway still holds here: the sparser the regression coefficient posterior, the faster the prior-preconditioned CG's convergence rate. 

The difference between the examples here and those of Section~\ref{supp:shrinkage_prior_choice_and_cg_performance} is how the posterior sparsity structure changes as $\alpha$ is varied.
In the synthetic data setting, there is a clear dichotomy between true signals $\beta_j = 1$ and non-signals $\beta_j = 0$. 
In the real-world setting, there is no such dichotomy.
In fact, as the sample size goes to infinity, all the coefficients will likely achieve non-zero values. 
It is just that, with the finite sample, some coefficients are detectable enough to be estimated away from zero while others are shrunk towards zero. 
This observation explains why we see less sparse posteriors under smaller $\alpha$'s here (Table~\ref{tab:sparsity_across_varying_shrinkage_on_ohdsi_data}) while we see an opposite behavior in Section~\ref{supp:shrinkage_prior_choice_and_cg_performance}. 
One caveat with the previous statement is that, for the propensity score model, our simple metrics do not adequately quantify the posterior sparsity structure;
for smaller $\alpha$'s, we see more coefficients with posterior mean magnitudes larger than 0.1, but fewer coefficients with posterior mean magnitudes larger than 0.01.
More generally, the relation between posterior sparsity structure and choice of $\alpha$ (or choice of other shrinkage priors) likely depends on the specific characteristics of a given data set; 
other real-world data sets may well yield posteriors that behave more like the synthetic data sets of Section~\ref{supp:shrinkage_prior_choice_and_cg_performance}.

\begin{table}
\centering
	\begin{tabular}{ccccc}
	\hline
	& \multicolumn{4}{c}{Computing time (hours)}\\
	& \multicolumn{3}{c}{CG-accelerated Gibbs} & \hspace{1em}\multirow{2}{*}[-1ex]{\shortstack{Direct Gibbs}} \hspace{1em}\\
	\cmidrule(lr){2-4}
	& $\alpha = 1/2$ & $\alpha = 1/4$ & $\alpha = 1/8$ \\
	\hline 
	Propensity score model & $11.4$ & $13.2$ & $15.6$ & $106$ \\
	Treatment effect model & $9.41$ & $11.3$ & $12.9$ & $212$ \\
	\hline
	\end{tabular}\vspace*{.1\baselineskip}
	\caption{%
		Computing times in hours for 5,500 iterations (propensity score model) and 11,000 iterations (treatment effect model) of the CG-accelerated Gibbs sampler on the blood anti-coagulant study data as the bridge prior's exponent $\alpha$ is varied.
	}
	\label{tab:computing_time_across_varying_shrinkage_on_ohdsi_data}
\end{table}

\newcommand{\lbracket}{[\hspace{.05em}}
\newcommand{\rbracket}{\hspace{.06em}]}
\begin{table}
	\centering
	\begin{tabular}{ccccc}
	\hline
	& \multicolumn{3}{c}{
		\multirow{3}{*}[-1ex]{\shortstack{Number of coefficients with\\posterior mean magnitude\\ $> 0.1$, \big[$> 0.1 / \sqrt{10}$\,\big], and ($> 0.01$)}}
	} \\ \\ \\
	\cmidrule(lr){2-4} 
	& $\alpha = 1/2$ & $\alpha = 1/4$ & $\alpha = 1/8$ \\
	\hline
	Propensity score model 
	& 82 \lbracket407\rbracket \,(3,716)  & 106 \lbracket416\rbracket \,(2,170) & 136 \lbracket432\rbracket \,(1,575) \\
	Treatment effect model
	& 0 \lbracket1\rbracket \,(18) & 2 \lbracket5\rbracket \,(85) & 3 \lbracket15\rbracket \,(156) \\
	\hline
	\end{tabular}\vspace*{.1\baselineskip}
	\caption{%
		Numbers of coefficients, among the 22,174 under shrinkage, whose estimated posterior means have magnitude above 0.1, \big[above $ 0.1 / \sqrt{10} \approx 0.0316$\,\big], and (above 0.01) as the bridge prior's exponent $\alpha$ is varied.
	}
	\label{tab:sparsity_across_varying_shrinkage_on_ohdsi_data}
\end{table}

\FloatBarrier

\section{Issue with approximating $\btilPhi$ by thresholding $\tau \lshrink_j$'s}
\label{supp:block_preconditioner}
	In Section~\ref{sec:prior_preconditioning}, we observed that the $(i, j)$-th entry of $\tau^{2} \bLshrink \X^\transpose \bOmega \X \bLshrink$ is small whenever $\tau \lshrink_i \approx 0$ or $\tau \lshrink_j \approx 0$. Given this observation, one may wonder if we can obtain a convenient low-rank approximation of the prior-preconditioned matrix $\btilPhi = \tau^{2} \bLshrink \X^\transpose \bOmega \X \bLshrink + \I_p$ by zeroing out $\tau \lshrink_j$'s at some threshold. This is \textit{not} the case in general as we will show now. Intuitively, the problem is as follows: while the ordered local scale parameter $\lshrink_{(k)}$ decays reasonably quickly as $k$ increases, there is no clear ``gap'' where $\lshrink_{(k + 1)} \ll \lshrink_{(k)}$.
	 For example, the histogram of a posterior draw of $\tau \lshrink_j$'s  in Figure~\ref{fig:prior_sd_hist_with_varying_number_of_signals} shows clearly that there is no such gap.

	We can assess the quality and utility of the thresholding approximation by using it as a preconditioner for CG in solving the system $\btilPhi \btilbeta = \bbTilde$. In other words, we consider using the thresholding approximation on top of prior-preconditioning. Let $(\tau^2 \bLshrink \X^\transpose \bOmega \X \bLshrink)_{(k)}$ denote a matrix obtained by thresholding the entries of $\tau^2 \bLshrink \X^\transpose \bOmega \X \bLshrink$ to zero except for the $k \times k$ block corresponding to the $k$ largest local scale parameters $\lshrink_{(1)}, \ldots, \lshrink_{(k)}$. Then the thresholding approximation
	\begin{equation}
	\label{eq:threshold_approximation_of_precond_matrix}
	\btilPhi_{(k)} =  (\tau^2 \bLshrink \X^\transpose \bOmega \X \bLshrink)_{(k)} + \I_p
	\end{equation}
	is the identity perturbed by the $k \times k$ block along the diagonal. As such, using it as a preconditioner requires the one time cost of computing and factorizing the $k \times k$ block, which requires $O(k^2 n + k^3)$ arithmetic operations. The quality of the approximation should improve as $k$ increases but so does the computational cost. In particular, at some point the $O(k^2 n + k^3)$ cost of preparing the thresholding preconditioner overwhelms the $O(np)$ cost of each CG iteration and becomes the computational bottleneck. When an adequate approximation requires such a large $k$, therefore, there is no benefit of using the thresholding approximation.

	We use the example of Section~\ref{sec:cg_convergence_on_simulated_data} to study the effects of preconditioning the system $\btilPhi \btilbeta = \bbTilde$ with the thresholding approximation $\btilPhi_{(k)}$. As before, the CG sampler is applied to the distribution \eqref{eq:beta_conditional} arising from the simulated data with 10 signals out of $p = 10{,}000$ predictors. Figure~\ref{fig:cg_convergence_plot_for_simulated_data_with_different_block_sizes} shows the results for $k = 1{,}000, 2{,}000, \text{and } 4{,}000$. The convergence rates of these preconditioned CG iterations are compared to that of the CG iterations applied to the prior-preconditioned system without any additional preconditioning. If a preconditioner is a good approximation of $\btilPhi$, the preconditioned CG should yield convergence in a very small number of iterations --- for example, a perfect approximation would induce the convergence after one iteration. It is clear from Figure~\ref{fig:cg_convergence_plot_for_simulated_data_with_different_block_sizes}, however, that the thresholding approximation does more harm than good in terms of the CG convergence rate, especially when $k$ is taken small relative to the size of $\btilPhi$. We can therefore conclude that the thresholding strategy yields a poor approximation except when $k$ starts to become almost as large as $p$.
	\begin{figure}
		\centering
		\includegraphics[width=.6\linewidth]{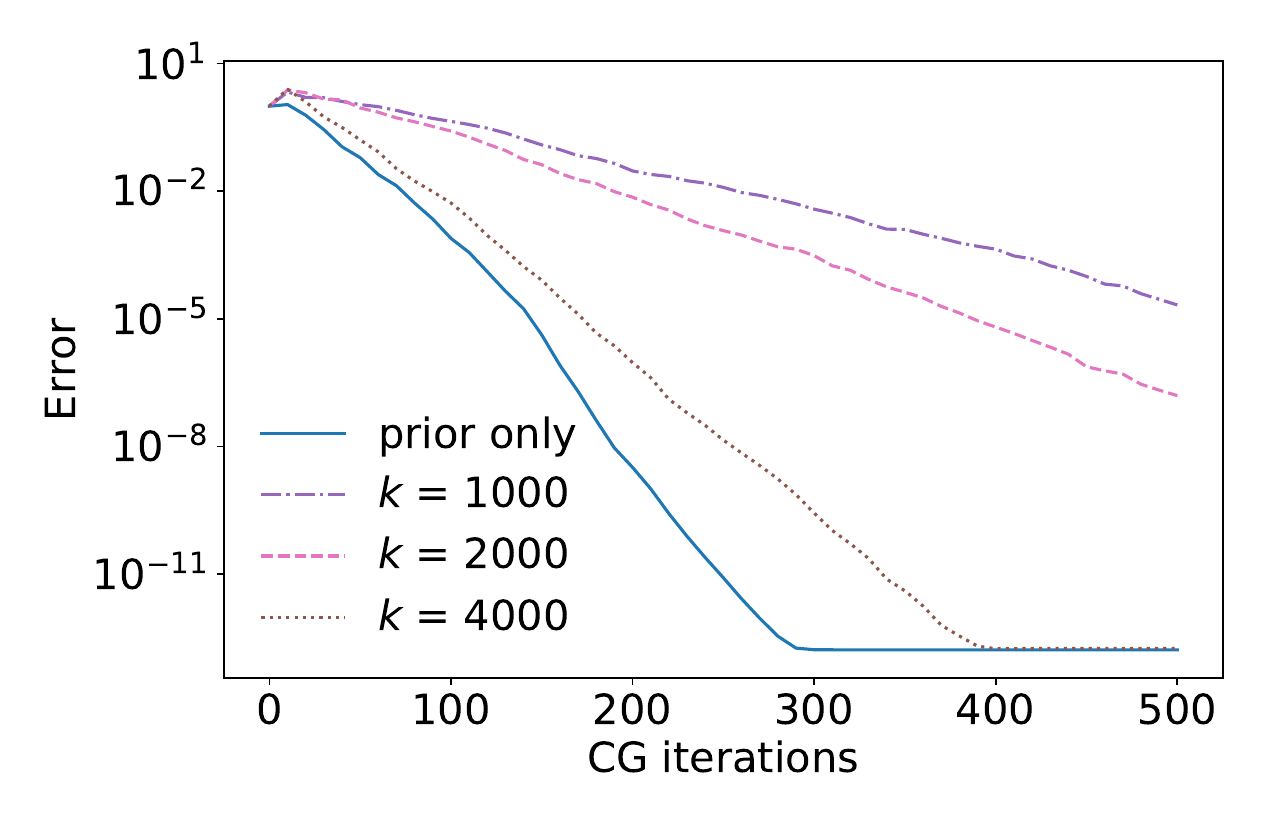}
		\vspace{-.5\baselineskip}
		\caption{Plot of the CG approximation errors (with the same error metric as used in Figure~\ref{fig:cg_convergence_plot_for_simulated_data}) as a function of the number of CG iterations. The CG sampler here is applied to the conditional distribution of $\bbeta$ arising from the simulated data example with 10 signals as described in Section~\ref{sec:cg_sampler_demonstration}. Each line corresponds to a different threshold level $k$ for the approximation \eqref{eq:threshold_approximation_of_precond_matrix}. The blue line labeled `prior only` corresponds to CG applied to the prior preconditioned system without any further preconditioner.}
		\label{fig:cg_convergence_plot_for_simulated_data_with_different_block_sizes}
	\end{figure}

	We repeated the same experiments on the thresholding approximation $\btilPhi_{(k)}$ using the other posterior distributions discussed in the manuscript. Across the experiments, no computational gain could be achieved by using the thresholding approximation as a preconditioner. More precisely, to yield a good enough approximation, the value of $k$ had to be so large that preparing the preconditioner itself became a computational bottleneck.

\FloatBarrier

\section{Proofs for Appendix~\ref{sec:theories_behind_cg}}
\label{supp:cg_theory_proofs}

\begin{proof}[Proposition~\ref{prop:cg_error_as_polynomial_sum}]
	As discussed in Section~\ref{sec:iterative_method_for_solving_linear_system}, the $k$-th CG iterate belongs to an affine space $\bbeta_0 + \mathcal{K}(\bPhi, \br_0, k)$ with $\br_0 = \bPhi \bbeta_0 - \bb =  \bPhi (\bbeta_0 - \bbeta)$. An element $\bbeta'$ of the affine space can be written as
	\begin{equation*}
		\bbeta'
		= \bbeta_0 + \sum_{\ell = 1}^k c_\ell \bPhi^{\ell - 1} \br_0
		= \bbeta + (\bbeta_0 - \bbeta) + \sum_{\ell = 1}^k c_\ell \bPhi^{\ell} (\bbeta_0 - \bbeta)
	\end{equation*}
	for some $c_1, \ldots, c_k$. In other words, for any $\bbeta'$ in the affine space we can write
	\begin{equation}
		\label{eq:polynomial_sum_representation}
		\bbeta' - \bbeta
		= Q_k(\bPhi) \left( \bbeta_0 - \bbeta \right)
	\end{equation}
	for some $Q_k \in \polyspace_k$. Together with the optimality property \eqref{eq:cg_optimality_property} of the CG iterates, the representation \eqref{eq:polynomial_sum_representation} implies
	\begin{equation*}
		\| \bbeta_k - \bbeta \|_{\bPhi}
		= \| R_k(\bPhi) (\bbeta_0 - \bbeta) \|_{\bPhi}
		= \min_{\bbeta' \in \bbeta_0 + \mathcal{K}(\bPhi, \br_0, k)} \| \bbeta' - \bbeta \|_{\bPhi}. \qedhere 
	\end{equation*}
\end{proof}

\begin{proof}[Theorem~\ref{thm:cg_bound_over_discrete_polynomial_values}]
	Let $\bv_1, \ldots, \bv_p$ be the unit eigenvectors of $\bPhi$ associated with the eigenvalues $\eigen_1, \ldots \eigen_p$. By the spectral theorem for normal matrices (Section 2.5 of \cite{horn2012matrix-analysis}), the unit eigenvectors form an orthonormal basis. In particular, we can write $\bbeta_0 - \bbeta = \sum_{j = 1}^p c_j \bv_j$ for $c_j = \left\langle \bbeta_0 - \bbeta, \bv_j \right\rangle$. Observe that, for any $Q_k \in \polyspace_k$,
	\begin{equation*}
		\begin{aligned}
			\bPhi^{1/2} Q_k(\bPhi) \left( \bbeta_0 - \bbeta \right)
			= \textstyle{\sum}_{j = 1}^p \, c_j \bPhi^{1/2} Q_k(\bPhi) \bv_j
			= \textstyle{\sum}_{j = 1}^p \, c_j \eigen_j^{1/2} Q_k(\eigen_j) \bv_j.
		\end{aligned}
	\end{equation*}
	Together with \eqref{eq:cg_error_bound_by_polynomial_sum}, the above equality yields
	\begin{equation}
		\label{eq:bound_over_eigenvalues}
		\left\| \bbeta_k - \bbeta \right\|_{\bPhi}^2
		\leq \sum_{j = 1}^p c_j^2 \eigen_j Q_k(\eigen_j)^2
		\leq \max_{j = 1, \ldots, p} Q_k(\eigen_j)^2 \Bigg( \sum_{j = 1}^p c_j^2 \eigen_j \Bigg).
	\end{equation}
	The result \eqref{eq:cg_bound_over_discrete_polynomial_values} follows from the above inequality since $\left\| \bbeta_0 - \bbeta \right\|_{\bPhi}^2 = \sum_{j = 1}^p c_j^2 \eigen_j$.

	The sharpness of the upper bound is proven by explicitly constructing an initial vector that achieves the bound; see \cite{greenbaum1979cg_splittings}.
\end{proof}

\begin{proof}[Theorem~\ref{thm:polynomial_bound_over_interval}]
	We can construct a shifted and scaled Chebyshev polynomial $P_k \in \polyspace_k$ such that $|P_k(\eigen)|$ is bounded by the right-hand side of \eqref{eq:polynomial_bound_over_interval} over the interval $[ \eigen_{\textrm{min}},  \eigen_{\textrm{max}} ]$. See \cite{saad2011methods-for-eigenvalue} for further details.
\end{proof}

\begin{proof}[Theorem~\ref{thm:cg_convergence_delay_by_large_eigenvalues}]
	Let $Q_{k}^*$ denote the minimizer of $\max_{j = r + 1, \ldots, p} |Q_{k}(\eigen_j)|$ over $\polyspace_k$ and define
	\begin{equation*}
		Q_{r + k}'(\eigen)
		= Q_k^* (\eigen) \prod_{i = 1}^{r} \left(\frac{\eigen_i - \eigen}{\eigen_i} \right).
	\end{equation*}
	Then $Q_{r + k}'$ satisfies $Q_{r + k}'(\eigen_j) = 0$ for $j = 1, \ldots, r$ and $Q_{r + k}'(\eigen_j) \leq Q_k^* (\eigen_j)$ for $j = r + 1, \ldots, p$. In particular, $Q_{r + k}'$ satisfies
	\begin{equation*}
		\max_{j = 1, \ldots, p} |Q_{r + k}'(\eigen_j)|
		\leq \max_{j = r + 1, \ldots, p} |Q^*_{k}(\eigen_j)|
		= \min_{Q_{k} \in \polyspace_k} \,  \max_{j = r + 1, \ldots, p} |Q_{k}(\eigen_j)|,
	\end{equation*}
	where the equality holds as we chose $Q_{k}^*$ to be the minimizer. From the above inequality, it follows that
	\begin{equation}
		\label{eq:bound_by_lower_order_polynomial}
		\min_{Q_{r + k} \, \in \, \polyspace_{r + k}} \,  \max_{j = 1, \ldots, p} |Q_{r + k}(\eigen_j)|
		\leq \max_{j = 1, \ldots, p} |Q_{r + k}'(\eigen_j)|
		\leq \min_{Q_{k} \in \polyspace_k} \,  \max_{j = r + 1, \ldots, p} |Q_{k}(\eigen_j)|.
	\end{equation}
	Since the maximum taken over an interval $[\eigen_p, \eigen_{r + 1}]$ is larger than that over its subset, \eqref{eq:bound_by_lower_order_polynomial} implies
	\begin{equation}
		\label{eq:bound_by_lower_order_polynomial_over_interval}
		\min_{Q_{r + k} \, \in \, \polyspace_{r + k}} \,  \max_{j = 1, \ldots, p} |Q_{r + k}(\eigen_j)|
		\leq \min_{Q_{k} \in \polyspace_k} \,  \max_{\eigen \in [\eigen_p, \eigen_{r + 1}]} |Q_{k}(\eigen)|.
	\end{equation}
	The desired inequality \eqref{eq:cg_convergence_removal_of_largest_eigenvalues} follows by bounding the right-hand side of \eqref{eq:bound_by_lower_order_polynomial_over_interval} via Theorem~\ref{thm:polynomial_bound_over_interval}.
\end{proof}

\begin{proof}[Theorem~\ref{thm:cg_convergence_extreme_eigenvalue_removal}]
	We first prove the bound \eqref{eq:cg_convergence_extreme_eigenvalue_removal} for $C_{\nRitzIter, \nlargest, \nsmallest}$ as defined in \eqref{eq_for_proof:const_definition} below. Let $R_\nRitzIter$ be the optimal CG polynomial at the $\nRitzIter$-th iteration as defined in \ref{eq:cg_error_as_polynomial_sum}. Since $R_\nRitzIter(0) = 1$, the polynomial $R_\nRitzIter(\eigen)$ can be expressed in terms of its roots $\ritz_1^{(\nRitzIter)}, \ldots, \ritz_\nRitzIter^{(\nRitzIter)}$ as
	\begin{equation*}
		R_\nRitzIter(\eigen)
		= \prod_{i = 1}^{\nRitzIter} \left(
		\frac{\ritz_i^{(\nRitzIter)} - \eigen}{\ritz_i^{(\nRitzIter)}}
		\right).
	\end{equation*}
	Now consider $Q_\nRitzIter \in \polyspace_\nRitzIter$ such that
	\begin{equation*}
		Q_\nRitzIter(\eigen)
		= \prod_{i = 1}^\nlargest \left( \frac{\eigen_{i} - \eigen}{\eigen_{i}} \right)
		\prod_{i = 0}^{\nsmallest - 1} \left( \frac{\eigen_{p - i} - \eigen}{\eigen_{p - i}} \right)
		\prod_{i = \nlargest + 1}^{\nRitzIter - \nsmallest} \left( \frac{\ritz_i^{(\nRitzIter)} - \eigen}{\ritz_i^{(\nRitzIter)}} \right)
	\end{equation*}
	and define
	\begin{equation}
		\label{eq_for_proof:const_definition}
		C_{\nRitzIter, \nlargest, \nsmallest}
		= \max_{j = \nlargest + 1, \, \ldots, \, p - \nsmallest} Q_\nRitzIter(\eigen_j) / R_\nRitzIter(\eigen_j).
	\end{equation}

	As in the proof of Theorem~\ref{thm:cg_bound_over_discrete_polynomial_values}, write $\bbeta_0 - \bbeta = \sum_{j = 1}^p c_j \bv_j$ so that $\bbeta_\nRitzIter - \bbeta = \sum_{j = 1}^p c_j R_\nRitzIter(\eigen_j) \bv_j$. Let $\bbeta_\nRitzIter'$ be a modification of $\bbeta_\nRitzIter$ such that
	\begin{equation}
		\label{eq_for_proof:modified_kth_iterate}
		\bbeta_\nRitzIter' - \bbeta
		= \sum_{j = \nlargest + 1}^{p - \nsmallest} c_j R_\nRitzIter(\eigen_j) \bv_j.
	\end{equation}
	Choose $R_\nSubseqIter' \in \polyspace_\nSubseqIter$ to be a minimizer of $\| Q_\nSubseqIter(\bPhi) (\bbeta_{\nRitzIter}' - \bbeta) \|_{\bPhi}$ over $Q_\nSubseqIter \in \polyspace_\nSubseqIter$ or, equivalently, the optimal CG polynomial \eqref{eq:cg_error_as_polynomial_sum} at the $\nSubseqIter$-th iteration  when the initial vector is taken to be $\bbeta_\nRitzIter'$. Since $\bbeta_{\nRitzIter + \nSubseqIter}$ minimizes the $\bPhi$-norm over the $(k + \ell)$-th polynomial of degree by Proposition~\ref{prop:cg_error_as_polynomial_sum},
	we have
	\begin{equation}
		\label{eq_for_proof:bound_by_restarted_cg_polynomial}
		\| \bbeta_{\nRitzIter + \nSubseqIter} - \bbeta \|_{\bPhi}
		\leq \| R_{\nSubseqIter}'(\bPhi) Q_\nRitzIter(\bPhi) (\bbeta_{0} - \bbeta) \|_{\bPhi}.
	\end{equation}

	We will now show that the right-hand side of \eqref{eq_for_proof:bound_by_restarted_cg_polynomial} is bounded above by that of \eqref{eq:cg_convergence_extreme_eigenvalue_removal}. By our definition of $\bbeta_\nRitzIter'$ and $C_{\nRitzIter, \nlargest, \nsmallest}$ in \eqref{eq_for_proof:modified_kth_iterate} and  \eqref{eq_for_proof:const_definition}, we have
	\begin{equation*}
		\begin{aligned}
			\| R_{\nSubseqIter}'(\bPhi) Q_\nRitzIter(\bPhi) (\bbeta_{0} - \bbeta) \|_{\bPhi}^2
			&= \sum_{j = \nlargest + 1}^{p - \nsmallest} \eigen_j c_j^2 R_\nSubseqIter'(\eigen_j)^2 Q_\nRitzIter(\eigen_j)^2 \\
			&\leq C_{\nRitzIter, \nlargest, \nsmallest}^2 \sum_{j = \nlargest + 1}^{p - \nsmallest} \eigen_j c_j^2 R_\nSubseqIter'(\eigen_j)^2 R_\nRitzIter(\eigen_j)^2 \\
			&= C_{\nRitzIter, \nlargest, \nsmallest}^2 \| R_{\nSubseqIter}'(\bPhi) (\bbeta_{\nRitzIter}' - \bbeta) \|_{\bPhi}^2.
		\end{aligned}
	\end{equation*}
	Noting that $\| \bbeta_\nRitzIter' - \bbeta \|_{\bPhi} \leq \| \bbeta_\nRitzIter - \bbeta \|_{\bPhi}$, we obtain
	\begin{equation}
		\label{eq_for_proof:bounding_restarted_cg_polynomial}
		\| R_{\nSubseqIter}'(\bPhi) Q_\nRitzIter(\bPhi) (\bbeta_{0} - \bbeta) \|_{\bPhi}
		\leq C_{\nRitzIter, \nlargest, \nsmallest} \frac{
			\| R_{\nSubseqIter}'(\bPhi) (\bbeta_{\nRitzIter}' - \bbeta) \|_{\bPhi}
		}{
			\| \bbeta_\nRitzIter' - \bbeta \|_{\bPhi}
		}
		\| \bbeta_{\nRitzIter} - \bbeta \|_{\bPhi}.
	\end{equation}

	By our choice of $R_{\nSubseqIter}'$, the vector $R_{\nSubseqIter}'(\bPhi) (\bbeta_{\nRitzIter}' - \bbeta)$ coincides with the residual of the $\nSubseqIter$-th CG iterate starting from the initial vector $\bbeta_{\nRitzIter}'$. Therefore, by Lemma~\ref{lem:cg_bound_over_subset_eigenvalues} combined with Theorem~\ref{thm:polynomial_bound_over_interval}, we have
	\begin{equation}
		\label{eq_for_proof:bound_over_interavl_of_restarted_cg}
		\frac{
			\| R_{\nSubseqIter}'(\bPhi) (\bbeta_{\nRitzIter}' - \bbeta) \|_{\bPhi}
		}{
			\| \bbeta_\nRitzIter' - \bbeta \|_{\bPhi}
		}
		\leq 2 \left( \frac{
			\sqrt{\eigen_{\nlargest + 1} / \eigen_{p - \nsmallest}} - 1
		}{
			\sqrt{\eigen_{\nlargest + 1} / \eigen_{p - \nsmallest}} + 1
		} \right)^\nSubseqIter.
	\end{equation}
	The claimed inequality \eqref{eq:cg_convergence_extreme_eigenvalue_removal} now follows from \eqref{eq_for_proof:bound_by_restarted_cg_polynomial}, \eqref{eq_for_proof:bounding_restarted_cg_polynomial}, and \eqref{eq_for_proof:bound_over_interavl_of_restarted_cg}.

	Now we turn to proving the claimed property of $C_{\nRitzIter, \nlargest, \nsmallest}$. Note that
	\begin{equation*}
		\frac{Q_\nRitzIter(\eigen)}{R_\nRitzIter(\eigen)}
		=
		\prod_{i = 1}^{\nlargest} \frac{\ritz^{(\nRitzIter)}_i}{\eigen_i}
		\left( \frac{\eigen_i - \eigen}{\ritz^{(\nRitzIter)}_i - \eigen} \right) \,
		\prod_{i = 0}^{\nsmallest - 1} \frac{\ritz^{(\nRitzIter)}_{\nRitzIter-i}}{\eigen_{p-i}}
		\left( \frac{\eigen_{p-i} - \eigen}{\ritz^{(\nRitzIter)}_{\nRitzIter-i} - \eigen} \right).
	\end{equation*}
	The rest of the proof focuses on the case $\nlargest = 1$ and $\nsmallest = 0$ for clarity's sake; the proof remains essentially identical in the general case except for extra notational clutters. Under this case, we have
	\begin{equation*}
		\max_{j = 2, \, \ldots, \, p} \left| \frac{Q_\nRitzIter(\eigen_j)}{R_\nRitzIter(\eigen_j)} \right|
		=
		\frac{\ritz^{(\nRitzIter)}_{1}}{\eigen_{1}}
		\max_{j = 2, \, \ldots, \, p}
		\left| \frac{\eigen_{1} - \eigen_j}{\ritz^{(\nRitzIter)}_{1} - \eigen_j} \right|
		=
		\frac{\ritz^{(\nRitzIter)}_{1}}{\eigen_{1}}
		\max_{j = 2, \, \ldots, \, p}
		\left| 1 - \frac{\ritz^{(\nRitzIter)}_{1} - \eigen_1}{\eigen_j - \eigen_{1}} \right|^{-1}.
	\end{equation*}
	Provided $|\ritz^{(\nRitzIter)}_{1} - \eigen_1| = \min_{j = 1, \ldots, p} |\ritz^{(\nRitzIter)}_{1} - \eigen_j|$, the above inequality simplifies to
	\begin{equation*}
		\max_{j = 2, \, \ldots, \, p} \left| \frac{Q_\nRitzIter(\eigen_j)}{R_\nRitzIter(\eigen_j)} \right|
		=
		\frac{\ritz^{(\nRitzIter)}_{1}}{\eigen_{1}}
		\left| 1 - \frac{\ritz^{(\nRitzIter)}_{1} - \eigen_1}{\eigen_2 - \eigen_{1}} \right|^{-1}.
	\end{equation*}
	So we have $C_{\nRitzIter, \nlargest, \nsmallest} \to 1$ as $\ritz^{(\nRitzIter)}_{1} \to \eigen_1$ in the case $\nlargest = 1$ and  $\nsmallest = 0$.
\end{proof}

\begin{lemma}
	\label{lem:cg_bound_over_subset_eigenvalues}
	Let $(\eigen_j, \bv_j)$ for $j = 1, \ldots, p$ denote the eigenvalue and eigenvector pairs of $\bPhi$. If the initial vector $\bbeta_0$ satisfies $\left\langle \bbeta_0 - \bbeta, \bv_{j} \right\rangle = 0$ for $j \in J \subset \{1, \ldots, p\}$, then the bound \eqref{eq:cg_bound_over_discrete_polynomial_values} holds over the set $\{1, \ldots, p\} \setminus J$ i.e.\
	\begin{equation*}
		\frac{
			\| \bbeta_k - \bbeta \|_{\bPhi}
		}{
			\| \bbeta_0 - \bbeta \|_{\bPhi}
		}
		\leq \min_{Q_k \, \in \, \polyspace_k} \max_{j \not \in J} | Q_k(\eigen_j) |.
	\end{equation*}
\end{lemma}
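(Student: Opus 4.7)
The plan is to mirror the proof of Theorem~\ref{thm:cg_bound_over_discrete_polynomial_values} with a single but crucial modification: when we expand the initial error in the eigenbasis of $\bPhi$, the hypothesis kills the coefficients indexed by $J$, so every resulting sum can be restricted to $j \notin J$.

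Concretely, I would first invoke Proposition~\ref{prop:cg_error_as_polynomial_sum} to get the freedom to bound $\| \bbeta_k - \bbeta \|_{\bPhi}$ by $\| Q_k(\bPhi)(\bbeta_0 - \bbeta) \|_{\bPhi}$ for any $Q_k \in \polyspace_k$. Next, using the spectral theorem, I would write $\bbeta_0 - \bbeta = \sum_{j=1}^p c_j \bv_j$ with $c_j = \langle \bbeta_0 - \bbeta, \bv_j \rangle$, and observe that the hypothesis gives $c_j = 0$ for $j \in J$. Applying $\bPhi^{1/2} Q_k(\bPhi)$ term-by-term and using orthonormality of $\{\bv_j\}$ yields
\begin{equation*}
\| Q_k(\bPhi)(\bbeta_0 - \bbeta) \|_{\bPhi}^2
= \sum_{j \not\in J} c_j^2 \eigen_j Q_k(\eigen_j)^2
\leq \max_{j \not\in J} Q_k(\eigen_j)^2 \sum_{j \not\in J} c_j^2 \eigen_j.
\end{equation*}

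The final step is to observe that $\sum_{j \notin J} c_j^2 \eigen_j = \sum_{j=1}^p c_j^2 \eigen_j = \| \bbeta_0 - \bbeta \|_{\bPhi}^2$, again because $c_j = 0$ for $j \in J$, so the factor cancels against the denominator on the left. Taking the minimum over $Q_k \in \polyspace_k$ then gives the claimed bound. There is no real obstacle here; the only subtlety to flag is that the argument genuinely requires the vanishing of $c_j$ for $j \in J$ \emph{both} in restricting the max and in identifying the leftover sum with $\| \bbeta_0 - \bbeta \|_{\bPhi}^2$, so that the two truncations align and the bound is not loosened.
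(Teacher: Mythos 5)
Your argument is correct and is essentially identical to the paper's proof, which likewise reruns the proof of Theorem~\ref{thm:cg_bound_over_discrete_polynomial_values} with the eigenbasis coefficients $c_j$ vanishing for $j \in J$, so that both the maximum and the residual sum restrict to $j \notin J$. The subtlety you flag --- that the vanishing of $c_j$ is needed both to restrict the max and to identify $\sum_{j \notin J} c_j^2 \eigen_j$ with $\left\| \bbeta_0 - \bbeta \right\|_{\bPhi}^2$ --- is exactly the point the paper's one-line proof relies on.
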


\begin{proof}
	The proof is identical to that of Theorem~\ref{thm:cg_bound_over_discrete_polynomial_values} except that we can replace the bound \eqref{eq:bound_over_eigenvalues} with
	\begin{equation*}
		\left\| \bbeta_k - \bbeta \right\|_{\bPhi}^2
		\leq \sum_{j \not \in J} c_j^2 \eigen_j Q_k(\eigen_j)^2
		\leq \max_{j \not \in J} Q_k(\eigen_j)^2 \Bigg( \sum_{j \not \in J} c_j^2 \eigen_j \Bigg)
	\end{equation*}
	since $c_j = 0$ for $j \in J$ by assumption.
\end{proof}

\FloatBarrier

\bibliographystylesupplement{agsm}
\bibliographysupplement{cg_accelerated_gibbs}{}

\end{document}